\newcommand{\Rl}{\mathbb{R}}
\newcommand{\Cplx}{\mathbb{C}}
\newcommand{\Itgr}{\mathbb{Z}}
\newcommand{\Bc}{\mathcal{B}}
\newcommand{\Lc}{\mathcal{L}}
\newcommand{\tr}{\mathrm{Tr}}
\newcommand{\volSd}{\omega_d}
\newcommand{\mbR}{\mathbb{R}}
\newcommand{\mbC}{\mathbb{C}}
\newcommand{\mbZ}{\mathbb{Z}}
\newcommand{\mbN}{\mathbb{N}}
\newcommand{\abs}[1]{\left|#1\right|}
\newcommand{\set}[1]{\{#1\}}
\newcommand{\brs}[1]{\left(#1\right)}
\newcommand{\Scal}[1]{\left\langle #1\right\rangle}               
\newcommand{\norm}[1]{\left \Vert #1 \right\Vert}
\newcommand{\clB}{\mathcal{B}}
\newcommand{\clK}{\mathcal{K}}
\newcommand{\clL}{\mathcal{L}}
\newcommand{\clH}{\mathcal{H}}
\newcommand{\hilb}{\mathcal{H}}
\def\XXint#1#2#3{{\setbox0=\hbox{$#1{#2#3}{\int}$ }
\vcenter{\hbox{$#2#3$ }}\kern-.6\wd0}}
\numberwithin{equation}{section}
\newtheorem{theorem}{Theorem}[section]
\newtheorem{proposition}[theorem]{Proposition}
\newtheorem{definition}[theorem]{Definition}
\newtheorem{lemma}[theorem]{Lemma}
\theoremstyle{remark}
\newtheorem{example}[theorem]{Example}
\newtheorem{remark}[theorem]{Remark}
\title{An application of singular traces to crystals and percolation}
\author{N.\,Azamov, E.\,Hekkelman, E.\,McDonald, F.\,Sukochev, D.\,Zanin}
\date{\today}
\begin{document}
\begin{abstract}
    For a certain class of discrete metric spaces, we provide a formula for the density of states. This formula involves Dixmier traces and is proven using recent advances in operator theory. Various examples are given of metric spaces for which this formula holds, including crystals, quasicrystals and the infinite cluster resulting from super-critical bond percolation on $\mathbb{Z}^d$.
\end{abstract}

\maketitle{}

\tableofcontents

\section{Introduction}
The density of states (DOS) of a Schr\"odinger operator is one of the most useful objects of study in solid-state physics. It is meant to provide an insight in the thermal and electrical conductive properties of a material, and has since been widely applied to study a large variety of physical phenomenon.
Its practical use lies mainly in analysing (extremely) large collections of particles exhibiting a large degree of symmetry, and usually such symmetry is exploited to calculate the DOS in the first place.

On the mathematical side the DOS has also attracted substantial attention, since it is an abstract object that effectively captures some spectral data of an operator. See for example~\cite{AizenmanWarzel2015,BerezinShubin1991, BourgainKlein2013, CarmonaLacroix1990, DoiIwatsuka2001, PasturFigotin1991, Shiryaev1996, Simon1982}. A notorious aspect of the DOS is that the existence of the DOS cannot be guaranteed in general situations. This paper seeks to provide a formula based on Dixmier traces for the DOS on a broad class of discrete metric spaces that includes crystals, quasicrystals and infinite clusters in $\mathbb{Z}^d$ resulting from percolation. The advantage of this formula is that it is guaranteed to be well-defined even if the DOS itself has not been shown to exist. In fact, it can be interpreted as a generalisation of the definition of the DOS.

The DOS associated with an operator $H$ is defined as a Borel measure $\nu_H$ on the spectral (energy) axis. The support of this measure is the essential spectrum of the operator. Specifically, if we have a self-adjoint, not necessarily bounded operator $H$ operator on $L_2(\mathbb{R}^d)$, it is said to have a density of states if for all $g\in C_0(\mathbb{R})$ (continuous functions vanishing at infinity) the following limit exists:
\begin{equation} \label{D: DOS cont}
    \lim_{R\to\infty} \frac{1}{|B(0,R)|}\tr(g(H)M_{\chi_{B(0,R)}}),
\end{equation}
where $B(0,R) = \set{ x\in \mbR^d \colon \abs{x} \leq R},$ $\abs{\;\cdot\;}$ is the Lebesgue measure, $\tr$
is the usual operator trace, and for a function $f \colon \mbR^d \to \mbC$
we denote by $M_f$ the operator of multiplication by $f$ which acts on $L_2(\mbR^d).$ This defines a positive linear functional on $C_0(\mathbb{R})$ which therefore extends uniquely to a positive Borel measure $\nu_H$ on $\mathbb{R}$~\cite[Section~C7]{Simon1982} such that \[
    \lim_{R\to\infty} \frac{1}{|B(0,R)|}\tr(g(H)M_{\chi_{B(0,R)}}) = \int_\mathbb{R} g d\nu_H.\]

Analogously, for a countably infinite discrete metric space $(X,d_X)$ such that all balls contain finitely many points, we say for self-adjoint, not necessarily bounded operators $H$ on $\ell_2(X)$ (square-integrable functions $L_2(X, \mu)$ where we take $\mu$ the counting measure) that the density of states with respect to a fixed base-point $x_0\in X$ exists if for all $g \in C_0(\mathbb{R})$ the limit \begin{equation}\label{D: DOS disc}
    \lim_{R\to\infty} \frac{1}{|B(x_0,R)|}\tr(g(H)M_{\chi_{B(x_0,R)}})
\end{equation}
exists, where now $B(x_0,R) = \set{ x\in X \colon d_X(x, x_0) \leq R}$ and $\abs{\;\cdot\;}$ denotes the counting measure on $X$. Again this defines a positive linear functional on $C_0(\mathbb{R})$ which admits a unique extension to a Borel measure $\nu_H$ on $\Rl$~\cite[Section~C7]{Simon1982} such that \[    \lim_{R\to\infty} \frac{1}{|B(x_0,R)|}\tr(g(H)M_{\chi_{B(x_0,R)}}) = \int_\mathbb{R} g d\nu_H,\]
which we will take as the definition of the density of states of $H$. 

The present paper is a companion to~\cite{AMSZ} which covers a continuous analogue of the main result in this paper. We will first give a brief summary of this continuous case. In $\mathbb{R}^d$ it is known that the DOS has a certain stability property --- it ignores localised perturbations of the operator $H$. On the other hand, in non-commutative geometry there is a prominent tool called the Dixmier trace which has a similar property. A Dixmier trace $\tr_\omega$ is a tracial functional on the ideal $\clL_{1,\infty}$ of compact operators $A$ on a Hilbert space $\hilb$
whose singular values $\mu_n(A)$ obey $\mu_n(A) \leq C/n$ for some $C = C(A) >0$ (we refer to Section~\ref{S: Preliminaries} for a more thorough explanation). This trace is singular in the sense that it vanishes on trace class operators, and as such it shares the property of being insensitive to `small' perturbations with the density of states. Inspired by this, some of the authors found the following connection between the two~\cite{AMSZ}: 

\begin{theorem} \label{T: old case}
Let $d$ be an integer  $\geq 2$ and $M_{\langle x \rangle}$ the operator of multiplication by $(1+\abs{x}^2)^{1/2}$ on $L_2(\mbR^d).$ 
 Let $H = -\Delta + V$ be a Schr\"odinger operator on $L_2(\mbR^d)$, where~$V$ is a bounded real-valued measurable potential.
For any $g \in C_c(\mbR)$ the operator
$
g(H) M_{\langle x \rangle}^{-d}  
$
belongs to the weak trace-class ideal $\clL_{1,\infty}(L_2(\mbR^d))$. If we assume that the density of states of~$H$ (defined according to \eqref{D: DOS cont}) exists and is a Borel measure $\nu_H$ on $\mathbb{R}$, then for every Dixmier trace
$\tr_\omega$ on~$\clL_{1,\infty}$ it holds that
\begin{equation} \label{F: phi(..)=mu(g) cont case}
\tr_\omega\brs{g(H) M_{\langle x \rangle}^{-d} } = \frac{\volSd}{d}  \int_{\mbR} g\,d\nu_H,
\end{equation}
where 
$
\omega_d = {2\pi^{d/2}} / {\Gamma\left(\frac{d}{2}\right)}
$
is the $(d-1)$-volume of the unit sphere $S^{d-1}$. 
\end{theorem}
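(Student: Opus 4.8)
The plan is to first establish membership in $\clL_{1,\infty}$ and then to identify the Dixmier trace with the spatial average that defines the density of states. Since the two sides of \eqref{F: phi(..)=mu(g) cont case} are linear in $g$ and $g_\pm=\max(\pm g,0)$ again lie in $C_c(\mbR)$, I would first reduce to the case $g\ge 0$. Because $g\in C_c(\mbR)$ and $V$ is bounded, $g(H)$ is bounded and, by ellipticity of $H=-\Delta+V$, behaves in momentum like a compactly supported multiplier; a Cwikel-type estimate then gives $g(H)M_{\langle x\rangle}^{-d}\in\clL_{1,\infty}$, the borderline class reflecting the logarithmic divergence of $\int_{\mbR^d}\langle x\rangle^{-d}\,dx$. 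Moreover the commutator $[g(H),M_{\langle x\rangle}^{-d/2}]$ gains a factor $\langle x\rangle^{-1}$ of decay, so that $g(H)M_{\langle x\rangle}^{-d}-M_{\langle x\rangle}^{-d/2}g(H)M_{\langle x\rangle}^{-d/2}$ is trace class; since every Dixmier trace annihilates $\clL_1$, I may replace the operator by the symmetrised, positive operator $T:=M_{\langle x\rangle}^{-d/2}g(H)M_{\langle x\rangle}^{-d/2}$.

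For the core computation I would use that, for positive $T\in\clL_{1,\infty}$ and whenever the limit exists (measurability), all Dixmier traces coincide with
\[
\tr_\omega(T)=\lim_{s\downarrow 0}\frac{1}{\log(1/s)}\,\tr\!\big(T\,\mathbf{1}_{(s,\infty)}(T)\big)
=\lim_{s\downarrow 0}\, s\cdot\#\{n:\mu_n(T)>s\}.
\]
To evaluate this I would decompose space into dyadic annuli $A_k=\{2^k\le |x|<2^{k+1}\}$ and write $\tr\!\big(g(H)M_{\langle x\rangle}^{-d}M_{\chi_{A_k}}\big)=\int_{A_k}\langle x\rangle^{-d}\rho_H(x)\,dx$, where $\rho_H(x)=[g(H)](x,x)$ is the continuous local trace density of $g(H)$. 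Since $\int_{A_k}|x|^{-d}\,dx=\omega_d\log 2$ and the averages of $\rho_H$ over large balls converge to $G:=\int_{\mbR}g\,d\nu_H$ by the assumed existence of the density of states, each annulus contributes $\approx G\,\omega_d\log 2$, giving $\tr\!\big(T\,M_{\chi_{B(0,R)}}\big)\sim G\,\omega_d\log R$ as $R\to\infty$. Matching this spatial logarithmic divergence to the spectral one through the correspondence $s\sim R^{-d}$ (the size of the weight $\langle x\rangle^{-d}$ at radius $R$), so that $\log(1/s)\sim d\log R$, yields $\tr_\omega(T)=\tfrac1d\,G\,\omega_d=\tfrac{\omega_d}{d}\int_{\mbR}g\,d\nu_H$. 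The free case $V=0$, where $\rho_{-\Delta}\equiv(2\pi)^{-d}\int_{\mbR^d} g(|\xi|^2)\,d\xi$ and $N(s)\sim C/s$ can be read off directly from the symbol $g(|\xi|^2)\langle x\rangle^{-d}$, fixes the constant $\omega_d/d$ and serves as a sanity check.

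The hard part will be the passage from the spatial cut-off $M_{\chi_{B(0,R)}}$ to the spectral cut-off $\mathbf{1}_{(s,\infty)}(T)$, that is, justifying the heuristic correspondence $s\sim R^{-d}$ at the level of traces. Equivalently, one must prove that $T$ is Dixmier measurable and that its singular-value asymptotics are governed by the weight $\langle x\rangle^{-d}$ in the expected semiclassical fashion, uniformly enough that the $V$-dependent, position-dependent density $\rho_H$ may be replaced by its spatial average $G$. Because $V$ destroys translation invariance, this cannot rest on an exact symbol calculus; instead I expect to invoke a Tauberian, ``modulated operator'' result of the type underlying Connes' trace theorem and its noncompact refinements, expressing
\[
\tr_\omega\big(M_{\langle x\rangle}^{-d/2}A\,M_{\langle x\rangle}^{-d/2}\big)
=\frac{\omega_d}{d}\,\lim_{R\to\infty}\frac{1}{\abs{B(0,R)}}\int_{B(0,R)}[A](x,x)\,dx
\]
for a suitable class of bounded $A$. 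Feeding $A=g(H)$ into such a formula, the right-hand limit is exactly $\int_{\mbR}g\,d\nu_H$, completing the argument. Establishing (or citing) this trace formula, and verifying that $g(H)$ for $g\in C_c(\mbR)$ and $H=-\Delta+V$ lies within its scope, is where the real work sits.
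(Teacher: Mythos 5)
This paper does not actually prove Theorem~\ref{T: old case}: it is imported verbatim from the companion paper \cite{AMSZ}, so there is no internal proof to compare against. Measured against what such a proof must contain, your outline has the right architecture --- Cwikel-type estimates for membership in $\clL_{1,\infty}$, reduction to a positive symmetrised operator modulo $\clL_1$, the observation that the existence of the DOS forces $\tr\brs{g(H)M_{\langle x\rangle}^{-d}M_{\chi_{B(0,R)}}}\sim \volSd\, G\log R$ (your annulus/Abel-summation step is the continuous analogue of Lemma~\ref{toeplitz_corollary} here), and a modulated-operator theorem to pass from spatial cut-offs to eigenvalue sums. These are indeed the ingredients used in \cite{AMSZ}.

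The genuine gap is that the step you yourself flag as ``where the real work sits'' is not an auxiliary lemma one can simply invoke: the displayed identity $\tr_\omega\brs{M_{\langle x\rangle}^{-d/2}AM_{\langle x\rangle}^{-d/2}}=\frac{\volSd}{d}\lim_R\frac{1}{\abs{B(0,R)}}\int_{B(0,R)}[A](x,x)\,dx$ is, for $A=g(H)$, precisely the statement of the theorem, so as written the argument is circular at its core. Moreover, the mechanism that makes the analogous step easy in the discrete setting of this paper --- diagonalising $M_w$ in the basis $\{\delta_v\}$ and applying Theorem~\ref{T: Modulated} to relate $\sum_{k\le n}\lambda(k,TM_w)$ to $\sum_{k\le n}\langle e_k,Te_k\rangle\mu(k,M_w)$ --- does not transfer: $M_{\langle x\rangle}^{-d}$ is a non-compact multiplication operator on $L_2(\mbR^d)$ with purely continuous spectrum, so it has no eigenbasis and does not itself lie in $\clL_{1,\infty}$; one must instead establish modulation with respect to a genuinely compact comparison operator (in \cite{AMSZ}, roughly $(1-\Delta)^{-d/2}M_{\langle x\rangle}^{-d}$) and work in a basis adapted to it. Two further points need repair: the formula $\tr_\omega(T)=\lim_{s\downarrow 0}s\,\#\{n:\mu_n(T)>s\}$ presupposes the Connes/Dixmier measurability of $T$, which is part of what must be proved rather than a tool available at the outset; and the commutator bound making $g(H)M_{\langle x\rangle}^{-d}-M_{\langle x\rangle}^{-d/2}g(H)M_{\langle x\rangle}^{-d/2}$ trace class requires an argument (e.g.\ a Helffer--Sj\"ostrand expansion) valid for merely bounded measurable $V$, not just a formal gain of one power of decay.
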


This formula, apart from providing a connection between two objects, one from non-commutative geometry and another from solid-state physics,
opens ways to explore the density of states using tools of operator theory.

In a way, we will return to the origin of the density of states in this paper and give the analogue of Theorem~\ref{T: old case} for certain discrete metric spaces that include crystals, the original source of the concept of a DOS (see~\cite{HoddesonBaym1987} and a very early use of the DOS in this context in 1929 by F. Bloch~\cite{Bloch1929}). The result, Theorem~\ref{T: Main} below, has novel aspects compared to~\cite{AMSZ}: it is applicable to a vast variety of discrete metric spaces, only constrained by a certain growth condition on its metric balls. There is no constraint on the dimension of this metric space, in contrast to the above continuous version of the theorem.

The main theorem of this paper is then the following, the proof of which is specific to the discrete case and is based on recent advances in operator theory and the notion of $V$-modulated operators hatched in the theory of singular traces, see~\cite{KaltonLord2013} or~\cite[Section~7.3]{LSZVol1}. Once more we refer to Section~\ref{S: Preliminaries} for the definition of Dixmier traces and the space $\ell_{1,\infty}(X)$.
\begin{theorem}\label{T: Main}
    Let $(X,d_X)$ be a countably infinite discrete metric space such that every ball contains at most finitely many points, and let $x_0 \in X$. Then the image of the map $d_X(\cdot, x_0): X \rightarrow \mathbb{R}_{\geq 0}$ is a collection of isolated points which can be ordered in an increasing way, denote this by $\{r_k\}_{k \in \mathbb{N}} \subseteq \mathbb{R}$. Suppose that \begin{equation}\tag{C}\label{E: Condition}
        \lim_{k\rightarrow \infty}\frac{|B(x_0, r_{k+1})|}{|B(x_0, r_k)|} = 1.
    \end{equation} Recall that $\abs{\;\cdot\;}$ indicates the counting measure on $X$. Then for any positive, radially strictly decreasing function $w \in \ell_{1,\infty}(X)$ we have that for every Dixmier trace $\tr_\omega$
    \begin{equation} \label{E: general DOS formula}
        \tr_\omega(TM_w) = \tr_\omega(M_w)\lim_{k\to \infty} \frac{\tr(TM_{\chi_{B(x_0,r_k)}})}{|B(x_0,r_k)|}
    \end{equation}
    for all bounded linear operators $T$ on $\ell_2(X)$ for which the limit on the right-hand side exists.
\end{theorem}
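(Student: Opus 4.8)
My plan is to reduce the operator-theoretic identity to a purely scalar statement about weighted logarithmic means, using the machinery of $V$-modulated operators. First I would set up coordinates adapted to $M_w$: enumerate $X = \{y_0, y_1, y_2, \dots\}$ shell by shell, so that the points of $B(x_0,r_k)$ occupy exactly the indices $0 \le n < N_k$, where $N_k := |B(x_0,r_k)|$, and the order within a shell is arbitrary. Since $w$ is radially strictly decreasing it takes a single value $w_k$ on the $k$-th shell with $w_1 > w_2 > \cdots$, so the standard basis $\{e_{y_n}\}$ is an eigenbasis of $M_w$ with non-increasing eigenvalues $\mu_n(M_w) = w(y_n)$. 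The hypothesis $w \in \ell_{1,\infty}(X)$ then supplies the quantitative decay $w_k = \mu_{N_{k-1}}(M_w) \le C/(N_{k-1}+1)$, which I will use repeatedly.

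The operator-theoretic engine is the diagonal formula for Dixmier traces of $V$-modulated operators from~\cite{KaltonLord2013} and~\cite[Section~7.3]{LSZVol1}. As $M_w \ge 0$ lies in $\ell_{1,\infty}(X)$ it is $M_w$-modulated, and since $T$ is bounded the product $TM_w$ is again $M_w$-modulated; hence for every Dixmier trace
\[
\tr_\omega(TM_w) = \omega\text{-}\lim_N \frac{1}{\log(2+N)}\sum_{n=0}^N \langle TM_w e_{y_n}, e_{y_n}\rangle = \omega\text{-}\lim_N \frac{1}{\log(2+N)}\sum_{n=0}^N w(y_n)\, a_n,
\]
where $a_n := \langle T e_{y_n}, e_{y_n}\rangle$ and I used $M_w e_{y_n} = w(y_n) e_{y_n}$; taking $T = \id$ gives the same formula for $\tr_\omega(M_w)$. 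Writing $\rho$ for the limit on the right-hand side of the theorem and using that $\omega\text{-}\lim$ is linear and kills null sequences, it suffices to prove the \emph{honest} limit
\[
f(N) := \frac{1}{\log(2+N)}\sum_{n=0}^N w(y_n)(a_n - \rho) \longrightarrow 0, \qquad N \to \infty.
\]

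The heart of the matter is now scalar. Let $A_n := \sum_{m=0}^n a_m$, so that $A_{N_k-1} = \tr(TM_{\chi_{B(x_0,r_k)}})$ and the hypothesis reads $A_{N_k-1}/N_k \to \rho$. The first step is to promote this convergence along the radii to convergence along \emph{all} indices: for $N_{k-1} \le n < N_k$ the increment $A_n - A_{N_{k-1}-1}$ consists of at most $N_k - N_{k-1}$ terms each bounded by $\|T\|$, so condition~\eqref{E: Condition} in the form $N_k/N_{k-1}\to 1$ forces $A_n/(n+1) \to \rho$, i.e.\ $B_n := A_n - \rho(n+1) = o(n)$. The second step is Abel summation: because $w(y_n)$ is constant on shells,
\[
\sum_{n=0}^N w(y_n)(a_n - \rho) = B_N\,w(y_N) + \sum_{j\,:\,N_j \le N} B_{N_j-1}\,(w_j - w_{j+1}).
\]
The boundary term is $o(N)\cdot O(1/N) \to 0$. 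For the remaining sum I write $B_{N_j-1} = \delta_j N_j$ with $\delta_j \to 0$ and set $c_j := N_j(w_j - w_{j+1}) \ge 0$; a reverse Abel summation rewrites $\sum_j c_j$ as $\sum_{n\le N} w(y_n)$ up to a bounded error, and the $\ell_{1,\infty}$ bound gives $\sum_{n\le N} w(y_n) = O(\log N)$, so $\frac{1}{\log(2+N)}\sum_j c_j$ stays bounded. A Toeplitz/weighted-Cesàro argument — split off the finitely many indices where $|\delta_j|$ is large, and bound the tail by $\varepsilon\,\sup_N \frac{1}{\log(2+N)}\sum_j c_j$ — then yields $\frac{1}{\log(2+N)}\sum_j \delta_j c_j \to 0$, and hence $f(N)\to 0$.

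I expect the main obstacle to be the interplay between condition~\eqref{E: Condition} and the fact that the weight increments are concentrated at the shell boundaries: condition~\eqref{E: Condition} is exactly what is needed both to pass from radial averages to full averages in the first step and to guarantee that the mass $c_j = N_j(w_j-w_{j+1})$ is comparable to $\sum_n w(y_n)$ rather than concentrating, so that the weighted-Cesàro cancellation survives. A secondary point I would verify carefully, but do not expect to cause trouble, is that the ambiguity in the within-shell ordering affects neither the diagonal formula (whose partial sums at the radii $N_k-1$ range over complete eigenspaces of $M_w$) nor the scalar estimates (which use only the boundedness $|a_n| \le \|T\|$).
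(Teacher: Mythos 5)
Your proposal is correct and follows essentially the same route as the paper: reduce to diagonal sums via the $M_w$-modulated operator theorem (\cite[Theorem 7.1.3]{LSZVol1}), use condition~\eqref{E: Condition} together with $\abs{a_n}\leq\norm{T}$ to pass from averages over balls to averages over all indices (the paper's Lemma~\ref{L: subsequence summation}), and then an Abel-summation/Toeplitz argument to transfer the Ces\`aro limit to the $w$-weighted logarithmic means (the paper's Lemmas~\ref{toeplitz_lemma}--\ref{toeplitz_corollary} and~\ref{expectation_values_lemma}). The only cosmetic differences are that you group the Abel summation by shells and normalise directly by $\log(2+N)$ using $\sum_{n\leq N}w(y_n)=O(\log N)$, whereas the paper normalises by $\sum_{k\leq n}\mu(k,M_w)$ and divides by the logarithm only at the end; both are valid.
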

Because $\{r_k\}$ denotes all possible distances $d_X(x_0, \cdot)$ in the discrete space $X$, the limit on the right-hand side of equation~\eqref{E: general DOS formula} exists if and only if the continuous limit \[\lim_{R\to \infty} \frac{\tr(TM_{\chi_{B(x_0,R)}})}{|B(x_0,R)|}\] exists, and these limits are necessarily equal. This is therefore in line with the definition of the density of states~\eqref{D: DOS disc}.

In the next section, on concrete examples taken from physics, we will demonstrate that condition~\eqref{E: Condition} appearing in this theorem is very natural.

The function $w$ appearing in Theorem~\ref{T: Main} can always be chosen as follows: \begin{equation*}
    w(x) := \frac{1}{1+|B(x_0,d_X(x_0,x))|},\quad x \in X.
\end{equation*}
However, the theorem holds for any strictly positive, radially strictly decreasing $w \in \ell_{1,\infty}(X)$ (radially decreasing meaning that $w(x)$ is a decreasing function of $d_X(x_0, x)$).

Observe that if $H$ is a self-adjoint, possibly unbounded, operator on $\ell_2(X)$ with density of states measure $\nu_H$, then Theorem~\ref{T: Main} implies that for all $f \in C_c(\Rl)$ we have:
\begin{equation}\label{F: DOS integral}
    \tr_\omega(f(H)M_w) = \tr_\omega(M_w)\int_{\Rl} f\,d\nu_H.
\end{equation}

From this formula it is clear that $f(H)M_w$ is Dixmier measurable for any $f\in C_c(\mathbb{R})$ whenever $H$ admits a density of states and $M_w$ itself is Dixmier measurable as well. The converse does not hold, as the following example shows.

\begin{example}
Let $X = \mathbb{N}$ with the usual metric $d(x,y) = |x-y|$, and take the base-point $x_0 = 1$. Note that an operator $A \in \mathcal{L}_{1,\infty}$ is Dixmier measurable if and only if \[\frac{1}{\log(2+n)} \sum_{k=1}^n \lambda(k,A)\] converges~\cite[Theorem~9.1.2(c)]{LSZVol1}, where $\lambda(k,A)$ is an eigenvalue sequence of $A$ ordered such that $\abs{\lambda(k,A)}$ is decreasing. It is clear that for $w(n) := n^{-1}$, the operator $M_w$ on $\ell_2(\mathbb{N})$ is in $\mathcal{L}_{1,\infty}$, and by the above it is Dixmier measurable. Define a self-adjoint bounded operator $H$ on $\ell_2(\mathbb{N})$ by $H(e_n) = \lambda_n e_n$, where $\lambda_n$ is defined as $\lambda_1 := 1$,
\begin{equation*}
\lambda_n := 
\begin{cases}
\begin{aligned}
0, \quad&n \in [2^{2m}+1, 2^{2m+1}], &&m=0,1,2,\dots,\\
1, \quad&n \in [2^{2m-1}+1, 2^{2m}], &&m=1,2,3,\dots.
\end{aligned}
\end{cases}
\end{equation*}
\end{example}
Observe that $\frac{1}{n}\sum_{k=1}^n \lambda_n$ does not converge as $n\to \infty$, since 
\begin{align*}
\frac{1}{2^{2m}} \sum_{k=1}^{2^{2m}} \lambda_n &= \frac{1}{2^{2m}}\sum_{k=0}^{2m}(-2)^k\\
&= \frac{1}{2^{2m}}\left(\frac{1-(-2)^{2m+1}}{1+2}\right)\\
&\to \frac{2}{3};\\
\frac{1}{2^{2m+1}} \sum_{k=1}^{2^{2m+1}} \lambda_n &= \frac{1}{2^{2m+1}}\sum_{k=0}^{2m}(-2)^k\\
&= \frac{1}{2^{2m+1}}\left(\frac{1-(-2)^{2m+1}}{1+2}\right)\\
&\to \frac{1}{3},
\end{align*}
where we used the closed-form formula for the sum of a geometric series. Hence \[\frac{1}{|B(x_0,n)|}\tr(M_{\chi_{B(x_0,n)}}H) = \frac{1}{n}\sum_{k=1}^n \lambda_n\] does not converge as $n\to \infty$, and $H$ does not admit a density of states. However, $\frac{1}{\log(2+n)}\sum_{k=1}^n \frac{\lambda_k}{k}$ does converge. For convenience, we will show the convergence of $\frac{1}{\log(n)}\sum_{k=1}^n \frac{\lambda_k}{k}$, taking $n>2$, and we will use that the harmonic number $H(n)=\sum_{k=1}^n \frac{1}{k}$ has the expansion $H(n) = \log(n)+\gamma+\frac{1}{2n}+O\left(\frac{1}{n^2}\right)$, where $\gamma$ is the Euler-Mascheroni constant~\cite[Section~1.2.7]{Knuth1997}. For $m\geq 1$,
\begin{align*}
\frac{1}{\log(2^{2m})}\sum_{k=1}^{2^{2m}}\frac{\lambda_k}{k} &= \frac{1}{2m\log(2)}\left(1+\sum_{k=1}^{m} H(2^{2k})-H(2^{2k-1})\right)\\
&= \frac{1}{2m\log(2)}\sum_{k=1}^{m} \log(2) (2k-(2k-1))+\\
& \quad + \frac{1}{2m\log(2)}+\frac{1}{2m\log(2)}\sum_{k=1}^{m} \frac{1}{4k} - \frac{1}{4k-2} + O\left(\frac{1}{k^2}\right)\\
&\rightarrow \frac{1}{2},
\end{align*}
where for the last term we used that for any sequence that converges to zero, its Ces\`aro mean also converges to zero. Likewise,\begin{align*}
\frac{1}{\log(2^{2m+1})}\sum_{k=1}^{2^{2m+1}}\frac{\lambda_k}{k} &= \frac{1}{(2m+1)\log(2)}\left( 1+ \sum_{k=1}^{m} H(2^{2k})-H(2^{2k-1})\right)\\
&= \frac{m}{2m+1}+ \frac{1}{(2m+1)\log(2)}+\\
& \quad +\frac{1}{(2m+1)\log(2)}\sum_{k=1}^{m} \frac{1}{4k} - \frac{1}{4k-2} + O\left(\frac{1}{k^2}\right)\\
&\rightarrow \frac{1}{2}.
\end{align*}
Note that $\frac{1}{\log(n)}\sum_{k=1}^n \lambda_k/k$ increases monotonically on $n\in [2^{2m-1}+1, 2^{2m}]$ and decreases monotonically on $n\in[2^{2m}+1,2^{2m+1}]$. Therefore the above shows that $\frac{1}{\log(n)}\sum_{k=1}^n \lambda_k/k \rightarrow \frac{1}{2}$, and hence also $\frac{1}{\log(n)}\sum_{k=1}^n (1-\lambda_k)/k \rightarrow \frac{1}{2}$. To conclude, note that the eigenvalues of $f(H)M_w$ for $f\in C(\mathbb{R})$ are \[\{f(0)/k: \lambda_k = 0\} \cup \{f(1)/k : \lambda_k=1\} = \{f(0)(1-\lambda_k)/k+f(1)\lambda_k/k\}_{k\in \mathbb{N}}.\] These are not ordered in decreasing fashion, but this can be remedied via~\cite[Theorem 7.1.3]{LSZVol1}, see also Theorem~\ref{T: Modulated}. Since $f(H)M_w$ is $M_w$-modulated (see Section~\ref{S: Proof} or~\cite{KaltonLord2013} and~\cite[Section~7.3]{LSZVol1} for more information), this theorem implies that for $\lambda(k,f(H)M_w)$ ordered in the desired way, \[\sum_{k=1}^n\lambda(k,f(H)M_w) = \sum_{k=1}^n f(0)(1-\lambda_k)/k+\sum_{k=1}^n f(1)\lambda_k/k + O(1).\] Hence, \[\frac{1}{\log(2+n)}\sum_{k=1}^n \lambda(k,f(H)M_w) \to \frac{f(0)+f(1)}{2},\] and by~\cite[Theorem~9.1.2(c)]{LSZVol1} this means that $f(H)M_w$ is Dixmier measurable.

What we can learn from this example is that requiring $f(H)M_w$ to be Dixmier measurable for for all $f\in C_c(\mathbb{R})$ is a strictly weaker assumption on self-adjoint operators $H$ than requiring the existence of a DOS. Hence $\tr_\omega(f(H)M_w)$ can interpreted as a strictly stronger extension of the definition of the DOS on metric spaces that satisfy the conditions of Theorem~\ref{T: Main}.

However, the above example is not a Schr\"odinger operator. It is unknown whether these definitions of the DOS still differ when restricting to Schr\"odinger operators. Note that for the continuous case, where we take a Schr\"odinger operator $H=-\Delta + M_V$  on $L_2(\mathbb{R}^d)$, it is also still an open question how the existence of the DOS measure for $H$ is related to the Dixmier measurability of $f(H)M^{-d}_{\langle x \rangle}$ for all $f \in C_c(\mathbb{R})$. In~\cite{AMSZ} these are conjectured to be equivalent.

\medskip
The structure of the rest of the paper is as follows. First we will discuss condition~\eqref{E: Condition} in Section~\ref{S: Metric Condition} and give examples of spaces where it is satisfied. In Section~\ref{S: Preliminaries} we will then cover the preliminaries needed to understand the technical discussions that follow. Section~\ref{S: Proof} is the heart of the paper where we prove the main result, Theorem~\ref{T: Main}. Finally, in Appendix~\ref{S: measurability} we prove a theorem regarding the Connes' measurability of the operator which we take the Dixmier trace of in \eqref{E: general DOS formula}, and in Appendix~\ref{S: Translation} we apply the Dixmier trace formula for the DOS to provide a new proof of the equivariance under translations for the DOS on lattices.

\section{Metric Condition}
\label{S: Metric Condition}
The main theorem of this paper is applicable to countably infinite discrete metric spaces such that every ball contains at most finitely many points and that also satisfy property~\eqref{E: Condition} holds. Namely, we require that \begin{equation}\tag{C}
    \lim_{k\rightarrow \infty}\frac{|B(x_0, r_{k+1})|}{|B(x_0, r_k)|} = 1,
\end{equation} where $\{r_k\}_{k\in \mathbb{N}}$ is the increasing sequence created by ordering the set $\{d_X(x_0, y) : y \in X\}$ in increasing manner (which results in a sequence $r_k \rightarrow \infty$ since every ball in $X$ contains at most finitely many points).

First, observe that property~\eqref{E: Condition} is a condition on the so-called \textit{crystal ball sequence} $\{|B(x_0, r_k)|\}_{k \in \mathbb{N}}$ of the metric space $X$~\cite{ConwaySloane1997}, or alternatively after defining $S(x_{0}, r_k) := B(x_0, r_k) \setminus B(x_0, r_{k-1})$ it is a condition on the \textit{coordination sequence} $\{|S(x_0, r_k)|\}_{k \in \mathbb{N}}$~\cite{Brunner1979, ConwaySloane1997, OKeeffe1995}. 

To build some intuition, consider the following comment by J.E. Littlewood. Upon encountering the condition $\lim_{n\rightarrow \infty}\frac{\lambda_{n+1}}{\lambda_n} = 1$ he remarks~\cite{Littlewood1911}: ``[This condition is] satisfied when $\lambda_n$ is any function of less order than $e^{\varepsilon n}$ for all values of $\varepsilon$, which increases in a regular manner. When, however, $\lambda_n > e^{\varepsilon n}$, the theorem breaks down altogether.''. This observation is apt, indeed our restriction on the metric space $X$ is a strictly stronger assumption than sub-exponential growth of the sequence $\{|B(x_0, r_k)|\}_{k \in \mathbb{N}}$ (with respect to $k$, not $r_k$), but exactly what kind of regular growth plus subexponential growth would imply condition~\eqref{E: Condition} is hard to pin down.

There is an equivalent description of property~\eqref{E: Condition}, the proof of which can be found in a very recent preprint by F. Cipriani and J. Sauvageot~\cite[Proposition~2.9]{CiprianiSauvageot2021}. The proposition they prove is slightly different, but the given proof is immediately applicable to the following.

\begin{proposition}\label{P: CiprianiAsymptote}
    Let $(X,d_X)$ be an infinite, discrete metric space such that each ball contains at most finitely many points, choose some point $x_0\in X$ and order $\{d_X(x_0, y) : y \in X\}$ in increasing manner to define the sequence $\{r_k\}_{k\in \mathbb{N}}$. Then \[\lim_{k\rightarrow \infty}\frac{|B(x_0, r_{k+1})|}{|B(x_0, r_k)|} = 1\] if and only if \[|B(x_0, r)| \sim \varphi(r)\] for some continuous function $\varphi: \mathbb{R}_+ \rightarrow \mathbb{R}_+$.
\begin{proof}
The proof is exactly the same as in~\cite[Proposition~2.9]{CiprianiSauvageot2021} after replacing $N_L(x)$ by $|B(x_0, r)|$ and $M_k$ by $|B(x_0, r_{k})|$.
\end{proof}
\end{proposition}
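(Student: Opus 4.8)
The plan is to prove the two implications separately, working throughout with the step function $N(r) := |B(x_0,r)|$, which is constant and equal to $M_k := |B(x_0,r_k)|$ on each interval $[r_k, r_{k+1})$ and jumps to $M_{k+1}$ exactly at $r = r_{k+1}$. The two structural facts I would record at the outset are that the balls are nested, so $M_k \le M_{k+1}$, and that each radius $r_{k+1}$ is a distance genuinely realised by a new point (the sphere $S(x_0,r_{k+1}) = B(x_0,r_{k+1}) \setminus B(x_0,r_k)$ is nonempty), so in fact $M_k < M_{k+1}$. Both directions then reduce to elementary real analysis; no operator theory is involved.

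For the direction $(\Leftarrow)$ I assume $N(r) \sim \varphi(r)$ with $\varphi$ continuous and positive. Fix $\varepsilon \in (0,1)$ and choose $R$ so that $(1-\varepsilon)\varphi(r) \le N(r) \le (1+\varepsilon)\varphi(r)$ for all $r > R$. For $k$ large enough that $r_k > R$, I would evaluate at $r = r_{k+1}$ to obtain $M_{k+1} = N(r_{k+1}) \le (1+\varepsilon)\varphi(r_{k+1})$, and separately let $r \uparrow r_{k+1}$ along $[r_k,r_{k+1})$, where $N \equiv M_k$, to obtain $M_k \ge (1-\varepsilon)\varphi(r_{k+1})$ using continuity of $\varphi$ at $r_{k+1}$. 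Dividing gives
\[
1 \le \frac{M_{k+1}}{M_k} \le \frac{(1+\varepsilon)\varphi(r_{k+1})}{(1-\varepsilon)\varphi(r_{k+1})} = \frac{1+\varepsilon}{1-\varepsilon},
\]
so $\limsup_k M_{k+1}/M_k \le (1+\varepsilon)/(1-\varepsilon)$, and letting $\varepsilon \to 0$ forces the limit to be $1$. The one point requiring care is that the hypothesis $N(r) \sim \varphi(r)$ is a statement about $r \to \infty$ through \emph{all} real $r$, not merely through the sequence $\{r_k\}$; continuity of $\varphi$ is precisely what lets me transfer the bound valid just below the jump $r_{k+1}$ to the value $\varphi(r_{k+1})$ itself. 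I expect this transfer to be the main, if modest, obstacle.

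For the direction $(\Rightarrow)$ I assume $M_{k+1}/M_k \to 1$ and construct $\varphi$ explicitly by piecewise-linear interpolation of the data points $(r_k, M_k)$, extended by the constant value $M_1 = 1$ on $[0,r_1]$; this $\varphi$ is continuous and strictly positive since every $M_k \ge 1$. On each interval $[r_k,r_{k+1})$ one has $N(r) = M_k$ while $\varphi(r)$ lies between $M_k$ and $M_{k+1}$, whence
\[
\frac{M_k}{M_{k+1}} \le \frac{N(r)}{\varphi(r)} \le 1.
\]
Because $M_k/M_{k+1} \to 1$ by hypothesis, the outer bounds pinch and $N(r)/\varphi(r) \to 1$ as $r \to \infty$, which is exactly $N \sim \varphi$. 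Here the strict monotonicity $M_k < M_{k+1}$ guarantees the interpolant is genuinely increasing and the sandwich is non-degenerate, so the construction goes through without incident.

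In short, the whole statement is a Tauberian-flavoured equivalence between a ratio condition on the monotone integer sequence $\{M_k\}$ and asymptotic comparability of its associated step function $N$ with a continuous function, and the two directions are handled respectively by a continuity-at-the-jump argument and by an explicit linear interpolation.
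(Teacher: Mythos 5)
Your argument is correct, and both directions are sound: the reduction to the step function $N(r)=|B(x_0,r)|$, the continuity-at-the-jump transfer in the $(\Leftarrow)$ direction, and the piecewise-linear interpolation with the squeeze $M_k/M_{k+1}\le N(r)/\varphi(r)\le 1$ in the $(\Rightarrow)$ direction all hold (the strict inequality $M_k<M_{k+1}$, which you justify via the nonempty sphere at $r_{k+1}$, is indeed what keeps everything non-degenerate, and positivity of $\varphi$ licenses the divisions). The only difference from the paper is that the paper does not argue at all: it defers entirely to \cite[Proposition~2.9]{CiprianiSauvageot2021} with a change of notation, so your write-up is a self-contained replacement for that citation rather than an alternative to an argument given in the text; the interpolation-plus-squeeze mechanism you use is the natural one and is what the cited proof amounts to after the indicated substitutions. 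The one cosmetic point worth adjusting is the parenthetical $M_1=1$: whether the first term of $\{r_k\}$ is $0$ (so $M_1=|\{x_0\}|=1$) depends on indexing conventions, and nothing in your proof uses this value, so it is safer to say only that $M_1\ge 1$.
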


\begin{remark}
    For the Cayley graph of the free group $\mathbb{F}_2$ we have $r_k = k$, $|B(x_0, k)| = 2^k$ and hence $|B(x_0, r)| = 2^{\lfloor r \rfloor} $, but \[\frac{|B(x_0, r)|}{2^r} = 2^{\lfloor r \rfloor-r}\] which does not converge as $r\rightarrow \infty$. This illustrates that $|B(x_0, r)| \sim \varphi(r)$ for some continuous function $\varphi$ is a stronger assumption that one might expect.
\end{remark}

In the same preprint, another condition is given which is sufficient for property~\eqref{E: Condition} to be satisfied~\cite[Proposition~2.8]{CiprianiSauvageot2021}.

\begin{proposition}\label{P: CondC}
    Let $X$ be a metric space as in Proposition~\ref{P: CiprianiAsymptote}. If $|B(x_0, r_k)| \sim f(k)$ (letting now $k\rightarrow \infty$ over the integers) for a function $f\in C^1(0,\infty)$ such that $\frac{f'(x)}{f(x)} \rightarrow 0$ as $x\rightarrow \infty$, then \[\lim_{k\rightarrow \infty}\frac{|B(x_0, r_{k+1})|}{|B(x_0, r_k)|} = 1.\] In particular, if $|B(x_0, r_k)|$ is a polynomial in $k$, then $\frac{|B(x_0, r_{k+1})|}{|B(x_0, r_k)|}\rightarrow 1$ as $k\rightarrow \infty$.
\begin{proof}
See~\cite[Proposition~2.8]{CiprianiSauvageot2021}
\end{proof}
\end{proposition}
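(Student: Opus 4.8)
The plan is to write $b_k := |B(x_0, r_k)|$ and reduce the claim to a purely analytic statement about $f$. Since $X$ is infinite and every ball is finite, the balls $B(x_0, r_k)$ exhaust $X$ and so $b_k \to \infty$; the hypothesis $b_k \sim f(k)$ then gives $f(k) \to \infty$, and in particular $f(k) > 0$ for all large $k$. Because the expression $f'/f$ presupposes $f \neq 0$ near infinity, $f$ has constant—hence positive—sign on some half-line $[N, \infty)$. I would then factor
\[
\frac{b_{k+1}}{b_k} = \frac{b_{k+1}}{f(k+1)} \cdot \frac{f(k+1)}{f(k)} \cdot \frac{f(k)}{b_k},
\]
where the outer two factors tend to $1$ by the asymptotic $b_k \sim f(k)$. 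Thus the whole statement reduces to proving that the middle factor $f(k+1)/f(k) \to 1$.

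To control that factor I would pass to logarithms on $[N, \infty)$, where $f > 0$. Writing $g := \log f$, the hypothesis says $g' = f'/f \to 0$ as $x \to \infty$. The mean value theorem supplies, for each integer $k \geq N$, a point $\eta_k \in (k, k+1)$ with
\[
\log \frac{f(k+1)}{f(k)} = g(k+1) - g(k) = g'(\eta_k) = \frac{f'(\eta_k)}{f(\eta_k)}.
\]
Since $\eta_k \to \infty$, the right-hand side tends to $0$, so $f(k+1)/f(k) \to 1$ and therefore $b_{k+1}/b_k \to 1$, which is precisely \eqref{E: Condition}. A fundamental-theorem-of-calculus variant, bounding $\bigl|\log \tfrac{f(k+1)}{f(k)}\bigr| \le \int_k^{k+1} |f'/f|\,dx \le \sup_{[k,\infty)} |f'/f|$, works equally well and avoids selecting the auxiliary points $\eta_k$.

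For the polynomial special case, I would take $f$ to be that polynomial itself, so that $b_k = f(k)$ and the asymptotic relation is an equality. If $f$ has degree $n$, then $f'(x)/f(x) \sim n/x \to 0$, so the hypothesis of the general statement is satisfied and the conclusion follows immediately. The only point requiring genuine care—and the closest thing to an obstacle—is the bookkeeping around positivity and the domain of $f$: one must ensure $f$ is positive on an entire half-line (not merely at the integers) before taking logarithms, and this is exactly what $f'/f \to 0$ together with $b_k \sim f(k)$ guarantees. Beyond this caveat, the argument is elementary.
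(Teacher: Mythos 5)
Your argument is correct. The paper itself offers no in-text proof of this proposition --- it simply defers to \cite[Proposition~2.8]{CiprianiSauvageot2021} --- so there is nothing internal to compare against line by line; your write-up supplies the elementary argument that the paper outsources. The reduction of $b_{k+1}/b_k \to 1$ to $f(k+1)/f(k) \to 1$ via $b_k \sim f(k)$, followed by the mean value theorem applied to $\log f$ on a half-line where $f>0$, is the standard route (and almost certainly the substance of the cited result). Your attention to the positivity bookkeeping is also sound: $b_k \to \infty$ because the balls exhaust the infinite set $X$, so $f(k)\to\infty$, and continuity plus the fact that $f'/f$ is defined near infinity forces $f>0$ on a half-line, legitimising the logarithm. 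The polynomial special case follows as you say, since a polynomial of degree $n\ge 1$ with positive leading coefficient has $f'/f \sim n/x \to 0$.
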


To be used later on, we also postulate the following lemma.

\begin{lemma}\label{L: CondC}
    Let $X$ be a metric space as in Proposition~\ref{P: CiprianiAsymptote}. If there exist constants $C_1, C_2, d$ and $R$ such that for $r_k > R$ we have \begin{equation}\label{E: Coord seq cond1}
    C_1 k^d < |S(x_0, r_k)| < C_2 k^d,
    \end{equation} then $X$ has property~\eqref{E: Condition}.
\begin{proof}
    If $C_1 k^d < |S(x_0, r_k)| < C_2 k^d$, we can deduce that also $|B(x_0, r_k)| \geq \frac{C_1}{d+1} k^{d+1} + O(k^d)$ for $r_k > R$, and therefore \begin{equation}\label{E: Folner}
        \lim_{k\rightarrow \infty}\frac{|S(x_0, r_{k+1})|}{|B(x_0, r_k)|} = 0,
    \end{equation} which is equivalent with \[\lim_{k\rightarrow \infty}\frac{|B(x_0, r_{k+1})|}{|B(x_0, r_k)|} = 1. \qedhere\] 
\end{proof}
\end{lemma}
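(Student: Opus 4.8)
The plan is to reduce condition~\eqref{E: Condition} to the Følner-type statement~\eqref{E: Folner} and then verify the latter directly from the polynomial bounds on the coordination sequence. The starting point is the exact relation $|B(x_0, r_{k+1})| = |B(x_0, r_k)| + |S(x_0, r_{k+1})|$, which holds because $S(x_0, r_{k+1}) = B(x_0, r_{k+1}) \setminus B(x_0, r_k)$ is the disjoint increment between two nested balls. Dividing by $|B(x_0, r_k)|$ gives \[ \frac{|B(x_0, r_{k+1})|}{|B(x_0, r_k)|} = 1 + \frac{|S(x_0, r_{k+1})|}{|B(x_0, r_k)|}, \] so~\eqref{E: Condition} holds if and only if the ratio in~\eqref{E: Folner} tends to $0$. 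This already establishes the equivalence asserted at the end of the statement, and it remains only to bound the shell-to-ball ratio.

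For the numerator I would apply the upper bound in~\eqref{E: Coord seq cond1}: once $r_{k+1} > R$ we have $|S(x_0, r_{k+1})| < C_2 (k+1)^d$. For the denominator I would sum the lower bounds over the shells. Writing $|B(x_0, r_k)| = \sum_{j=1}^{k} |S(x_0, r_j)|$ and splitting off the finitely many indices $j$ with $r_j \leq R$ (whose total contribution is a fixed constant), the bound $|S(x_0, r_j)| > C_1 j^{d}$ applies to all remaining indices. Using the elementary asymptotic $\sum_{j=1}^{k} j^{d} = \frac{k^{d+1}}{d+1} + O(k^{d})$, obtained by comparison with $\int_0^k x^{d}\,dx$ and valid for real $d>0$ (not merely integer $d$), this yields $|B(x_0, r_k)| \geq \frac{C_1}{d+1} k^{d+1} + O(k^{d})$, precisely as claimed in the statement.

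Combining the two estimates, \[ 0 \leq \frac{|S(x_0, r_{k+1})|}{|B(x_0, r_k)|} \leq \frac{C_2 (k+1)^{d}}{\frac{C_1}{d+1} k^{d+1} + O(k^{d})} = O\!\left(k^{-1}\right) \longrightarrow 0 \quad (k \to \infty), \] which gives~\eqref{E: Folner}; by the displayed equivalence this is exactly~\eqref{E: Condition}.

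I do not expect a genuine obstacle here: the argument is elementary once the increment relation between consecutive balls and the enclosed shell is written down. The only points requiring mild care are (i) absorbing the finitely many small-radius terms into the $O(k^{d})$ error without disturbing the leading $k^{d+1}$ term, and (ii) observing that one never needs both inequalities of~\eqref{E: Coord seq cond1} on the same quantity — the lower bound $C_1 j^{d}$ is used only when summing the shells in the denominator, while the upper bound $C_2(k+1)^d$ is used only on the single top shell in the numerator. The gap of one order of magnitude between the numerator ($k^d$) and the denominator ($k^{d+1}$) is what forces the ratio to zero and is the conceptual crux of the estimate.
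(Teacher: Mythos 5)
Your proposal is correct and follows essentially the same route as the paper's own (much terser) proof: sum the lower bound $C_1 j^d$ over shells to get $|B(x_0,r_k)| \geq \frac{C_1}{d+1}k^{d+1} + O(k^d)$, bound the top shell by $C_2(k+1)^d$, and conclude that the shell-to-ball ratio in \eqref{E: Folner} vanishes, which is equivalent to \eqref{E: Condition}. You have merely filled in the details the paper leaves implicit, including the increment identity behind the equivalence and the handling of the finitely many small-radius shells.
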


As a final general comment, when writing condition~\eqref{E: Condition} in the manner of Equation \eqref{E: Folner}, it vaguely resembles a type of F\o lner condition. In particular, it is reminiscent of work by Adachi and Sunada on the DOS on amenable groups where a closely related property is the subject of interest, namely Property (P) in~\cite[Proposition~1.1]{AdachiSunada1993}, also compare with~\cite[Lemma~3.2]{AdachiSunada1993}.

In the next section we describe natural examples coming from physics which satisfy condition~\eqref{E: Condition}.

\subsection{Solid matter}
Consider any kind of rigid matter whose atoms or molecules are arranged in Euclidean space in such a way that it can be described by a tiling of that space. To be precise, we mean a tiling generated by only a finite selection of different tiles, with each type of tile having a fixed arrangement of atoms within (at least 1). Any crystal can be described in this a way using only one tile by considering its underlying Bravais lattice~\cite[Chapter~4]{Ashcroft1976}, but the definition above includes quasicrystals~\cite{ShechtmanBlech1984, LevineSteinhardt1984}. For the approach of quasicrystals by tilings see for example~\cite{Hof1995, Jaric1989, Nelson1986}. Specifically,~\cite{Hof1993} establishes the existence of the integrated DOS, which is the existence of the function $\lambda \mapsto \nu_H(-\infty,\lambda)$, for every self-adjoint vertex-pattern-invariant operator on aperiodic self-similar tilings.

If we define the set $X$ of the metric space $(X,d_X)$ as the atoms or molecules of the material and impose the induced Euclidean metric, then we claim that this space has property~\eqref{E: Condition}.

\begin{proposition}
    Let $X$ be a discrete subset of $\mathbb{R}^d$ with the inherited Euclidean metric, such that $X$ can be defined by a tiling as described above. Then $(X,d_X)$ has property~\eqref{E: Condition}.
\begin{proof}
Without loss of generality, assume that the diameters of the tiles are all less than $1$. Hence $r_{k+1} \in (r_k, r_k +2]$

and therefore it suffices to show that \[\frac{|B(x_0, r_k +2) \setminus B(x_0, r_k)|}{|B(x_0, r_k)|} \xrightarrow{k \rightarrow \infty} 0.\]
Now, the number of vertices contained in $B(x_0, r_k +2) \setminus B(x_0, r_k)$ is bounded from above by some constant times $(r_k)^{d-1}$: if the smallest tile has volume $V$, and each tile contains at most $n$ atoms, and $\tilde{B}(x_0, r_k +2)$ denotes the ball in $\mathbb{R}^d$, then there can be at most $n \frac{|\tilde{B}(x_0, r_k +2) \setminus \tilde{B}(x_0, r_k)|}{V}$ vertices in $B(x_0, r_k +2) \setminus B(x_0, r_k)$, which is bounded by $C_1(r_k)^{d-1}$. 

If the volume of the biggest tile is $W$, the number of tiles that are \textit{fully} contained in $B(x_0, r_k)$ is similarly bounded from below by $\frac{|\tilde{B}(x_0, r_k - 1)|}{W}$ because we assumed that the diameter of the tiles is less than 1. Recall that we assumed that each tile contains at least one atom. Then the number of atoms in $B(x_0, r_k)$ can be bounded from below by $\frac{|\tilde{B}(x_0, r_k - 1)|}{W}$, which is of the form $C_2 (r_k)^d + O((r_k)^{d-1})$. Hence indeed \[0 \leq \frac{|B(x_0, r_k +2) \setminus B(x_0, r_k)|}{|B(x_0, r_k)|} \leq \frac{C_1(r_k)^{d-1}}{C_2 (r_k)^{d} + O((r_k)^{d-1})} \xrightarrow{k \rightarrow \infty} 0,\] and we see that this metric space satisfies condition~\eqref{E: Condition}.
\end{proof}
\end{proposition}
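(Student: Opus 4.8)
The plan is to reduce property~\eqref{E: Condition} to showing that a thin spherical shell contains asymptotically negligibly many atoms compared with the whole ball, and then to estimate both the shell count and the ball count by comparing the counting measure on $X$ with Lebesgue measure through the tiling. First I would normalize: after rescaling the metric we may assume every tile has diameter strictly less than $1$, which does not affect~\eqref{E: Condition}. The crucial geometric consequence is that the achievable distances have uniformly bounded gaps. Indeed, the tile containing any point $p \in \mathbb{R}^d$ has diameter less than $1$ and contains at least one atom, so it lies entirely in the Euclidean ball of radius $1$ about $p$; hence every such ball meets $X$. Applying this to a point at Euclidean distance $r_k+1$ from $x_0$ yields an atom at distance in $(r_k, r_k+2]$, so $r_{k+1} \le r_k + 2$. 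Therefore \[\frac{|B(x_0, r_{k+1})|}{|B(x_0, r_k)|} \le 1 + \frac{|B(x_0, r_k+2)\setminus B(x_0, r_k)|}{|B(x_0, r_k)|},\] and it suffices to prove that the rightmost fraction tends to $0$.

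For the numerator I would bound from above the number of atoms in the shell $B(x_0, r_k+2)\setminus B(x_0, r_k)$. Every such atom lies in a tile meeting the shell, and since tiles have diameter less than $1$, all these tiles are contained in the Euclidean annulus with radii $r_k-1$ and $r_k+3$ (let $\tilde B(x_0,\rho)\subseteq\mathbb{R}^d$ denote the genuine Euclidean ball), whose volume is of order $r_k^{d-1}$. As the smallest tile has a fixed positive volume $V$ and each tile carries at most $n$ atoms, the disjointness of the tiles bounds the atom count by a constant multiple of $r_k^{d-1}$. For the denominator I would bound $|B(x_0,r_k)|$ from below: the ball $\tilde B(x_0, r_k-1)$ is covered by tiles, each of volume at most $W$, so it meets at least $\mathrm{Vol}(\tilde B(x_0,r_k-1))/W$ of them; each contributes at least one atom, and by the diameter bound those atoms lie in $B(x_0,r_k)$. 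This produces a lower bound of order $r_k^{d}$.

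Dividing the two estimates, the shell ratio is \[\frac{|B(x_0, r_k+2)\setminus B(x_0, r_k)|}{|B(x_0, r_k)|} \le \frac{C_1\, r_k^{d-1}}{C_2\, r_k^{d} + O(r_k^{d-1})} \longrightarrow 0,\] which together with the inequality above establishes~\eqref{E: Condition}.

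The step that needs the most care is the gap bound $r_{k+1}\le r_k+2$: it relies on the tiling actually \emph{filling} $\mathbb{R}^d$, so that every point lies in a tile, combined with the uniform diameter bound, to guarantee that no spherical shell of width $2$ is empty of atoms. Once this is secured, the remaining steps are a routine comparison of the counting measure with Lebesgue measure, requiring only that tile volumes are bounded above and below and that the number of atoms per tile is bounded, so that the shell and the ball acquire their expected Euclidean orders $r_k^{d-1}$ and $r_k^{d}$.
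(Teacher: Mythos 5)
Your proposal is correct and follows essentially the same route as the paper's proof: reduce~\eqref{E: Condition} to the vanishing of the shell-to-ball ratio, bound the shell count above via the minimal tile volume and the maximal atom count per tile, and bound the ball count below via the maximal tile volume. Your added care in justifying $r_{k+1}\le r_k+2$ and in enlarging the Euclidean annulus to radii $r_k-1$ and $r_k+3$ (to genuinely contain all tiles meeting the shell) tightens a small imprecision in the paper's version without changing the argument.
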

\subsection{Crystals}\label{SS: Crystals}
In another approach, we can take the atoms of crystals as the vertices of a graph and define our discrete metric space $(X,d_X)$ as this graph with shortest-path metric. Common choices for such a construction are the contact graph~\cite{ConwaySloane1997} and the Voronoi graph~\cite[p.~33]{ConwaySloane1993}. See for example the very recent paper~\cite{PradodeOliveira2021} based on the model~\cite{deOliveiraPrado2005} which shows the existence of the DOS measure on $\mathbb{Z}$ for a suitable family of Dirac operators, the very recent~\cite{PapaefstathiouRobaina2021} or~\cite{Pastur1980} for an older result.

As a first observation, note that for any graph with the shortest-path metric, we have $r_k = k$.

For crystals specifically, such a graph $\Gamma$ comes with a free $\mathbb{Z}^d$ action such that the quotient graph $\Gamma/\mathbb{Z}^d$ is finite. This is a simple observation by considering the underlying Bravais lattice of any crystal~\cite[Chapter~4]{Ashcroft1976}. A very recent advancement by Y. Nakamura,  R. Sakamoto, T. Masea and J. Nakagawa~\cite{NakamuraSakamoto2021} concerns exactly such (even possibly directed) graphs $\Gamma$ with a free $\mathbb{Z}^d$ action such that $\Gamma/\mathbb{Z}^d$ is finite. Namely, these authors have proven that the coordination sequence $\{|S(x_0, k)|\}_{k \in \mathbb{N}}$ is then of quasi-polynomial type, by which they mean the following. A quasi-polynomial is defined as a function $p: \mathbb{Z}_{\geq 0} \rightarrow \mathbb{Z}$ with $p(k) = c_m(k) k^m + c_{m-1} k^{m-1}+ \cdots + c_0(k)$ where $c_i(k)$ are all periodic with an integral period. Equivalently, it is a function such that for some integer $N$ \[p(k) = \begin{cases} p_0(k) & k = 0 \text{ mod } N\\
p_1(k) & k = 1 \text{ mod } N\\
&\vdots\\
p_{N-1}(k) & k = N-1 \text{ mod } N,\\
\end{cases}\]
where $p_0, \dots, p_{N-1}$ are polynomials. A function of quasi-polynomial type is then defined as a function $f: \mathbb{Z}_{\geq 0} \rightarrow \mathbb{Z}$ such that there exists some quasi-polynomial $p$ and an integer $M$ such that $f(n) = p(n)$ for all $n \geq M$. See also~\cite[Section~4.4]{Stanley1986}.

\begin{proposition}
Let $\Gamma$ be a graph with a free $\mathbb{Z}^d$ action such that the quotient graph $\Gamma/\mathbb{Z}^d$ is finite. Then $(\Gamma, d_\Gamma)$, where we take $d_\Gamma$ as the shortest-path metric, has property~\eqref{E: Condition}.
\end{proposition}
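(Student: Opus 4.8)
The plan is to combine the quasi-polynomial description of the coordination sequence due to Nakamura, Sakamoto, Masea and Nakagawa~\cite{NakamuraSakamoto2021} with Proposition~\ref{P: CondC}. Since $d_\Gamma$ is the shortest-path metric, we have $r_k = k$ (as observed just before Subsection~\ref{SS: Crystals}), so condition~\eqref{E: Condition} reads $|B(x_0,k+1)|/|B(x_0,k)| \to 1$, and it suffices to produce a genuine power-law asymptotic for the ball-counting function $|B(x_0,k)|$.

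First I would invoke~\cite{NakamuraSakamoto2021} to write the coordination sequence as a quasi-polynomial for large argument: there are an integer $M$, a period $N$, and periodic functions $c_0,\dots,c_m$ (with $c_m \not\equiv 0$) such that $|S(x_0,k)| = \sum_{i=0}^m c_i(k)\,k^i$ for all $k \geq M$. The first key observation is a positivity statement. Since $|S(x_0,k)| \geq 0$ for every $k$, the leading coefficient must satisfy $c_m(j) \geq 0$ for each residue $j \bmod N$: otherwise the quasi-polynomial would tend to $-\infty$ along the residue class where $c_m(j) < 0$, contradicting nonnegativity. As $c_m \not\equiv 0$, this forces the period-average $\overline{c}_m := \frac{1}{N}\sum_{j=0}^{N-1} c_m(j)$ to be strictly positive.

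Next I would pass from the coordination sequence to the ball-counting function by summation, $|B(x_0,k)| = \sum_{j=0}^k |S(x_0,j)|$, and extract its asymptotics. The point is that a periodic weight may be replaced by its average inside such a sum: a Cesàro-type estimate gives $\sum_{j=M}^{k} c_m(j)\,j^m \sim \overline{c}_m\,\frac{k^{m+1}}{m+1}$, while each lower term contributes only $\sum_{j=M}^{k} c_i(j)\,j^i = O(k^{i+1}) = O(k^m)$ for $i < m$. Hence the top-degree term dominates and $|B(x_0,k)| \sim \frac{\overline{c}_m}{m+1}\,k^{m+1}$. Thus, even though the coordination sequence carries a genuinely periodic leading coefficient, the averaging inherent in summation yields a constant leading coefficient for the volume.

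Finally I would set $f(x) := \frac{\overline{c}_m}{m+1}\,x^{m+1}$, which lies in $C^1(0,\infty)$ and satisfies $f'(x)/f(x) = (m+1)/x \to 0$, together with $|B(x_0,k)| \sim f(k)$; Proposition~\ref{P: CondC} then yields property~\eqref{E: Condition}. The main obstacle is precisely the periodicity of the leading coefficient of the coordination sequence: it need not be bounded below by a positive constant on every residue class, so a direct appeal to Lemma~\ref{L: CondC} with exponent $d = m$ may fail along a sparse subsequence. The crux of the argument is therefore the positivity-and-averaging step, which repairs this defect only after passing to the cumulative sum.
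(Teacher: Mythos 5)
Your argument is correct, and it rests on the same key input as the paper's proof, namely the quasi-polynomiality of the coordination sequence from~\cite{NakamuraSakamoto2021} combined with the nonnegativity of $|S(x_0,k)|$. The difference is in how the conclusion is extracted. The paper does not compute an asymptotic for $|B(x_0,k)|$ at all: it picks a single residue class $r$ on which the constituent polynomial has maximal degree $t$, uses nonnegativity of the remaining terms to get the one-sided bound $|B(x_0,k)| \geq C_3 k^{t+1} + O(k^t)$ with $C_3>0$, pairs this with the upper bound $|S(x_0,k+1)| \leq C_1 (k+1)^t$, and verifies the F\o lner-type ratio~\eqref{E: Folner} directly. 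You instead prove the genuine asymptotic $|B(x_0,k)| \sim \frac{\overline{c}_m}{m+1}k^{m+1}$ by averaging the periodic leading coefficient over a period (your positivity observation, that $c_m(j)\geq 0$ on every residue class and hence $\overline{c}_m>0$, plays the same role as the paper's implicit claim that the leading coefficient $C_2$ of $p_r$ is positive), and then you close via Proposition~\ref{P: CondC}. Your route costs a little more work but yields strictly more information --- the exact leading-order volume growth, which identifies the constant $\tr_\omega(M_w)$-type normalisations one would want in examples --- whereas the paper's route is leaner because only the order of growth, not the constant, is needed for property~\eqref{E: Condition}. Both are complete; the only point worth making explicit in your write-up is that $c_m\not\equiv 0$ is a normalisation (drop identically vanishing top coefficients), since the paper's definition of a quasi-polynomial does not impose it.
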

\begin{proof} 
By~\cite[Theorem~1.1]{NakamuraSakamoto2021} we have that $|S(x_0, k)|$ is a quasi-polynomial, denote this quasi-polynomial by $p(k)$ and its constituent polynomials by $p_1, \dots, p_{N-1}$. Suppose that the polynomial $p_r$ is one with maximal degree, i.e. $\deg(p_i) \leq \deg(p_r) = t$ for all $i=0, \dots, N-1$. Then there exist some constants $C_1, C_2$ and $L$ such that $p(k) \leq C_1 k^t$ for all $k \geq L$, and also $p_r(k) = C_2 k^t + O(k^{t-1})$. Now take $k \in \mathbb{Z}_{\geq 0}$ arbitrary, and define $a \in \mathbb{Z}_{\geq 0}$ as the smallest positive integer such that $k = mN+r +a$ for some $m \in \mathbb{Z}$. It follows that $a \in \{0, \dots, N-1\}$, and hence note that $mN = k-r-a \geq k-2N$, i.e. $\frac{k}{N} -2\leq m \leq \frac{k}{N}$.
\begin{align*}
    \sum_{n=1}^k p(n) & \geq \sum_{n=0}^m p_r(nN + r)\\
    & = \sum_{n=0}^m C_2(nN + r)^t + O((m+1) (nN + r)^{t-1})\\
    & = C_2\sum_{n=0}^m (nN)^d + O(k^{t})\\
    & = C_3 k^{t+1} + O(k^t).
\end{align*}
Therefore we have that \[0 \leq \frac{p(k+1)}{\sum_{n=1}^{k} p(n)} \leq \frac{C_1 (k+1)^t}{C_3 k^{t+1} + O(k^t)} \xrightarrow{k \rightarrow \infty} 0,\] i.e. $\frac{|S(x_0, k+1)|}{|B(x_0, k)|}$ converges to zero as $k\rightarrow \infty$.
\end{proof}

\subsection{The integer lattice}
From the previous two subsections it follows that respectively $(\mathbb{Z}^d, \norm{\cdot}_2)$ and $(\mathbb{Z}^d, \norm{\cdot}_1)$ have property~\eqref{E: Condition}, where we define \[\norm{v}_p := \left(\sum_{i=1}^d |v_i|^p\right)^{1/p}\]for $1 \leq p < \infty$ and for $p=\infty$ \[\norm{v}_\infty:= \sup_{i=1, \dots, d} \abs{v_i}.\]
Even though the base space of these two metric spaces is the same, the difference between these is the domain undergoing the thermodynamic limit in the definition of the DOS, which can make a difference as demonstrated in the preprint~\cite{AMZS_domains}. As mentioned, the existence of the DOS in the case $\mathbb{Z}$ has been established for a suitable family of Dirac operators~\cite{deOliveiraPrado2005}. Another example is that, using $(\mathbb{Z}^d, \norm{\;\cdot\;}_\infty)$ as a model, the existence of a surface DOS was established for a quantum model with a surface~\cite{EnglischKirsch1988}.

In fact, $(\mathbb{Z}^d, \norm{\cdot}_p)$ has property~\eqref{E: Condition} for all $1 \leq p \leq \infty$. This is simply because in the metric space $(\mathbb{Z}^d, \norm{\cdot}_p)$ we have $\abs{B(0,r)} = V_p(d) r^d + O(r^{d-1})$ where $V_p(d)$ denotes the volume of the $\ell_p$ unit ball in $\mathbb{R}^d$, and Proposition~\ref{P: CiprianiAsymptote} then implies condition~\eqref{E: Condition}. 

Since much of the mathematical research on the DOS on crystals is still restricted to $\mathbb{Z}^d$~\cite{HislopMuller2008, Pastur1980, Wegner1981} we will explicitly demonstrate the calculation of the factor $\tr_\omega(M_w)$ appearing in Theorem~\ref{T: Main}. The choice of $w(v) = (1+\|v\|_{\ell_p})^{-d}$ is convenient for calculations, although $\tilde{w}(v) = (1+|B(0, \|v\|_{\ell_p})|)^{-1}$ would work too as will be demonstrated in Lemma~\ref{L: w in general}. 

\begin{proposition}
Define $w(v) = (1+\|v\|_{\ell_p})^{-d}$. Then $w \in \ell_{1,\infty}(\Itgr^d)$, and $\tr_\omega(M_w) = V_p(d)$.
\begin{proof}
Observe that  \begin{align*}
    \lvert\{v\in \mathbb{Z}^d: (1+\norm{v}_p)^{-d} \geq t\}\rvert &= \lvert\{v \in \mathbb{Z}^d: \norm{v}_p \leq t^{-1/d} -1\}\rvert\\
    &= \lvert B(0,t^{-1/d}-1)\rvert \\
    &= O(t^{-1}),
\end{align*} and hence $w \in \ell_{1,\infty}(\Itgr^d)$ by~\eqref{F: w in weak l1 def}.

By a straightforward variant of Lemma~\ref{L: subsequence summation}, for a positive sequence $\{a_k\}_{k \in \mathbb{N}}$ and indices $\{k_1, k_2, ...\}$ such that $\frac{\log(2+k_{i+1})}{\log(2+k_i)} \to 1$ we have that $\frac{1}{\log(2+n)}\sum_{k=1}^n a_k$ converges if and only if $\frac{1}{\log(2+k_i)} \sum_{k=1}^{k_i} a_k$ does. Since clearly \[\frac{\log(2+|B(0,k+1)|)}{\log(2+|B(0,k)|)}\] converges to 1, to calculate $\frac{1}{\log(2+n)} \sum_{k =1}^n \mu(k,M_w)$ it suffices to compute \[\lim_{k \rightarrow \infty} \frac{1}{\log(2+\abs{B(0,k)})} \sum_{x \in B(0,k)} (1+\norm{x}_p)^{-d}.\]

Since on each shell $B(0,k) \setminus B(0,k-1)$ we have that $k-1 \leq \norm{x}_p \leq k$ by definition, it follows that
\begin{align*}
   \sum_{j=0}^k j^{-d} (\abs{B(0,j)} - \abs{B(0,j-1)}) &\leq \sum_{\norm{x}_p \leq k} (1+ \norm{x}_p)^{-d}\\
   &\leq \sum_{j=0}^k (1+j)^{-d} (\abs{B(0,j)} - \abs{B(0,j-1)}).
\end{align*}

First tackling the upper estimate note that by summation by parts
\begin{align*}
    \sum_{j=0}^k (1+j)^{-d} (\abs{B(0,j)} - \abs{B(0,j-1)}) &= \frac{\abs{B(0,k)}}{(1+k)^d} + \sum_{j=0}^{k-1} \abs{B(0,j)} ((1+j)^{-d} - (2+j)^{-d}).
\end{align*}
As \[\lim_{k\rightarrow \infty} \frac{1}{\log(2+\abs{B(0,k)})} \frac{\abs{B(0,k)}}{(1+k)^d} = 0,\]
let us calculate \begin{align*}
    \lim_{k \rightarrow \infty} \frac{1}{\log(2+ V_p(d) k^d)} &\sum_{j=0}^{k-1} V_p(d) j^d ((1+j)^{-d} - (2+j)^{-d}) \\
    = \lim_{k \rightarrow \infty} \frac{1}{d\log(k)} &\sum_{j=0}^{k-1} V_p(d) j^d ((1+j)^{-d} - (2+j)^{-d}).
\end{align*} 
Note that \begin{align*}
    \frac{j^d}{(1+j)^d} - \frac{j^d}{(2+j)^d} &= \frac{(j+2)^d-(j+1)^d}{(j+3+2/j)^d} \\
    &= \frac{d (j+1)^{d-1}}{(j+3+2/j)^d} + \frac{O(j+1)^{d-2}}{(j+3+2/j)^d},
\end{align*}
and hence \begin{align*}
    \lim_{k \rightarrow \infty} \frac{1}{d\log(k)} \sum_{j=0}^{k-1} V_p(d) j^d ((1+j)^{-d} - (2+j)^{-d}) &=V_p(d) \lim_{k \rightarrow \infty} \frac{1}{\log(k)} \sum_{j=0}^{k-1} \frac{ (j+1)^{d-1}}{(j+3+2/j)^d}\\
    &= V_p(d),
\end{align*}
implying that $\tr_\omega(M_w) \leq V_p(d)$. Following exactly the same steps for the lower estimate, we arrive at the conclusion that indeed $\tr_\omega(M_w) = V_p(d)$.
\end{proof}
\end{proposition}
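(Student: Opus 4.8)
The plan is to establish the two assertions separately. Since $M_w$ is a positive diagonal operator, its singular values $\mu(k,M_w)$ are precisely the values $w(v)=(1+\norm{v}_p)^{-d}$ listed in decreasing order, so everything reduces to an analysis of the scalar sequence $\{w(v)\}_{v\in\Itgr^d}$. For the membership $w\in\ell_{1,\infty}(\Itgr^d)$ I would use the distribution-function characterization of the weak-$\ell_1$ space: computing the super-level sets gives
\[
\abs{\{v\in\Itgr^d : (1+\norm{v}_p)^{-d}\geq t\}} = \abs{\{v : \norm{v}_p\leq t^{-1/d}-1\}} = \abs{B(0,t^{-1/d}-1)},
\]
and the asymptotic $\abs{B(0,r)}=V_p(d)\,r^d+O(r^{d-1})$ shows this is $O(t^{-1})$ as $t\to0^+$, equivalently $\mu(k,M_w)=O(1/k)$, which is exactly $w\in\ell_{1,\infty}(\Itgr^d)$.

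For the trace value I would use that any positive operator whose logarithmic Ces\`aro average $\frac{1}{\log(2+n)}\sum_{k=0}^{n}\mu(k,M_w)$ converges is Dixmier measurable, with every Dixmier trace returning the common limit; it therefore suffices to compute this limit and show it equals $V_p(d)$. The obstacle is that this sum runs over the decreasing rearrangement, whereas the natural combinatorial quantity is the sum over expanding balls. I would bridge the gap with the subsequence summation lemma (Lemma~\ref{L: subsequence summation}): since $\frac{\log(2+\abs{B(0,k+1)})}{\log(2+\abs{B(0,k)})}\to1$, it is enough to evaluate
\[
\lim_{k\to\infty}\frac{1}{\log(2+\abs{B(0,k)})}\sum_{x\in B(0,k)}(1+\norm{x}_p)^{-d}.
\]

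To compute this I would decompose the ball into shells $B(0,j)\setminus B(0,j-1)$, on which $j-1\leq\norm{x}_p\leq j$, and sandwich the sum between two expressions of the shape $\sum_{j}(c+j)^{-d}\bigl(\abs{B(0,j)}-\abs{B(0,j-1)}\bigr)$ with $c\in\{0,1\}$. Applying summation by parts rewrites each side as $\frac{\abs{B(0,k)}}{(1+k)^d}+\sum_{j}\abs{B(0,j)}\bigl((1+j)^{-d}-(2+j)^{-d}\bigr)$; the boundary term is $o(\log\abs{B(0,k)})$ and thus drops out of the logarithmic average, while inserting $\abs{B(0,j)}=V_p(d)\,j^d+O(j^{d-1})$ together with $(1+j)^{-d}-(2+j)^{-d}\sim d\,j^{-d-1}$ collapses the remaining sum to $V_p(d)\sum_{j}\tfrac{1}{j}$, whose logarithmic average tends to $V_p(d)$. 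Because the upper and lower sandwiching sums share this limit, the squeeze gives $\tr_\omega(M_w)=V_p(d)$.

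The main obstacle is the transition from the rearrangement sum to the ball sum. One must confirm that $\{\abs{B(0,k)}\}_k$ is an admissible index subsequence for Lemma~\ref{L: subsequence summation}, i.e. that consecutive ball cardinalities have asymptotically equal logarithms, and that the finitely many singular values squeezed between consecutive radii do not perturb the logarithmic average. The ensuing summation-by-parts asymptotics are routine in spirit but demand care to verify that the boundary term is genuinely of lower order and that the $O(j^{d-1})$ errors from the ball count contribute nothing to the logarithmically weighted limit.
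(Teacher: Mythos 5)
Your proposal is correct and follows essentially the same route as the paper's proof: the same super-level-set computation for membership in $\ell_{1,\infty}(\Itgr^d)$, the same reduction via the subsequence summation lemma from the decreasing rearrangement to sums over expanding balls, and the same shell decomposition, summation by parts, and insertion of $\abs{B(0,j)}=V_p(d)j^d+O(j^{d-1})$ to extract the limit $V_p(d)$. The only cosmetic difference is that you track the factor of $d$ slightly loosely (the summand is asymptotically $V_p(d)\,d/j$, which is compensated by dividing by $\log(2+\abs{B(0,k)})\approx d\log k$), but this is exactly how the paper's computation resolves as well.
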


\subsection{Quasicrystals}
Analogously to Subsection~\ref{SS: Crystals} one can consider graphs constructed from quasicrystals, but these will be aperiodic by definition~\cite{Jaric1989, Nelson1986}. For some investigations of the DOS on quasicrystals, see~\cite{Hof1993, Hof1995}.

On a case-by-case basis, there are some aperiodic tilings for which condition~\eqref{E: Condition} can be expected to hold on their vertex graph. Firstly, a Penrose tiling. This proposition is entirely based on recent results by A. Shutov and A. Maleev~\cite{ShutovMaleev2015, ShutovMaleev2018}.

\begin{proposition}
    Consider a 2D Penrose tiling in the construction of~\cite{ShutovMaleev2015}, which is a Penrose tiling with five-fold symmetry with respect to a chosen origin 0. Consider the graph induced by this tiling (i.e. with the same vertices and edges as the tiles) and take this graph with the shortest-path metric as the definition of the metric space $(X,d_X)$. Then this metric space has property~\eqref{E: Condition}.
\begin{proof}
For this particular tiling, Shutov and Maleev showed~\cite{ShutovMaleev2018} that \[|S(0, k)| = C(n)n + o(n)\] where $C(n)$, denoting $\tau = (1+\sqrt{5})/2$, takes a value between $10 \tau^{-2} \approx 3.8$ and $10 \tau^{-2} + (5/2)\tau^{-1} \approx 5.4$ depending on $n$. By Lemma~\ref{L: CondC}, we can then conclude that the vertex graph of this Penrose tiling satisfies condition~\eqref{E: Condition}. 
\end{proof}
\end{proposition}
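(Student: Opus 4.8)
The plan is to reduce the claim entirely to Lemma~\ref{L: CondC} by controlling the coordination sequence. Recall that for any graph equipped with the shortest-path metric one has $r_k = k$, so the set of attainable distances is just the nonnegative integers and property~\eqref{E: Condition} becomes a statement purely about the ordinary coordination sequence $\{|S(0,k)|\}_{k\in\mathbb{N}}$. Lemma~\ref{L: CondC} tells us precisely that it suffices to sandwich this sequence between two positive multiples of a single power $k^d$; the natural guess here is $d = 1$, reflecting the fact that a Penrose tiling fills a two-dimensional region and its shells $S(0,k)$ are one-dimensional ``annuli'' of vertices.

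First I would isolate the single external input that does all the real work, namely the asymptotic expansion of the coordination sequence for the specific five-fold-symmetric tiling of~\cite{ShutovMaleev2015}. The result of Shutov and Maleev~\cite{ShutovMaleev2018} is that $|S(0,k)| = C(k)\,k + o(k)$, where the coefficient $C(k)$ oscillates but remains trapped in the interval $[\,10\tau^{-2},\ 10\tau^{-2} + \tfrac{5}{2}\tau^{-1}\,]$ with $\tau = (1+\sqrt{5})/2$. Both endpoints of this interval are strictly positive, so I can fix constants $0 < C_1 < 10\tau^{-2}$ and $C_2 > 10\tau^{-2} + \tfrac{5}{2}\tau^{-1}$ together with a threshold $R$ beyond which the $o(k)$ error is dominated by the gaps $10\tau^{-2}-C_1$ and $C_2 - (10\tau^{-2}+\tfrac{5}{2}\tau^{-1})$, giving
\begin{equation*}
C_1 k < |S(0,k)| < C_2 k \qquad (k > R).
\end{equation*}

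With this two-sided linear bound in hand the conclusion is immediate: applying Lemma~\ref{L: CondC} with $d = 1$ yields property~\eqref{E: Condition} for $(X,d_X)$, since the hypothesis~\eqref{E: Coord seq cond1} is exactly what we have just verified.

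I expect the genuine obstacle to lie not in this final assembly but entirely in the cited coordination-sequence asymptotics. Unlike the crystalline case of Subsection~\ref{SS: Crystals}, where a free $\mathbb{Z}^d$ action forces the coordination sequence to be quasi-polynomial, the Penrose tiling is aperiodic and offers no such symmetry to exploit; the linear growth with an oscillating, non-convergent coefficient $C(k)$ must instead be extracted from the self-similar inflation structure of the tiling, which is precisely the delicate geometric analysis of~\cite{ShutovMaleev2015, ShutovMaleev2018}. The saving feature is that Lemma~\ref{L: CondC} demands only a crude two-sided bound rather than the existence of $\lim_k C(k)$, and it is exactly this robustness that allows property~\eqref{E: Condition} to survive the oscillation of the coefficient.
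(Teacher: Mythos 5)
Your proposal is correct and follows essentially the same route as the paper: cite the Shutov--Maleev asymptotics $|S(0,k)| = C(k)k + o(k)$ with $C(k)$ trapped between positive constants, extract the two-sided bound $C_1 k < |S(0,k)| < C_2 k$ for large $k$, and apply Lemma~\ref{L: CondC} with $d=1$. Your additional remark that the lemma only needs a crude sandwich rather than convergence of $C(k)$ is exactly the point the paper relies on.
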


\begin{remark}
Similar asymptotic behaviour can be expected when taking a different base point, or different Penrose tilings, but this is still an open problem. Numerical data supports this conjecture for Penrose tilings~\cite{BaakeGrimm2006}, as well as that this metric condition will be satisfied by aperiodic tilings like the Ammann-Beenker tiling~\cite{BaakeGrimm2006}, a certain class of quasi-periodic self-similar tilings~\cite{ShutovMaleev2010} and two-dimensional quasiperiodic Ito–Ohtsuki tilings~\cite{ShutovMaleev2008}. To the authors' knowledge, a quasiperiodic tiling has not yet been found that can serve as a counter-example to property~\eqref{E: Condition}.
\end{remark}

\subsection{Percolation}
A successful model for studying conduction properties of a crystal with impurities via the density of states is percolation (in this context also called quantum percolation)~\cite{AlexanderOrbach1982, ChayesChayes1986, KirschMuller2006, Veselic2005, YakuboNakayama1987}, see also~\cite[Section~13.2]{Grimmett1993} for a general approach. In broadest generality, percolation describes the study of a statistical procedure on a graph, which means adding, removing or otherwise manipulating edges or vertices based on some probabilistic method~\cite[Chapter~1]{Grimmett1993}.

Let us focus on so-called bond percolation on the graph $\mathbb{Z}^d$. Choosing some chance $0 \leq p \leq 1$, we declare each edge on the graph $\mathbb{Z}^d$ to be \textit{open} with probability $p$ (in an independent manner), and closed with probability $1-p$. It is a well-known fact that there exists a phase transition at a critical probability $p_c(d)$~\cite[Chapter~1]{Grimmett1993}. Namely, for $p>p_c(d)$ there exists almost surely a unique infinite cluster of vertices connected by open edges, while for $p<p_c(d$ almost surely all clusters of vertices that are connected by open edges are finite.

In the super-critical region, meaning $p > p_c(d)$, one can wonder if our metric condition holds on the infinite cluster, meaning that we take $(X,d_X)$ as the infinite cluster with induced shortest-path metric. This turns out to be true, which follows quite directly from a recent result by R. Cerf and M. Th\'eret~\cite{CerfTheret2016}.

\begin{proposition}
    Let $X$ be the (almost surely unique) infinite cluster on $\mathbb{Z}^d$ after super-critical bond percolation (with $p > p_c(d)$), and let $d_X$ be the shortest path metric on $X$. Then $(X,d_X)$ has property~\eqref{E: Condition}.
\begin{proof}
We first set up the exact situation as in~\cite{CerfTheret2016}. Denote the set of edges in our graph $\mathbb{Z}^d$ by $\mathbb{E}^d$, and consider a family of i.i.d. random variables $(t(e), e\in \mathbb{E}^d)$ taking values in $[0, \infty]$ (including $\infty$), with common distribution $F$. To be very precise, for each variable $t(e)$ we take $[0,\infty]$ as sample space, we define a $\sigma$-algebra by declaring $A \subseteq [0,\infty]$ measurable if $A \setminus \{\infty\}$ is Lebesgue measurable in $\mathbb{R}$, and $F: [0,\infty] \rightarrow [0,1]$ is a measurable function that defines the distribution of the variable $t(e)$.

These variables can be interpreted as the time it takes to travel along the corresponding edge. Then consider the random extended metric on $\mathbb{Z}^d$ by defining for $x, y\in \mathbb{Z}^d$ \[T(x, y) = \inf\{\sum_{e \in \gamma} t(e) : \gamma \text{ is a path from } x \text{ to } y \}.\] Note that if the distribution $F$ is such that $F(\{1\}) = p$, $F(\{\infty\}) = 1-p$, $T(x,y)$ is always either a positive integer or infinite and defines precisely the usual induced shortest-path metric on $\mathbb{Z}^d$ after bond percolating on $\mathbb{Z}^d$ with chance $p$. Also observe that generally, if $F([0,\infty)) > p_c(d)$, there exists almost surely a unique infinite connected cluster of vertices~\cite{CerfTheret2016}.

Now define $B(x_0, t) := \{y \in \mathbb{Z}^d : T(x_0 ,y)\leq t\}$. Then by~\cite[Theorem~5(ii)]{CerfTheret2016}, if $F([0,\infty)) > p_c(d)$, $F(\{0\}) < p_c(d)$ and $x_0$ is a vertex on the infinite cluster, as $t\rightarrow \infty$ we have almost surely \[\frac{|B(x_0, t)|}{t^d} \rightarrow C\] for some constant $C \in \mathbb{R}$. Returning to the distribution $F(\{1\}) = p$, $F(\{\infty\}) = 1-p$, we can then conclude that \[\frac{|B(x_0, k+1)|}{|B(x_0, k)|} \frac{k^d}{(k+1)^d} \rightarrow 1 \] as $k\rightarrow \infty$, and hence \[\frac{|B(x_0, k+1)|}{|B(x_0, k)|} \rightarrow 1. \qedhere\] \end{proof}
\end{proposition}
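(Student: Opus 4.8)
The plan is to reduce property~\eqref{E: Condition} to a volume-growth asymptotic for the metric balls and then to import that asymptotic from first-passage percolation via~\cite{CerfTheret2016}. Since $d_X$ is a shortest-path metric it is integer-valued, and as noted earlier every graph with the shortest-path metric has $r_k=k$; moreover $B(x_0,k)=B(x_0,t)$ for every $t\in[k,k+1)$. Hence condition~\eqref{E: Condition} is exactly the assertion
\[
\lim_{k\to\infty}\frac{|B(x_0,k+1)|}{|B(x_0,k)|}=1 .
\]

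First I would recast bond percolation in the first-passage percolation language of~\cite{CerfTheret2016}: assign i.i.d. edge weights $t(e)$ with common law $F$ satisfying $F(\{1\})=p$ and $F(\{\infty\})=1-p$, so that the first-passage time $T(x_0,\cdot)$ agrees on the infinite cluster with $d_X(x_0,\cdot)$ and the associated extended-metric ball agrees with $B(x_0,\cdot)$. I would then check the two hypotheses of~\cite[Theorem~5(ii)]{CerfTheret2016}: the percolation condition $F([0,\infty))=p>p_c(d)$ holds by assumption, and the non-degeneracy condition $F(\{0\})=0<p_c(d)$ is automatic. Applying that theorem yields, almost surely, the shape-theorem estimate
\[
\frac{|B(x_0,t)|}{t^d}\xrightarrow{t\to\infty} C
\]
for a constant $C>0$, the positivity coming from the fact that the supercritical infinite cluster has positive asymptotic density.

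The concluding step is purely elementary. Writing
\[
\frac{|B(x_0,k+1)|}{|B(x_0,k)|}=\frac{|B(x_0,k+1)|/(k+1)^d}{|B(x_0,k)|/k^d}\cdot\frac{(k+1)^d}{k^d},
\]
the first factor tends to $C/C=1$ and the second to $1$, which gives~\eqref{E: Condition}.

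The genuine analytic difficulty is concentrated entirely in the shape theorem, i.e.\ in establishing $|B(x_0,t)|/t^d\to C$; this is the hard input and is exactly what~\cite{CerfTheret2016} supplies, so the substance of the argument lies in identifying this external result and verifying its hypotheses after the percolation-to-first-passage translation. The one point in the elementary reduction that requires a moment's care is confirming $C>0$, so that the ratio of rescaled ball-volumes is of the form $C/C$ rather than an indeterminate $0/0$; this is guaranteed by the positive density of the infinite supercritical cluster.
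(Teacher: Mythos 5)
Your proposal is correct and follows essentially the same route as the paper: recast supercritical bond percolation as first-passage percolation with $F(\{1\})=p$, $F(\{\infty\})=1-p$, invoke~\cite[Theorem~5(ii)]{CerfTheret2016} for the almost-sure limit $|B(x_0,t)|/t^d\to C$, and divide consecutive terms. Your explicit remark that $C>0$ is needed (so the quotient is $C/C$ rather than $0/0$) is a point the paper leaves implicit, and your justification via the positive density of the infinite cluster is sound.
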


\begin{remark}
Site percolation is similar to bond percolation, but as the name suggests we would assign each vertex to be open with probability $p$ and closed with probability $1-p$~\cite[Section~1.6]{Grimmett1993}. Arguably, this is a more physically suitable model of an alloy. The argument used above to prove the cited result for bond percolation can be used for site percolation as well, but the authors are not aware of an explicit demonstration.
\end{remark}

\section{Preliminaries}\label{S: Preliminaries}
In this exposition of the preliminary material we follow~\cite{AMSZ}.
This material is standard and can be found in~\cite{LSZVol1,Simon2005}.
In this paper by a Hilbert space we mean a complex separable Hilbert space which we will denote by~$\hilb.$ 
Let $\clB(\clH)$ be the set of all bounded operators on $\clH$, and $\clK(\clH)$  the ideal of compact operators on~$\clH$. For $T\in \clK(\clH)$ its singular values $\mu(T) = \{\mu(k,T)\}_{k=0}^\infty$ are defined as:
\begin{equation*}
    \mu(k,T) = \inf\{\|T-R\|\;:\;\mathrm{rank}(R) \leq k\}.
\end{equation*}
Let $p \in (0,\infty).$ The Schatten class $\clL_p(\hilb)$ is the set of compact operators $T$ on $\hilb$ such that the sequence $\mu(T)$ is $p$-summable. For $p \geq 1,$ the $\clL_p$
norm is defined as:
\begin{equation*}
    \|T\|_p := \|\mu(T)\|_{\ell_p} = \left(\sum_{k=0}^\infty \mu(k,T)^p\right)^{1/p},
\end{equation*}
which turns $\clL_p$ into a Banach space and an ideal of $\clB(\clH)$.

The weak Schatten class $\clL_{p,\infty}$ is the set of compact operators $T$ such that $\mu(T)$ obeys:
\begin{equation*}
    \|T\|_{p,\infty} := \sup_{k\geq 0} (k+1)^{1/p}\mu(k,T) < \infty.
\end{equation*}
$\|T\|_{p,\infty}$ is a quasi-norm. 
The sequence space $\ell_{p,\infty}$ is the set of all sequences of complex numbers $a = \set{a_k}_{k=0}^\infty$ converging to zero such that 
$$
    \|a\|_{p,\infty} := \sup_{k\geq 0} (k+1)^{1/p} a^*_k < \infty,
$$
where $a^*_k$ is a non-increasing rearrangement of $\abs{a_k}.$
The set $\ell_{p,\infty}$ is called the weak $L_p$-space. A compact operator $T$ on $\hilb$ belongs to $\clL_{p,\infty}$ if and only if $\mu(T)$ belongs to $\ell_{p,\infty},$
in which case $\|T\|_{p,\infty} =     \|\mu(T)\|_{p,\infty}.$

Let $X$ be a countable set. 
The space $\ell_{1,\infty}(X)$ of functions consists
of those functions $w \colon X \to \Cplx$ 
such that the operator $M_w$ of multiplication by $w$ belongs to $\clL_{1,\infty}(\ell_2(X)).$
Equivalently, $w \in \ell_{1,\infty}(X)$ iff there exists $C>0$ such that for any $t>0$
\begin{equation} \label{F: w in weak l1 def}
   | \{v \in X \colon |w(v)| \geq t\}| \leq C t^{-1}.
\end{equation}

A linear functional $\varphi: \clL_{1,\infty} \rightarrow \mathbb{C}$ is called a trace if satisfies the property that $\varphi(BT ) = \varphi(TB)$ for all $T\in \clL_{1,\infty}$ and all bounded operators $B$. This condition is equivalent with requiring that $\varphi(U^*TU) = \varphi(T)$ for all $T \in \clL_{1,\infty}$ and all unitary operators $U$. 

On $\clL_{1,\infty}(\hilb)$ a particular kind of traces can be defined that are called Dixmier traces, see for example~\cite[Chapter~6]{LSZVol1}. A Dixmier trace is a trace on $\clL_{1,\infty}$ that is based on an extended limit $\omega \in \ell_\infty(\mathbb{N})^*$, by which we mean that $\omega$ is a continuous functional that extends the usual limit functional. They are defined by the formula \[\tr_\omega: T \mapsto  \omega\left(\left\{\frac{1}{\log(2+N)}\sum_{k=0}^N \lambda(k,T)  \right\}_{N=0}^\infty\right), \quad T\in \clL_{1,\infty},\] where $\{\lambda(k,T)\}$ is any eigenvalue sequence of $T$, ordered such that $\{|\lambda(k,T)|\}$ is non-increasing~\cite[Section~6.1]{LSZVol1}. Originally, dilation-invariance was required of the extended limit $\omega$, meaning that $\omega \circ \sigma_n = \omega$ for all $n\geq 1$ where $\sigma_n(\{a_j\}_{j=0}^\infty) = \{a_{\lfloor \frac{j}{n}\rfloor}\}_{j=0}^\infty$, but this assumption is redundant~\cite[Theorem~6.1.3]{LSZVol1}.

\section{Dixmier trace formula for the DOS}\label{S: Proof}
In this section we prove Theorem~\ref{T: Main} based on a series of lemmas. The cause of the restriction on the metric space $X$ as discussed in Section~\ref{S: Metric Condition} can be found in Lemma~\ref{L: subsequence summation}.

To start off, we will give a lemma that demonstrates the existence of a function $w$ such that $M_w$ has positive Dixmier trace as required by Theorem~\ref{T: Main}.

\begin{lemma} \label{L: w in general}
    Let $(X,d_X)$ be a countably infinite discrete metric space with the property that any ball contains finitely many points, and let $x_0 \in X$. Then the function
        $w(v) = (1+|B(x_0,d_X(v,x_0))|)^{-1}$
    satisfies the conditions of Theorem~\ref{T: Main}.
\end{lemma}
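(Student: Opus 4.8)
The plan is to verify directly the three hypotheses that Theorem~\ref{T: Main} places on the weight $w$: strict positivity, strict radial monotonicity, and membership in $\ell_{1,\infty}(X)$. The first two are immediate from the construction, while the third --- the weak-$\ell_1$ estimate --- is where the particular form of $w$ has been engineered so that the level-set computation comes out exactly.

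Strict positivity is clear, since $|B(x_0,r)| \geq 0$ for every $r$, whence $1 + |B(x_0, d_X(v,x_0))| \geq 1$ and so $0 < w(v) \leq 1$ for all $v \in X$. For strict radial monotonicity, I would first note that $w$ depends on $v$ only through $d_X(v,x_0)$, so it is radial by construction. To see that it strictly decreases in the distance, observe that the balls are strictly nested along the attained distances: for $j < k$ the shell $S(x_0, r_k) = B(x_0,r_k)\setminus B(x_0, r_{k-1})$ is nonempty, because $r_k$ is by definition a value actually taken by $d_X(\cdot, x_0)$, so some point lies at distance exactly $r_k$. Hence $B(x_0, r_j) \subsetneq B(x_0, r_k)$ and $|B(x_0, r_j)| < |B(x_0, r_k)|$; since $t\mapsto (1+t)^{-1}$ is strictly decreasing, $w(v)$ strictly decreases as $d_X(v,x_0)$ runs through the increasing values $r_k$.

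The substantive step is the $\ell_{1,\infty}$ membership, which I would establish through criterion~\eqref{F: w in weak l1 def}. Fixing $t > 0$ and unwinding the inequality, $w(v) \geq t$ is equivalent to $|B(x_0, d_X(v,x_0))| \leq t^{-1} - 1$. Because $|B(x_0, r_k)|$ is strictly increasing in $k$, the set of qualifying indices is an initial segment, so the superlevel set $\{v \in X : w(v) \geq t\}$ is precisely a ball $B(x_0, r_K)$, with $K$ the largest index satisfying $|B(x_0, r_K)| \leq t^{-1}-1$ (and the empty set when no index qualifies). Therefore $|\{v : w(v) \geq t\}| = |B(x_0, r_K)| \leq t^{-1} - 1 \leq t^{-1}$, which is exactly~\eqref{F: w in weak l1 def} with constant $C = 1$, giving $w \in \ell_{1,\infty}(X)$.

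I do not anticipate a genuine obstacle: the definition of $w$ is reverse-engineered so that its distribution function is governed by the ball cardinalities, making the weak-$\ell_1$ bound fall out with the optimal constant. The only point requiring care is confirming \emph{strict} rather than merely weak radial monotonicity, which rests on every shell $S(x_0, r_k)$ being nonempty --- and this is automatic, since each $r_k$ is by construction a distance that is actually attained.
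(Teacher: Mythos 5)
Your proof is correct and follows essentially the same route as the paper's: unwind the superlevel set $\{v : w(v)\geq t\}$ into a ball whose cardinality is bounded by $t^{-1}$, giving the weak-$\ell_1$ estimate with constant $1$. You are in fact slightly more thorough than the paper, which does not explicitly verify the strict radial monotonicity (your observation that every shell $S(x_0,r_k)$ is nonempty because each $r_k$ is an attained distance is the right justification) and which checks the distributional bound only at values $t^{-1}=|B(x_0,R)|$ rather than for all $t>0$ as you do.
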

\begin{proof}
    The assumption that any ball contains finitely many points ensures that $w > 0$. To see that $w \in \ell_{1,\infty}(X)$, note that for all $t > 0$ we have:
    \begin{align*}
        |\{v \in X\;:\; |w(v)| \geq t\}| &= |\{v \in X\;:\; 1+|B(x_0,d_X(v,x_0))| \leq t^{-1}\}|\\
                                         &\leq |\{v\in X\;:\; |B(x_0,d_X(v,x_0))| < t^{-1}\}|.
    \end{align*} 
    Therefore if $t^{-1} = |B(x_0,R)|$, we have:
    \begin{align*}
        |\{v\in X\;:\; |w(v)| \geq t\}| &\leq |\{v \in X\;:\; |B(x_0,d_X(v,x_0))| < |B(x_0,R)|\}|\\
                                        &\leq |B(x_0,R)| 
                                        = t^{-1}.
    \end{align*}
    Since $|X|$ is infinite, $t$ can be arbitrarily small and hence $w \in \ell_{1,\infty}(X)$. \qedhere
\end{proof}

We now present a modified Toeplitz lemma, which follows from much the same proof as in Shiryaev~\cite[Chapter IV, \S 3, Lemma 1]{Shiryaev1996}.

\begin{lemma}[Toeplitz lemma]\label{toeplitz_lemma}
    Let $\{c_n\}_{n=0}^\infty$ be a sequence of non-negative numbers, and let $\{z_n\}_{n=0}^\infty$ be a sequence of complex numbers such that $z_n\rightarrow L \in \Cplx$. 
    If $d_n = \sum_{k=0}^n c_k$ diverges, then:
    \begin{equation*}
        \sum_{k=0}^n c_k z_k = L d_n + o(d_n).
    \end{equation*} 
    as $n\to \infty$.
\end{lemma}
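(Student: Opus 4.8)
The plan is to prove the asymptotic
\[
    \sum_{k=0}^n c_k z_k = L d_n + o(d_n)
\]
by the standard Toeplitz/Kronecker approach, reducing the general case to the case $L=0$ by linearity. First I would write $z_k = L + \varepsilon_k$ where $\varepsilon_k \to 0$, so that
\[
    \sum_{k=0}^n c_k z_k = L \sum_{k=0}^n c_k + \sum_{k=0}^n c_k \varepsilon_k = L d_n + \sum_{k=0}^n c_k \varepsilon_k.
\]
It then suffices to show that $\sum_{k=0}^n c_k \varepsilon_k = o(d_n)$, i.e.\ that the weighted average $d_n^{-1}\sum_{k=0}^n c_k \varepsilon_k \to 0$ whenever $\varepsilon_k \to 0$, the weights $c_k$ are non-negative, and $d_n = \sum_{k=0}^n c_k \to \infty$.

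The key step is a standard $\varepsilon/2$ tail-splitting argument. Fix $\delta > 0$. Since $\varepsilon_k \to 0$, choose $N$ so that $|\varepsilon_k| < \delta$ for all $k > N$. Then split the sum at $N$:
\[
    \left| \frac{1}{d_n} \sum_{k=0}^n c_k \varepsilon_k \right| \leq \frac{1}{d_n}\sum_{k=0}^N c_k |\varepsilon_k| + \frac{\delta}{d_n}\sum_{k=N+1}^n c_k \leq \frac{1}{d_n}\sum_{k=0}^N c_k |\varepsilon_k| + \delta,
\]
where the last bound uses $\sum_{k=N+1}^n c_k \leq d_n$ and $c_k \geq 0$. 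The first term has a fixed numerator (independent of $n$) and denominator $d_n \to \infty$, so it tends to $0$ as $n \to \infty$ with $N$ fixed; hence its limsup is $0$ and the whole expression has $\limsup_{n\to\infty} \leq \delta$. Since $\delta > 0$ was arbitrary, the limit is $0$, which is exactly $\sum_{k=0}^n c_k \varepsilon_k = o(d_n)$.

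There is no serious obstacle here; the only points requiring care are the hypotheses that make the argument go through. The non-negativity of the $c_k$ is essential: it guarantees both $\sum_{k=N+1}^n c_k \leq d_n$ and that $d_n$ is monotonically non-decreasing, so that the divergence $d_n \to \infty$ lets the fixed initial block be washed out. I would remark explicitly that $L$ is allowed to be complex, but this causes no difficulty since the estimate is applied to $|\varepsilon_k|$ and the triangle inequality handles complex values verbatim. This is the modification alluded to in the statement relative to Shiryaev, where the quantities are typically real; the proof is otherwise identical.
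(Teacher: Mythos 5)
Your proof is correct and follows essentially the same route as the paper's: write $z_k = L + (z_k - L)$, split the weighted sum of $z_k - L$ at a fixed index beyond which $|z_k - L| < \delta$, and use $d_n \to \infty$ to wash out the fixed initial block before letting $\delta \to 0$. No substantive differences.
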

\begin{proof}
    Let $\varepsilon > 0$ and take $K>0$ sufficiently large such that if $k > K$ then $|z_k - L| < \varepsilon$. For any $n> K,$ 
    rewriting the left hand side of the equality above as
    \begin{align*}
        \sum_{k=0}^n c_k z_k &= \sum_{k=0}^n c_k L 
                           +\sum_{k=0}^K c_k(z_k - L) + \sum_{k=K+1}^n c_k(z_k - L),
    \end{align*}
    we see that 
    \begin{equation*}
        \left|\frac{1}{d_n}\sum_{k=0}^n c_k(z_k - L) \right| \leq \frac{1}{d_n}\sum_{k=0}^K c_k|z_k - L| + \varepsilon.
    \end{equation*}
    Since $d_n\to \infty$ as $n\to \infty,$ it follows that:
    \begin{equation*}
        \limsup_{n\to \infty}\left|\frac{1}{d_n} \sum_{k=0}^n c_k(z_k - L) \right| \leq \varepsilon.
    \end{equation*}
    Since $\varepsilon$ is arbitrary, we have:
    \begin{equation*}
        \sum_{k=0}^n c_k(z_k - L)  = o(d_n)
    \end{equation*}
    and this completes the proof.
\end{proof}

\begin{lemma}\label{toeplitz_corollary}
    Let $\{x_k\}_{k=0}^\infty$ and $L \in \Cplx$ be such that as $n\to\infty$,
    \begin{equation*}
        \sum_{k=0}^n x_k = L n+o(n).
    \end{equation*}
    Let $\{a_n\}_{n=0}^\infty$ be a sequence of non-negative numbers such that:
    \begin{enumerate}[{\rm (i)}]
        \item{} $\{a_n\}_{n=0}^\infty$ is non-increasing,
        \item{} $\sup_{k\geq 1} ka_k < \infty$. That is, the sequence is weak $\ell_1$,
        \item{} $b_n := \sum_{k=0}^n a_k$ diverges.
    \end{enumerate}
    Then   
    \begin{equation*}
        \sum_{k=0}^n a_kx_k = L b_n+o(b_n)
    \end{equation*}
    as $n\to \infty$.
\end{lemma}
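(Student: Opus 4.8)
The plan is to reduce the whole statement to the Toeplitz lemma (Lemma~\ref{toeplitz_lemma}) by means of summation by parts. Set $S_n := \sum_{k=0}^n x_k$, so that by hypothesis $S_n = Ln + \varepsilon_n$ with $\varepsilon_n = o(n)$. Writing $x_k = S_k - S_{k-1}$ (with the convention $S_{-1} := 0$) and applying Abel summation, I would first rewrite
\[
  \sum_{k=0}^n a_k x_k = a_n S_n + \sum_{k=0}^{n-1}(a_k - a_{k+1}) S_k .
\]
Substituting $S_k = Lk + \varepsilon_k$ then splits the right-hand side into a leading part carrying the factor $L$ and a remainder part built from the error terms $\varepsilon_k$. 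The argument reduces to two claims: the leading part equals $L b_n + o(b_n)$, and the remainder part is $o(b_n)$.

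For the leading part I would carry out a short telescoping computation, which gives the identity
\[
  a_n n + \sum_{k=0}^{n-1}(a_k - a_{k+1}) k = b_{n-1} - a_0 + a_n = b_n - a_0 ,
\]
using $b_n = b_{n-1} + a_n$. Hence the leading part is $L(b_n - a_0) = L b_n + O(1)$, which is $L b_n + o(b_n)$ since $b_n \to \infty$ by assumption (iii).

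The remainder is $a_n \varepsilon_n + \sum_{k=0}^{n-1}(a_k - a_{k+1}) \varepsilon_k$. The boundary term is harmless: assumption (ii) yields $a_n = O(1/n)$, so $a_n \varepsilon_n = O(1/n)\cdot o(n) = o(1) = o(b_n)$. The substance is the sum, and here the key move is the factorization $(a_k - a_{k+1}) \varepsilon_k = c_k \delta_k$, where $c_k := (a_k - a_{k+1}) k \geq 0$ (nonnegative by the monotonicity in assumption (i)) and $\delta_k := \varepsilon_k / k \to 0$. I would then apply the Toeplitz lemma to the nonnegative weights $\{c_k\}$ against the null sequence $\{\delta_k\}$. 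This requires that $d_n := \sum_{k=0}^{n-1} c_k$ diverge; the same telescoping identity as above shows $d_n = b_{n-1} - a_0 - (n-1)a_n = b_n - a_0 - n a_n$, and since $n a_n$ is bounded (assumption (ii)) while $b_n \to \infty$, we obtain both $d_n \to \infty$ and $d_n \sim b_n$. The Toeplitz lemma with limit $0$ then gives $\sum_{k=0}^{n-1} c_k \delta_k = o(d_n) = o(b_n)$.

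Combining the two parts yields $\sum_{k=0}^n a_k x_k = L b_n + o(b_n)$, as claimed. I expect the remainder sum to be the only delicate point: the crux is the repackaging of the decay $\varepsilon_k = o(k)$ into the null sequence $\delta_k$ tested against the weights $c_k = (a_k - a_{k+1})k$, whose partial sums are asymptotically $b_n$ — precisely the setting of Lemma~\ref{toeplitz_lemma}. Everything else is bookkeeping driven by the two identical telescoping identities for $\sum (a_k - a_{k+1})k$ and the weak-$\ell_1$ bound $a_n = O(1/n)$.
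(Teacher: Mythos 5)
Your proof is correct and follows essentially the same route as the paper's: Abel summation to pass to the partial sums $S_k$, followed by an application of the Toeplitz lemma to the non-negative weights $k(a_k-a_{k+1})$, whose partial sums are $b_n - a_0 - na_n \sim b_n$. The only cosmetic difference is that you first subtract off the exact leading term $L(b_n-a_0)$ and apply Toeplitz with limit $0$ to $\varepsilon_k/k$, whereas the paper applies it with limit $L$ directly to $S_{k-1}/k$; the two bookkeepings are equivalent.
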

\begin{proof}
    Let $y_n = \sum_{k=0}^{n-1} x_k$ with $y_{0}=0$. Abel's summation formula gives
    \begin{align*}
        \sum_{k=0}^n a_kx_k &= \sum_{k=0}^n a_k(y_{k+1}-y_{k})\\
                            &= y_{n+1}a_n-\sum_{k=1}^n (a_k-a_{k-1})y_{k}\\
                            &= y_{n+1}a_n+\sum_{k=1}^n \frac{y_k}{k}\cdot k(a_{k-1}-a_{k}).
    \end{align*}
    By assumption (i) we have $a_{k-1} \geq a_k$ so the sequence $c_k := k(a_{k-1}-a_{k})$ is non-negative, and moreover as $n\to\infty$,
    \begin{equation*}
        \sum_{k=1}^n c_k = b_{n-1}-na_n\rightarrow \infty,
    \end{equation*}
    since by assumption (iii) $b_n \to \infty$ and by assumption (ii) $na_n$ is bounded.
    Therefore Lemma~\ref{toeplitz_lemma} applies to $c_k$ and $z_k:= \frac{y_k}{k}$, since by assumption $\lim_{k\to\infty} \frac{y_k}{k} = L$, and hence it follows that
    \begin{equation*}
        \sum_{k=1}^n \frac{y_k}{k}\cdot k(a_{k-1}-a_k) = L (b_{n-1}-na_n)+o(b_{n-1}-na_n)
    \end{equation*} 
    as $n\to\infty$.
    Thus,
    \begin{align*}
        \sum_{k=0}^n a_kx_k  & = \frac{y_{n+1}}{n}\cdot na_n + L(b_{n-1}-na_n)+o(b_{n-1}-na_n)  \\
           & = L b_{n-1}+o(b_{n-1}-na_n),
    \end{align*}
    where in the last equality we have absorbed the vanishing term $\brs{\frac{y_{n+1}}{n} - L} na_n$ into $o(b_{n-1}-na_n).$ 
    Since by assumption (ii) the sequence $\{na_n\}_{n=1}^\infty$ is bounded, it follows that:
    \begin{equation*}
        \sum_{k=0}^n a_kx_k = L b_n+o(b_n).\qedhere
    \end{equation*}
\end{proof}

We could also write the result of Lemma~\ref{toeplitz_corollary} as:
\begin{equation*}
    \lim_{n\to \infty} \frac{\sum_{k=0}^n x_k}{\sum_{k=0}^n 1} = \lim_{n\to \infty} \frac{\sum_{k=0}^n a_kx_k}{\sum_{k=0}^n a_k}
\end{equation*}
whenever the left hand side exists and $\{a_k\}_{k=0}^\infty$ satisfies the stated assumptions.

\bigskip
Our next aim is to prove Lemma~\ref{expectation_values_lemma}, which is the crux of the proof of Theorem~\ref{T: Main}.
The proof is based on the notion of a $V$-modulated operator from~\cite{KaltonLord2013} or \cite[Section 7.3]{LSZVol1}. 
\begin{definition} \label{D: V-modulated}
If $V$ is a positive bounded linear operator
on a Hilbert space, and~$T$ is a bounded linear operator on the same space, we say that $T$ is \emph{$V$-modulated} if:
\begin{equation}   \label{def V-modulated}
    \sup_{t > 0} t^{1/2}\|T(1+tV)^{-1}\|_2 < \infty.
\end{equation}
\end{definition}
As can be seen from the definition, 
a $V$-modulated operator is necessarily Hilbert-Schmidt.
The importance of $V$-modulated operators comes from the following theorem, see~\cite[Theorem 7.1.3]{LSZVol1}.
\begin{theorem}\label{T: Modulated}
If $V \in \Lc_{1,\infty}(\hilb)$ is strictly positive, $T$ is $V$-modulated and $\{e_k\}_{k=0}^\infty$ is an orthonormal basis
such that $Ve_k = \mu(k,V)e_k$, then as $n\to \infty$,
\begin{equation}\label{original_expectation_values}
    \sum_{k=0}^n \lambda(k,T) = \sum_{k=0}^n \langle e_k,Te_k\rangle +O(1),
\end{equation}
where  $\{\lambda(k,T)\}_{k=0}^\infty$ are the eigenvalues of $T$ ordered with non-increasing absolute value. 
\end{theorem}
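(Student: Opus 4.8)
The plan is to realize both sides of \eqref{original_expectation_values} as traces of $T$ against rank-$(n+1)$ orthogonal projections, and then to estimate the difference of these two traces. For the right-hand side, let $P_n$ be the orthogonal projection onto $\operatorname{span}\{e_0,\dots,e_n\}$; since $P_n e_j = e_j$ for $j \le n$ and $P_n e_j = 0$ otherwise, one has $\sum_{k=0}^n \langle e_k, Te_k\rangle = \Tr(TP_n)$. For the left-hand side, I would invoke Ringrose's triangular-truncation theorem for compact operators to produce an orthonormal basis $\{g_k\}$ in which $T$ is upper-triangular with $\langle g_k, Tg_k\rangle = \lambda(k,T)$ and the diagonal listed in order of non-increasing modulus; writing $Q_n$ for the projection onto $\operatorname{span}\{g_0,\dots,g_n\}$, upper-triangularity gives $TQ_n = Q_n T Q_n$, whence $\sum_{k=0}^n \lambda(k,T) = \Tr(TQ_n)$. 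The theorem thus reduces to the claim that $\Tr\bigl((Q_n-P_n)T\bigr) = O(1)$.

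The workhorse is a Hilbert--Schmidt tail bound extracted from the modulation hypothesis \eqref{def V-modulated}. Choosing $t = \mu(n,V)^{-1}$ (valid since $V$ is strictly positive) and noting that $(1+tV)(1-P_n)$ acts on each $e_k$ with $k>n$ by multiplication by $1+t\mu(k,V) \le 1+t\mu(n,V) = 2$, so that $\norm{(1+tV)(1-P_n)} \le 2$, I would factor $T(1-P_n) = \bigl(T(1+tV)^{-1}\bigr)(1+tV)(1-P_n)$ to obtain $\norm{T(1-P_n)}_2 \le 2\,\mu(n,V)^{1/2}\sup_{s>0}s^{1/2}\norm{T(1+sV)^{-1}}_2$. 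Since $V \in \Lc_{1,\infty}$ forces $\mu(n,V) = O((n+1)^{-1})$, this yields the decay $\norm{T(1-P_n)}_2 = O((n+1)^{-1/2})$.

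With this estimate I would split $\Tr\bigl((Q_n-P_n)T\bigr) = \Tr\bigl((Q_n-P_n)T(1-P_n)\bigr) + \Tr\bigl((Q_n-P_n)TP_n\bigr)$. Because $Q_n-P_n$ is self-adjoint of rank at most $2(n+1)$, we have $\norm{Q_n-P_n}_2 \le \sqrt{2(n+1)}$, so Cauchy--Schwarz for the Hilbert--Schmidt inner product bounds the first term by $\norm{Q_n-P_n}_2\,\norm{T(1-P_n)}_2 = O(1)$ uniformly in $n$. The remaining term $\Tr\bigl((Q_n-P_n)TP_n\bigr)$ is the crux, and it is exactly here that the naive bound fails: since a $V$-modulated operator is Hilbert--Schmidt, $\norm{TP_n}_2$ stays bounded below, and pairing it against $\norm{Q_n-P_n}_2 \sim \sqrt{n}$ loses all decay. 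Controlling this term is the heart of Kalton's modulated-operator theory: one must exploit the cancellation carried by the trace-zero, rank-matched difference $Q_n-P_n$ together with the slow $O(1/k)$ variation of the eigenvalues of $V$. Concretely I would isolate the strictly lower-triangular part of $T$ in the $V$-basis and run an Abel summation along the nest $\{P_n\}$, reducing the estimate to a telescoping series whose increments are governed by $\norm{T(1-P_n)}_2$ and $\mu(n,V)$; proving that this series converges is the main obstacle, and is precisely the content established in~\cite{KaltonLord2013} and~\cite[Theorem~7.1.3]{LSZVol1}.
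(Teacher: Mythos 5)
The first thing to note is that the paper does not prove this statement at all: Theorem~\ref{T: Modulated} is imported verbatim by citation from~\cite[Theorem~7.1.3]{LSZVol1}, so there is no internal proof to compare against. Judged as a standalone argument, your proposal correctly assembles the standard skeleton: the identification $\sum_{k=0}^n\langle e_k,Te_k\rangle=\Tr(TP_n)$, the Schur/Ringrose triangularization giving $\sum_{k=0}^n\lambda(k,T)=\Tr(TQ_n)$, and the tail estimate $\norm{T(1-P_n)}_2=O((n+1)^{-1/2})$ obtained by taking $t=\mu(n,V)^{-1}$ in the modulation condition \eqref{def V-modulated} are all correct (the last one is genuinely the right first lemma, and your bound $\norm{(1+tV)(1-P_n)}\leq 2$ is exactly how it is done). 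The Cauchy--Schwarz treatment of $\Tr\bigl((Q_n-P_n)T(1-P_n)\bigr)$ is also sound.

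However, the proof is not complete, and the gap is not a technicality: the term $\Tr\bigl((Q_n-P_n)TP_n\bigr)$ that you leave open is essentially the entire content of the theorem, and you ultimately dispose of it by citing the very result you are asked to prove. The closing sketch (``isolate the lower-triangular part and run an Abel summation along the nest'') is too vague to be checkable and does not obviously converge to a proof: the arguments in~\cite{KaltonLord2013} and~\cite[Section~7.3]{LSZVol1} do not close this term by a single summation by parts, but require several further quantitative lemmas specific to modulated operators --- for instance a bound of the form $\sum_{k=n}^{2n}\abs{\langle e_k,Te_k\rangle}=O(1)$ and trace-norm (not merely Hilbert--Schmidt) control of corner truncations such as $P_nT(1-P_{2n})$ --- before the eigenvalue sums and the diagonal sums can be matched to within $O(1)$. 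So your submission should be read as a correct reduction of the theorem to its hardest estimate, together with an honest acknowledgement that this estimate is taken from the literature; in the context of this paper that is no worse than what the authors themselves do, but it is not a proof.
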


Note that if $V \in \Lc_{1,\infty}(\hilb)$, then $V$ is automatically $V$-modulated~\cite[Lemma 7.3.4]{LSZVol1}, and if $A$ is bounded and $T$ is $V$-modulated, then $AT$ is $V$-modulated,
which directly follows from \eqref{def V-modulated}.
Changing notation slightly, it follows that if $0< W \in \Lc_{1,\infty}(\hilb)$ and $T$ is bounded, then $TW$ is $W$-modulated.
Applying \eqref{original_expectation_values} to this special case, we have as $n\to\infty$,
\begin{equation}\label{expectation_values_sum}
    \sum_{k=0}^n \lambda(k,TW) = \sum_{k=0}^n \langle e_k,Te_k\rangle \mu(k,W) + O(1)
\end{equation}
for all bounded operators $T$.

\begin{lemma}\label{expectation_values_lemma}
    Let $0 < W \in \Lc_{1,\infty}(\hilb) \setminus \Lc_1(\hilb),$ and let $\{e_k\}_{k=0}^\infty$ be an orthonormal basis such that $We_k= \mu(k,W)e_k$ for all $k\geq 0$. If $T$ is a bounded operator such that
    \begin{equation*}
        \sum_{k=0}^n \langle e_k, Te_k\rangle = Ln+o(n),\quad n\to\infty
    \end{equation*}
    where $L \in \Cplx$
    then:
    \begin{equation} \label{F: the first assertion}
        \sum_{k=0}^n \lambda(k,TW) = L\left(\sum_{k=0}^n \mu(k,W)\right)+o\left(\sum_{k=0}^n \mu(k,W)\right).
    \end{equation}
    Moreover,
    \begin{equation*}
        \tr_\omega(TW) = \tr_\omega(W)L
    \end{equation*}
    for all extended limits $\omega$.
\end{lemma}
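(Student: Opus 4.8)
The lemma has two assertions: the summation estimate \eqref{F: the first assertion}, and the resulting equality of Dixmier traces. I would first establish \eqref{F: the first assertion} by combining the $V$-modulated machinery with the abstract Toeplitz-type result already proved in Lemma~\ref{toeplitz_corollary}, and then deduce the trace identity by feeding the partial-sum asymptotics into the definition of $\tr_\omega$.

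**Step 1: from eigenvalue sums to diagonal sums.** Since $0 < W \in \Lc_{1,\infty}(\hilb)$ and $T$ is bounded, the remark preceding the lemma tells us that $TW$ is $W$-modulated, and equation \eqref{expectation_values_sum} gives, as $n\to\infty$,
\begin{equation*}
    \sum_{k=0}^n \lambda(k,TW) = \sum_{k=0}^n \langle e_k, Te_k\rangle\, \mu(k,W) + O(1).
\end{equation*}
Thus it suffices to understand the weighted diagonal sum $\sum_{k=0}^n \langle e_k,Te_k\rangle\,\mu(k,W)$, and the $O(1)$ error is harmless because $\sum_{k=0}^n \mu(k,W)$ diverges (this is exactly where the hypothesis $W \notin \Lc_1(\hilb)$ is used).

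**Step 2: apply the weighted Toeplitz lemma.** Set $x_k = \langle e_k, Te_k\rangle$ and $a_k = \mu(k,W)$. The hypothesis on $T$ says $\sum_{k=0}^n x_k = Ln + o(n)$, which is the input of Lemma~\ref{toeplitz_corollary}. I must check the three conditions on $\{a_k\}$: (i) $\mu(k,W)$ is non-increasing by definition of singular values; (ii) $\sup_k k\,\mu(k,W) < \infty$ is precisely the statement $W \in \Lc_{1,\infty}$; and (iii) $b_n = \sum_{k=0}^n \mu(k,W) \to \infty$ is precisely $W \notin \Lc_1$. Lemma~\ref{toeplitz_corollary} then yields $\sum_{k=0}^n a_k x_k = L\,b_n + o(b_n)$, and combining with Step~1 (absorbing the $O(1)$ into $o(b_n)$ since $b_n\to\infty$) gives \eqref{F: the first assertion}.

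**Step 3: the trace identity.** Recall $\tr_\omega(A) = \omega\big(\{\tfrac{1}{\log(2+N)}\sum_{k=0}^N \lambda(k,A)\}_N\big)$. Dividing \eqref{F: the first assertion} through by $\log(2+n)$ shows that the sequence $\frac{1}{\log(2+n)}\sum_{k=0}^n \lambda(k,TW)$ differs from $L\cdot\frac{1}{\log(2+n)}\sum_{k=0}^n \mu(k,W)$ by a term that tends to $0$. Since $\omega$ is an extended limit, it is linear and vanishes on null sequences, so applying $\omega$ gives $\tr_\omega(TW) = L\,\omega\big(\{\tfrac{1}{\log(2+N)}\sum_{k=0}^N \mu(k,W)\}_N\big) = L\,\tr_\omega(W)$, where the last equality uses that the eigenvalues of the positive operator $W$ are its singular values $\mu(k,W)$. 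The main obstacle is really just the careful bookkeeping in Step~2 — verifying that the divergence of $b_n$ both licenses absorbing the $O(1)$ error and is available from $W\notin\Lc_1$ — after which the trace identity is a formal consequence of the linearity and null-sequence insensitivity of extended limits.
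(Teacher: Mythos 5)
Your proposal is correct and follows essentially the same route as the paper's proof: establish \eqref{F: the first assertion} by combining the $W$-modulated estimate \eqref{expectation_values_sum} with Lemma~\ref{toeplitz_corollary} applied to $a_k = \mu(k,W)$ and $x_k = \langle e_k,Te_k\rangle$, then divide by $\log(2+n)$ and apply the extended limit. The only difference is that you spell out the verification of the three hypotheses of Lemma~\ref{toeplitz_corollary}, which the paper leaves implicit.
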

\begin{proof}
By the assumption $0 < W \in \Lc_{1,\infty} \setminus \Lc_1,$
    Lemma~\ref{toeplitz_corollary} applies with $a_k = \mu(k,W)$ and $x_k = \langle e_k,Te_k\rangle$, so
    \begin{equation*}
        \sum_{k=0}^n \langle e_k,Te_k\rangle\mu(k,W) = L\left(\sum_{k=0}^n \mu(k,W)\right)+o\left(\sum_{k=0}^n \mu(k,W)\right),\, n\to\infty.
    \end{equation*}
    Thus \eqref{expectation_values_sum} yields the first equality \eqref{F: the first assertion}.

    To obtain the result concerning Dixmier traces, we divide both sides of \eqref{F: the first assertion} by $\log(n+2)$ to get:
    \begin{equation*}
        \frac{1}{\log(n+2)}\sum_{k=0}^n \lambda(k,TW)  = L\left(\frac{1}{\log(n+2)}\sum_{k=0}^n \mu(k,W)\right)+o(1).
    \end{equation*}
    Thus if $\omega$ is an extended limit,
    \begin{equation*}
        \omega\left(\left\{\frac{1}{\log(n+2)}\sum_{k=0}^n \lambda(k,TW)\right\}_{n=0}^\infty\right) = L\tr_\omega(W).
    \end{equation*}
    The left hand side is exactly the Dixmier trace $\tr_\omega(TW)$.
\end{proof}

\begin{remark}
The result of Lemma~\ref{expectation_values_lemma} can be written in a different way. 
We could say that:
\begin{equation}\label{feq}
    \tr_\omega(TW) = \tr_\omega(W) \lim_{n\to \infty} \frac{1}{n+1}\sum_{k=0}^n \langle e_k,Te_k\rangle
\end{equation}
whenever the right hand side exists.
\end{remark}

We now claim that
\begin{equation}\label{seq}
    \tr_\omega(TW) = \tr_\omega(W) \lim_{n\to \infty} \frac{1}{|\{k\;:\;\mu(k,W)\geq \varepsilon_n\}|}\sum_{\{k\;:\;\mu(k,W)\geq \varepsilon_n\}} \langle e_k,Te_k\rangle
\end{equation}
is an equivalent formulation for any decreasing, strictly positive sequence $\varepsilon_n \to 0$ such that the sets $\{k\;:\;\mu(k,W)\geq \varepsilon_n\}$ satisfy a certain growth condition.

\begin{lemma}\label{L: subsequence summation}
Let $\{a_k\}_{k \in \mathbb{N}} \subseteq \mathbb{R}$ be a bounded sequence and let $\{k_1, k_2, \dots \}$ be an infinite, increasing sequence of positive integers such that \[\lim_{n\to \infty} \frac{k_{n+1}}{k_n} = 1.\] Then $\lim_{i \to \infty} \frac{1}{1+k_i} \sum_{k=1}^{k_i} a_k$ exists if and only if $\lim_{n\to \infty} \frac{1}{1+n} \sum_{k=1}^n a_k$ exists. 
\begin{proof}
Put $b_n := \frac{1}{1+n} \sum_{k=1}^n a_k$ and note that trivially $\{b_{k_i}\}_{i \in \mathbb{N}}$ converges if $\{b_n\}_{n\in \mathbb{N}}$ does.

To prove the converse, suppose that $\{b_{k_i}\}$ converges to limit $L$. Let $M \in \mathbb{R}$ be such that $\abs{a_k} \leq M$ for all $k$, then by shifting to the sequence $\{a_k + M\}_{k \in \mathbb{N}}$ we can assume without loss of generality that $\{a_k\}_{k\in \mathbb{N}}$ is a positive sequence. 

Define for each $n$ the integer $k_{i_n} := \max\{k_{i} \leq n : i\in \mathbb{N} \}.$
We will now prove that \[|b_n - b_{k_{i_n}}| \xrightarrow{n \to \infty} 0,\] as this would mean that \[\abs{|b_n - L| - |L - b_{k_{i_n}}|} \leq |b_n - b_{k_{i_n}}|\xrightarrow{n \to \infty} 0,\] which forces that $\{b_n\}$ converges to $L$ as well.

Observe that \begin{align*}
    \frac{1}{1+n} \sum_{k=1}^n a_k -  \frac{1}{1+k_{i_n}} \sum_{k=1}^{k_{i_n}} a_k 
    & \leq \frac{1}{1+k_{i_n}} \sum_{k=1}^{k_{i_n+1}} a_k -  \frac{1}{1+k_{i_n}} \sum_{k=1}^{k_{i_n}} a_k\\
    &= \frac{1+k_{i_n+1}}{1+k_{i_n}} b_{k_{i_n+1}} - b_{k_{i_n}}\\
    &\xrightarrow{n\to\infty} 0;\\
    \frac{1}{1+k_{i_n}} \sum_{k=1}^{k_{i_n}} a_k-\frac{1}{1+n} \sum_{k=1}^n a_k & \leq \frac{1}{1+k_{i_n}} \sum_{k=1}^{k_{i_n}} a_k-\frac{1}{1+k_{i_n+1}} \sum_{k=1}^{k_{i_n}} a_k\\
    &= \left(1 - \frac{1+k_{i_n}}{1+k_{i_n+1}} \right) b_{k_{i_n}}\\
    & \xrightarrow{n\to \infty} 0.\qedhere
\end{align*}
\end{proof}
\end{lemma}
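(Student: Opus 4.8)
The forward implication is immediate: writing $b_n := \frac{1}{1+n}\sum_{k=1}^n a_k$, if $\lim_n b_n$ exists then every subsequence, in particular $\{b_{k_i}\}_i$, converges to the same value. So the entire content lies in the converse, and my plan is to show that convergence of $\{b_{k_i}\}$ along the sparse index set $\{k_i\}$ forces convergence of the full sequence $\{b_n\}$.

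My first move would be a normalisation. Since $\{a_k\}$ is bounded, say by $M$, replacing $a_k$ with $a_k + M$ changes $b_n$ into $b_n + \frac{nM}{1+n}$, and $\frac{nM}{1+n}\to M$; hence $\{b_n\}$ converges if and only if the shifted sequence does, and the same is true along the indices $\{k_i\}$. This lets me assume $a_k \geq 0$, so that the partial sums $S_n := \sum_{k=1}^n a_k$ are non-decreasing. This monotonicity is the structural fact that makes the subsequent sandwich argument possible, and it is the reason the nonnegativity reduction is worth making.

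Assume now $b_{k_i}\to L$. For each $n$, let $k_{i_n}$ be the largest sampled index not exceeding $n$, so that $k_{i_n}\leq n < k_{i_n+1}$. Because the $k_i$ are increasing integers tending to infinity, $k_{i_n}\to\infty$ and hence $i_n\to\infty$ as $n\to\infty$; consequently $b_{k_{i_n}}\to L$ and $b_{k_{i_n+1}}\to L$, while the hypothesis $k_{m+1}/k_m\to 1$ (applied with $m=i_n$) gives $\frac{1+k_{i_n+1}}{1+k_{i_n}}\to 1$. Using the monotonicity chain $S_{k_{i_n}}\leq S_n\leq S_{k_{i_n+1}}$ together with $k_{i_n}\leq n\leq k_{i_n+1}$, I would estimate the two one-sided differences: $b_n - b_{k_{i_n}}$ is bounded above by $\frac{1+k_{i_n+1}}{1+k_{i_n}}\,b_{k_{i_n+1}} - b_{k_{i_n}}$, and $b_{k_{i_n}} - b_n$ is bounded above by $\bigl(1 - \frac{1+k_{i_n}}{1+k_{i_n+1}}\bigr)b_{k_{i_n}}$. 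Both expressions tend to $0$ by the limits just recorded, so $|b_n - b_{k_{i_n}}|\to 0$, and combined with $b_{k_{i_n}}\to L$ this yields $b_n\to L$.

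The one delicate point I anticipate is exactly this sandwich step. It works only after the terms have been made non-negative, and it uses the density condition $k_{n+1}/k_n\to 1$ in an essential way: were the sampled indices allowed to grow geometrically, the Cesàro averages could oscillate substantially between consecutive samples and the implication would fail. The heart of the matter is thus recognising that condition~\eqref{E: Condition}, under the correspondence $k_i\leftrightarrow |B(x_0,r_i)|$, is precisely the index-density hypothesis that controls this drift.
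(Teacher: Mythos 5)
Your proof is correct and follows essentially the same route as the paper's: the same reduction to non-negative terms via the shift $a_k \mapsto a_k + M$, the same choice of $k_{i_n}$ as the largest sampled index not exceeding $n$, and the same two one-sided sandwich estimates exploiting the monotonicity of the partial sums and the hypothesis $k_{n+1}/k_n \to 1$. No gaps.
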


This lemma, combined with \eqref{feq} immediately implies the following proposition.

\begin{proposition}\label{general_DOS_proposition}
    If $T$ is a bounded linear operator and $0 < W \in \Lc_{1,\infty}$ such that \[\lim_{n\to\infty} \frac{|\{ k\geq 0 : \mu(k,W) \geq \varepsilon_{n+1} \}|}{|\{ k\geq 0 : \mu(k,W) \geq \varepsilon_n \}|} = 1.\] for some decreasing, strictly positive sequence $\varepsilon_n \to 0$, then for all extended limits $\omega$ we have 
    \begin{equation}\label{abstract_DOS}
        \tr_\omega(TW) = \tr_\omega(W)\lim_{n\to \infty} \frac{\tr(T\chi_{[\varepsilon_n,\infty)}(W))}{\tr(\chi_{[\varepsilon_n,\infty)}(W))},
    \end{equation}
    whenever the limit on the right hand side exists.
\end{proposition}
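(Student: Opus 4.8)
The plan is to read off the right-hand ratio in \eqref{abstract_DOS} as an ordinary Ces\`aro average of the diagonal matrix elements $\langle e_k, Te_k\rangle$ over an \emph{initial} segment of the eigenbasis of $W$, and then to invoke Lemma~\ref{L: subsequence summation} to identify that subsequential average with the full Ces\`aro limit already appearing in \eqref{feq}. Before doing so I would dispose of the degenerate case $W\in\Lc_1$: then both $W$ and $TW$ are trace class, so $\tr_\omega(W)=\tr_\omega(TW)=0$ and \eqref{abstract_DOS} holds trivially. Thus I may assume $W\in\Lc_{1,\infty}\setminus\Lc_1$, which is exactly the regime in which \eqref{feq} is valid.

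Next I would fix the orthonormal eigenbasis $\{e_k\}$ with $We_k=\mu(k,W)e_k$ and $\mu(\cdot,W)$ non-increasing, and set $k_n:=\tr(\chi_{[\varepsilon_n,\infty)}(W))=|\{k:\mu(k,W)\geq\varepsilon_n\}|$. The spectral projection $\chi_{[\varepsilon_n,\infty)}(W)$ is then the orthogonal projection onto the span of $e_0,\dots,e_{k_n-1}$; in particular it is finite rank, so $T\chi_{[\varepsilon_n,\infty)}(W)$ is trace class, and computing the trace in this very basis gives $\tr(T\chi_{[\varepsilon_n,\infty)}(W))=\sum_{k=0}^{k_n-1}\langle e_k,Te_k\rangle$. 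Hence the right-hand ratio of \eqref{abstract_DOS} is precisely $\frac{1}{k_n}\sum_{k=0}^{k_n-1}a_k$, where $a_k:=\langle e_k,Te_k\rangle$ is a bounded sequence with $|a_k|\leq\|T\|$. Since $0<W$ on an infinite-dimensional Hilbert space has infinitely many positive singular values, $k_n\to\infty$ as $\varepsilon_n\downarrow 0$.

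Finally I would apply Lemma~\ref{L: subsequence summation} to the bounded sequence $\{a_k\}$ along the subsequence $\{k_n\}$. The hypothesis $k_{n+1}/k_n\to 1$ is exactly the ratio condition demanded by that lemma, so the subsequential average $\frac{1}{k_n}\sum_{k=0}^{k_n-1}a_k$ converges if and only if the full average $\frac{1}{m+1}\sum_{k=0}^{m}a_k$ converges, and the two limits agree. Combining this with \eqref{feq} yields $\tr_\omega(TW)=\tr_\omega(W)\lim_{m}\frac{1}{m+1}\sum_{k=0}^{m}a_k=\tr_\omega(W)\lim_{n}\frac{\tr(T\chi_{[\varepsilon_n,\infty)}(W))}{\tr(\chi_{[\varepsilon_n,\infty)}(W))}$ whenever the last limit exists, which is the assertion.

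The individual steps are routine, and the proposition is essentially immediate from the two ingredients; the one place I would be careful is the passage from $\{k_n\}$ to the strictly increasing sequence required by Lemma~\ref{L: subsequence summation}. The counts $k_n$ need not strictly increase (if no singular value lies in $[\varepsilon_{n+1},\varepsilon_n)$ then $k_{n+1}=k_n$), so I would pass to the subsequence of distinct values, check that $k_{n+1}/k_n\to 1$ still forces the ratio of successive distinct values to tend to $1$, and absorb the harmless discrepancies between $\sum_{k=0}^{k_n-1}$ and $\sum_{k=1}^{k_n}$ and between the normalisations $1/k_n$ and $1/(1+k_n)$, all of which are $O(1/k_n)\to 0$.
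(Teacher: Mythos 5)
Your proposal is correct and follows essentially the same route as the paper, which simply observes that the proposition is immediate from Lemma~\ref{L: subsequence summation} combined with \eqref{feq}: you identify $\tr(T\chi_{[\varepsilon_n,\infty)}(W))$ as the partial sum $\sum_{k=0}^{k_n-1}\langle e_k,Te_k\rangle$ with $k_n=\tr(\chi_{[\varepsilon_n,\infty)}(W))$ and transfer the subsequential Ces\`aro limit to the full one. Your extra care about the degenerate trace-class case and about $k_n$ possibly repeating is a welcome refinement of details the paper leaves implicit.
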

This proposition gives us the density of states formula for the discrete case. The idea is that in $\ell_2(X)$ we consider the basis $\{\delta_v\}_{v \in X}$ of indicator functions
of points $p \in X$, and $W$ is an operator of pointwise multiplication by a function which is \emph{radially decreasing} with respect to some point $x_0 \in X$, so that $\chi_{[\varepsilon_n,\infty)}(W)$ is the indicator function of a ball, and the limit $n\to \infty$ is equivalent to taking a limit over balls with radius going to infinity. This is where property~\eqref{E: Condition} as discussed in Section~\ref{S: Metric Condition} comes in, as this ensures that the premise of Proposition \ref{general_DOS_proposition} is satisfied.

{\it Proof of Theorem~\ref{T: Main}.} 
    Let $W = M_w$. By the assumptions of the theorem, $W \geq 0$. If $W$ is trace-class, the theorem is trivial as both sides are zero. So assume that $0 < W \in \Lc_{1,\infty}(\hilb) \setminus \Lc_1(\hilb)$. Then $\{\delta_v\}_{v \in X}$ is a basis of normalised eigenvectors for $W$, with eigenvalue corresponding to $\delta_v$ equal to $w(v)$. By assumption $w(v)$
    is a strictly decreasing function of $d_X(x_0,v)$, and hence the sets
       $\{v\; \colon \;w(v)\geq \delta\}$
    are balls. In fact, if we define $\varepsilon_n = \tilde{\mu}(n, W)$, where $\tilde{\mu}(n, W)$ is the $n$th largest singular value of $W$ counted \textit{without} multiplicities, we have that \[|\{ k\geq 0 : \mu(k,W) \geq \varepsilon_n \}|=|\{v\; \colon \;w(v)\geq \varepsilon_n\}| = |B(x_0, r_n)|,\] and we have assumed that \[\lim_{n\rightarrow \infty}\frac{|B(x_0, r_{k+1})|}{|B(x_0, r_k)|} = 1.\] Hence due to Lemma~\ref{expectation_values_lemma} and Lemma~\ref{L: subsequence summation}, we can then conclude that
    \begin{align*}
        \tr_\omega(TW) &= \tr_\omega(W)\lim_{k\to \infty} \frac{1}{|B(x_0,r_k)|}\sum_{v \in B(x_0,r_k)} \langle \delta_v,T\delta_v\rangle
    \end{align*}
    since we have assumed that the limit on the right exists.
    Observing that the sum on the right side is equal to $\tr(TM_{\chi_{B(0,r_k)}})$
    gives \eqref{E: general DOS formula}. \hspace*{\fill} $\Box$
   
\medskip

\begin{remark}
Note that the direct cause of why we put condition~\eqref{E: Condition} on $X$ is because we wanted to make sure that \[\lim_{n\to\infty} \frac{|\{ k\geq 0 : \mu(k,M_w) \geq \varepsilon_{n+1} \}|}{|\{ k\geq 0 : \mu(k,M_w) \geq \varepsilon_n \}|} = 1.\] If we again denote $\tilde{\mu}(n, M_w)$ as the $n$th largest singular value of $M_w$ counted \textit{without} multiplicities, define $m_n$ as the multiplicity corresponding to this singular value and also define $M_n := \sum_{k=1}^n m_k$, then we have effectively imposed \[\lim_{n\rightarrow \infty} \frac{M_{n+1}}{M_n}.\] Now compare this to the preprint by Cipriani and Sauvageot~\cite{CiprianiSauvageot2021} also referenced in Section~\ref{S: Metric Condition}. They study densely defined, nonnegative, unbounded, self-adjoint operators with exactly such a property for the multiplicities of its eigenvalues, and our $(M_w)^{-1}$ would fit such a description. The link between the DOS as considered in this paper and the spectral weight those authors define remains unclear as of yet.
\end{remark}

\appendix

\section{Problems of measurability}
\label{S: measurability}
Going back to the Toeplitz lemma~\ref{toeplitz_lemma}, it is possible to get better behaviour of the convergence $\frac{1}{b_n}\sum_{k=0}^n a_kx_k\to L$ 
by assuming faster convergence of $x_k\to L$. For example, we have the following lemma with an obvious proof. 
\begin{lemma} \label{L: unused lemma}
    Let $a_n$ and $b_n$ satisfy the same assumptions as Lemma~\ref{toeplitz_lemma}. 
    Let $x_n \in \mbC, \ n=0,1,\ldots.$ 
    If $x_k\rightarrow L$ sufficiently fast such that $\{a_k|x_k - L|\}_{k=0}^\infty \in \ell_1$, then:
    \begin{equation*}
        \sum_{k=0}^n a_kx_k = L b_n + O(1).
    \end{equation*}
\end{lemma}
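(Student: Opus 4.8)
The plan is to reduce the statement to the Toeplitz decomposition already used in the proof of Lemma~\ref{toeplitz_lemma}, and then exploit the stronger summability hypothesis to upgrade the error term from $o(b_n)$ to $O(1)$. The key observation is that the conclusion is really a statement about the \emph{remainder} $\sum_{k=0}^n a_k(x_k - L)$, and that this remainder is the partial sum of an absolutely convergent series.

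First I would split off the main term by writing
\[
    \sum_{k=0}^n a_k x_k = L\sum_{k=0}^n a_k + \sum_{k=0}^n a_k(x_k - L) = L\,b_n + \sum_{k=0}^n a_k(x_k - L),
\]
using the definition $b_n = \sum_{k=0}^n a_k$. The task is then to show that the second sum is $O(1)$ as $n \to \infty$. Since the $a_k$ are non-negative, the triangle inequality gives
\[
    \left| \sum_{k=0}^n a_k(x_k - L) \right| \leq \sum_{k=0}^n a_k |x_k - L| \leq \sum_{k=0}^\infty a_k |x_k - L|,
\]
and the rightmost quantity is finite precisely because the hypothesis states $\{a_k|x_k - L|\}_{k=0}^\infty \in \ell_1$. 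As this bound is independent of $n$, the remainder is uniformly bounded, i.e. $O(1)$, and substituting back yields $\sum_{k=0}^n a_k x_k = L\,b_n + O(1)$, as claimed.

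There is essentially no obstacle here, which is why the statement is described as having an obvious proof: unlike Lemma~\ref{toeplitz_lemma}, where one must split the index range at a threshold $K$ and absorb the tail into $\varepsilon$, the $\ell_1$ hypothesis makes the error term \emph{absolutely} summable in one stroke. The only point worth flagging is that the divergence of $b_n$ (part of the assumptions inherited from Lemma~\ref{toeplitz_lemma}) is not actually needed for the $O(1)$ conclusion; it is retained only so that the lemma reads as a sharpening of the Toeplitz conclusion under the same standing hypotheses, and the non-negativity of $a_k$ serves merely to keep the triangle-inequality estimate clean and to make the $\ell_1$ condition the natural one to impose.
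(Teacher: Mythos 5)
Your proof is correct and is exactly the ``obvious proof'' the paper alludes to without writing out: split off $L\,b_n$ and bound the remainder $\sum_{k=0}^n a_k(x_k-L)$ uniformly by the $\ell_1$ norm of $\{a_k|x_k-L|\}$. Nothing further is needed.
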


\begin{lemma} \label{second unused lemma}
    Let $a_n$ and $b_n$ satisfy the same assumptions as Lemma~\ref{toeplitz_corollary}. 
    Let $x_n \in \mbC, \ n=0,1,\ldots$ and $\sigma_n = \frac 1{n+1} \sum_{k=0}^n x_k.$
    If $\sigma_n \rightarrow L$ sufficiently fast such that $\{a_k| \sigma_k - L|\}_{k=0}^\infty \in \ell_1$, then:
    \begin{equation*}
        \sum_{k=0}^n a_kx_k = L b_n + O(1).
    \end{equation*}
\end{lemma}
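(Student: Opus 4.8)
The plan is to compare $\sum_{k=0}^n a_k x_k$ with $\sum_{k=0}^n a_k\sigma_k$. The point of introducing $\sigma_k$ is that the latter sum is directly governed by the already-established Lemma~\ref{L: unused lemma}: applying that lemma with the same weights $a_k$ but with the sequence $\sigma_k$ in place of $x_k$, its hypothesis becomes exactly the standing assumption $\{a_k|\sigma_k - L|\}\in\ell_1$, so
\[
\sum_{k=0}^n a_k\sigma_k = Lb_n + O(1),\qquad n\to\infty.
\]
It therefore suffices to prove that $\sum_{k=0}^n a_k(x_k-\sigma_k) = O(1)$.

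Next I would rewrite $x_k - \sigma_k$ in terms of increments of the Ces\`aro means. From $\sigma_k = \frac1{k+1}\sum_{j=0}^k x_j$ one computes $\sigma_k - \sigma_{k-1} = \tfrac1k(x_k - \sigma_k)$, that is $x_k - \sigma_k = k(\sigma_k - \sigma_{k-1})$, and $x_0 - \sigma_0 = 0$. Hence, writing $\delta_k := \sigma_k - L$,
\[
\sum_{k=0}^n a_k(x_k-\sigma_k) = \sum_{k=1}^n ka_k(\delta_k - \delta_{k-1}).
\]
A summation by parts (with $g_k := ka_k$) turns the right-hand side into the boundary term $na_n\delta_n - a_1\delta_0$ plus $\sum_{k=1}^{n-1}\bigl(ka_k-(k+1)a_{k+1}\bigr)\delta_k$. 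The boundary term is $O(1)$: the quantity $a_1\delta_0$ is constant, while $na_n\delta_n\to0$ since $\{na_n\}$ is bounded (weak $\ell_1$) and $\delta_n\to0$. Splitting $ka_k-(k+1)a_{k+1} = k(a_k - a_{k+1}) - a_{k+1}$, the contribution $\sum_k a_{k+1}\delta_k$ converges absolutely, because $a_{k+1}\le a_k$ and $\{a_k|\delta_k|\}\in\ell_1$ by hypothesis.

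This leaves the single sum $\sum_{k}k(a_k - a_{k+1})\delta_k$, and showing that its partial sums stay bounded is the step I expect to be the main obstacle. The difficulty is that $k(a_k-a_{k+1})$ need not be dominated by $a_k$: although the two are comparable \emph{on average} --- indeed $\sum_{k=1}^n k(a_{k-1}-a_k) = b_{n-1}-na_n$ is asymptotic to $b_n$ --- a pointwise comparison can fail near sharp decreases of $\{a_k\}$, where $k(a_k - a_{k+1})$ may be of order $a_k$ itself while $a_k$ is only of order $k^{-1}$. Controlling this term is exactly where regularity of the weight sequence enters: in the intended application $a_k = \mu(k,W)$ decays like $k^{-1}$ in a controlled way, so that $k(a_k-a_{k+1})=O(a_k)$ and the sum is majorised by $\sum_k a_k|\delta_k| < \infty$, yielding the claimed $O(1)$ bound. (An alternative, parallel route is to mimic the Abel summation in the proof of Lemma~\ref{toeplitz_corollary}, writing $\sum_{k=0}^n a_k x_k = y_{n+1}a_n + \sum_{k=1}^n c_k\sigma_{k-1}$ with $c_k := k(a_{k-1}-a_k)$ and $d_n:=\sum_{k=1}^n c_k = b_{n-1}-na_n\to\infty$, and then invoking Lemma~\ref{L: unused lemma} for the reduced sum; this meets the same obstacle, now phrased as verifying $\{c_k|\sigma_{k-1}-L|\}\in\ell_1$.)
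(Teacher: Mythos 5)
Your reduction is sound, and the ``alternative, parallel route'' you mention in parentheses is in fact exactly the proof the paper gives: it writes $\sum_{k=0}^n a_kx_k = y_{n+1}a_n+\sum_{k=1}^n \sigma_{k-1}\cdot k(a_{k-1}-a_{k})$ and then invokes Lemma~\ref{L: unused lemma} with $a_k':=k(a_{k-1}-a_k)$ in place of $a_k$ and the Ces\`aro means in place of $x_k$. However, you have located the crux and then left it unproven, so as written your argument is incomplete: everything hinges on $\{a_k'|\sigma_k-L|\}_{k}\in\ell_1$ (equivalently, on the boundedness of the partial sums of $k(a_k-a_{k+1})\delta_k$), and you only assert this ``in the intended application'' by appealing to regularity of $a_k=\mu(k,W)$ that is nowhere among the hypotheses. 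The lemma as stated assumes only monotonicity, the weak-$\ell_1$ bound and divergence of $b_n$, so this does not prove it.

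The obstacle you flag is substantive, not cosmetic: the implication $\{a_k|\sigma_k-L|\}\in\ell_1\Rightarrow\{k(a_{k-1}-a_k)|\sigma_k-L|\}\in\ell_1$ fails in general. For instance, let $a_k\approx 1/k$ except that at $k_j=4^j$ the sequence drops from $\approx 1/k_j$ to $1/(2k_j)$ and stays constant until it rejoins $1/k$; then $k_j(a_{k_j-1}-a_{k_j})\approx \tfrac12$ while $a_{k_j}= 1/(2k_j)$. Choosing $\sigma_k-L$ equal to $1/j$ just before $k_j$ and zero elsewhere (any prescribed sequence of Ces\`aro means is realised by $x_k=(k+1)\sigma_k-k\sigma_{k-1}$) makes the hypothesis $\sum_k a_k|\sigma_k-L|\lesssim\sum_j 4^{-j}/j<\infty$ hold, while the exact identity $\sum_{k=0}^n a_kx_k = Lb_n+o(1)+\sum_{k=1}^n k(a_{k-1}-a_k)(\sigma_{k-1}-L)$ produces an error term growing like $\sum_{j\lesssim\log n}1/(2j)\sim\tfrac12\log\log n$, so the conclusion $Lb_n+O(1)$ fails. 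The paper's own proof passes over this in silence, simply asserting that Lemma~\ref{L: unused lemma} ``applies with $a'$ in place of $a$''; so the gap you identified is genuinely a gap in the lemma (and its proof) as stated, and closing it requires an additional regularity hypothesis such as $k(a_{k-1}-a_k)=O(a_k)$, or a hypothesis on $\sigma_k-L$ phrased against the weights $a_k'$ rather than $a_k$.
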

\begin{proof}
    Let $y_n = \sum_{k=0}^{n-1} x_k$ with $y_{0}=0$. From the proof of Lemma~\ref{toeplitz_corollary} we have: 
    \begin{align*}
        \sum_{k=0}^n a_kx_k 
                            = y_{n+1}a_n+\sum_{k=1}^n \sigma_k \cdot k(a_{k-1}-a_{k}),
    \end{align*}
    the sequence $a_k' := k(a_{k-1}-a_{k})$ is non-negative, and as $n\to\infty$,
    \begin{equation*}
        \sum_{k=1}^n a_k' = b_{n-1}-na_n\rightarrow \infty,
    \end{equation*}
    Therefore, since by assumption $\lim_{k\to\infty} \sigma_k = L$
and $\{a_k| \sigma_k - L|\}_{k=0}^\infty \in \ell_1,$
     Lemma~\ref{L: unused lemma} applies with $a'$ in place of $a$ and $\sigma_k$ in place of $x_k,$ which gives 
    \begin{equation*}
        \sum_{k=1}^n \sigma_k\cdot k(a_{k-1}-a_k) = L (b_{n-1}-na_n) + O(1)
    \end{equation*} 
    as $n\to\infty$.
    Thus,
    \begin{align*}
        \sum_{k=0}^n a_kx_k  & = \frac{y_{n+1}}{n}\cdot na_n + L(b_{n-1}-na_n) + O(1)  \\
           & = L b_{n-1} + O(1) = L b_{n} + O(1).
    \end{align*}
\end{proof}

\medskip
For an operator $A \in \Lc_{1,\infty},$ there are various criteria relating the behaviour of the sequence $\sum_{k=0}^n \lambda(k,A)$ to the measurability of $A$. For example,~\cite[Theorem 5.1.5]{LSZVol1} implies that
\begin{equation} \label{F: what a beauty}
    \sum_{k=0}^n \lambda(k,A) - c\log(2+n)=O(1),\quad n\to \infty
\end{equation}
if and only if $\varphi(A) = c$ for all normalised traces $\varphi$ on $\Lc_{1,\infty}$ (c.f.~\cite[Theorem 9.1.2]{LSZVol1}). For different classes of traces, different criteria are available, see~\cite{SemenovSukochev2015}.

\begin{theorem}
    Let $(X,d_X)$, $T$ and $w$ satisfy the assumptions of Theorem~\ref{T: Main}.
    Let~$e_n$ be an orthonormal basis such that $M_w e_k = w^*(k) e_k.$
     If $$\frac{1}{n+1}\sum_{k=0}^n \langle e_k,Te_k\rangle \rightarrow L \in \Cplx$$
    so fast that
    \begin{equation*}
        \sum_{n=0}^\infty w^*(n)\left| \frac{1}{n+1}\sum_{k=0}^n \langle e_k,Te_k\rangle - L\right| < \infty
    \end{equation*}
    and there exists $C > 0$ such that:
    \begin{equation*}
        \sum_{k=0}^n w^*(k) = C\log(2+n)+O(1),
    \end{equation*}
    then $TM_w$ is measurable in the sense of Connes, specifically
    $$
       \varphi(TM_w) = \tr_\omega(M_w) \lim_{n \to \infty} \frac 1{n+1} \sum_{k=0}^n \langle e_k,Te_k\rangle
    $$
    for all traces $\varphi$ on $\Lc_{1,\infty}.$
\end{theorem}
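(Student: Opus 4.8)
The plan is to reduce the assertion to the single asymptotic estimate
\begin{equation*}
    \sum_{k=0}^n \lambda(k,TM_w) = CL\log(2+n) + O(1), \quad n\to\infty,
\end{equation*}
and then to invoke the characterisation \eqref{F: what a beauty}. Indeed, this estimate is exactly the condition under which every normalised trace on $\Lc_{1,\infty}$ takes the common value $CL$ on $TM_w$, so it simultaneously establishes Connes measurability of $TM_w$ and pins down the value. Since $\mu(k,M_w) = w^*(k)$ and $\sum_{k=0}^n w^*(k) = C\log(2+n)+O(1)$, dividing by $\log(2+n)$ shows $\tr_\omega(M_w) = C$ for every extended limit $\omega$; hence $CL = \tr_\omega(M_w)\,L$, which is the stated formula.

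To obtain the displayed estimate I would argue in two steps, writing $W = M_w$. First, because $T$ is bounded and $0 < W \in \Lc_{1,\infty}$, the operator $TW$ is $W$-modulated, so Theorem~\ref{T: Modulated} in the form \eqref{expectation_values_sum}, applied with the eigenbasis $\{e_k\}$, gives
\begin{equation*}
    \sum_{k=0}^n \lambda(k,TW) = \sum_{k=0}^n \langle e_k,Te_k\rangle\, w^*(k) + O(1).
\end{equation*}
Second, I would apply the refined Toeplitz estimate of Lemma~\ref{second unused lemma} with $a_k = w^*(k)$, $x_k = \langle e_k,Te_k\rangle$ and $\sigma_n = \frac{1}{n+1}\sum_{k=0}^n x_k$. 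The monotonicity, weak-$\ell_1$ and divergence hypotheses of Lemma~\ref{toeplitz_corollary} are met, since $w^*$ is a non-increasing rearrangement, $W \in \Lc_{1,\infty}$ forces $\sup_k k\,w^*(k) < \infty$, and $\sum_{k=0}^n w^*(k) = C\log(2+n)+O(1) \to \infty$; the summability hypothesis $\{a_k|\sigma_k - L|\}\in\ell_1$ is precisely the fast-convergence assumption $\sum_n w^*(n)|\sigma_n - L| < \infty$. Lemma~\ref{second unused lemma} then yields
\begin{equation*}
    \sum_{k=0}^n \langle e_k,Te_k\rangle\, w^*(k) = L\sum_{k=0}^n w^*(k) + O(1) = CL\log(2+n) + O(1),
\end{equation*}
and combining the two displays gives the required estimate.

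The verification of the hypotheses of Lemma~\ref{toeplitz_corollary} is routine bookkeeping; the load-bearing step is the upgrade from a Ces\`aro-type hypothesis to an $O(1)$ remainder in Lemma~\ref{second unused lemma}. This is exactly where the strengthened assumption $\sum_n w^*(n)|\sigma_n - L| < \infty$ is indispensable: the plain Toeplitz argument of Lemma~\ref{toeplitz_corollary} would only deliver an $o(\log(2+n))$ remainder, which suffices for Dixmier measurability but is too weak to trigger \eqref{F: what a beauty}. I therefore expect the main subtlety to be confirming that it is the Ces\`aro averages $\sigma_n$ of the diagonal entries --- and not the entries $\langle e_k,Te_k\rangle$ individually --- that feed into Lemma~\ref{second unused lemma}, so that no separate convergence of the diagonal of $T$ is required beyond the hypotheses already given.
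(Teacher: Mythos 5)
Your proposal is correct and follows essentially the same route as the paper's own proof: reduce via the $M_w$-modulated estimate \eqref{expectation_values_sum} to the weighted sum $\sum_{k=0}^n \langle e_k,Te_k\rangle w^*(k)$, apply Lemma~\ref{second unused lemma} with $a_k = w^*(k)$ to get the $O(1)$ remainder, and conclude with the criterion \eqref{F: what a beauty}. Your explicit identification of $\tr_\omega(M_w)=C$ and your remark on why the sharper Lemma~\ref{second unused lemma} (rather than Lemma~\ref{toeplitz_corollary}) is needed are both accurate and consistent with the paper.
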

\begin{proof} 
Since $TM_w$ is $M_w$-modulated (see Definition~\ref{D: V-modulated}), from \eqref{expectation_values_sum} we have
$$
   \sum_{k=0}^n \lambda(k,TM_w) = \sum_{k=0}^n \Scal{e_k,Te_k} w^*(k) + O(1).
$$
Hence, in view of \eqref{F: what a beauty} to prove the claim it suffices to show that 
$$
   \sum_{k=0}^n \Scal{e_k,Te_k} w^*(k) = CL \log(2+n) + O(1).
$$
By the second condition this is equivalent to 
$$
   \sum_{k=0}^n \Scal{e_k,Te_k} w^*(k) = L \sum_{k=0}^n w^*(k) + O(1),
$$
so it suffices to prove this. This follows from Lemma~\ref{second unused lemma} applied to $x_k = \Scal{e_k,Te_k}$ and $a_k = w^*(k).$ 
\end{proof}

We could also replace the assumption with the slightly stronger assertion:
\begin{equation*}
    \left\{\frac{1}{n+1}\sum_{k=0}^n \langle e_k,Te_k\rangle-L\right\}_{k=0}^\infty \in \Lambda_{\log},
\end{equation*}
where $\Lambda_{\log}$ is the space of sequences $x$ such that
\begin{equation*}
    \sum_{k=0}^\infty \frac{x^*_k}{k+1} < \infty.
\end{equation*}

\section{Equivariance of the DOS under translations of the Hamiltonian}
\label{S: Translation}
In this appendix we will provide a straightforward application of the Dixmier formula for the DOS put forward in Theorem~\ref{T: Main}. Namely, we provide a new and original proof of the equivariance of the DOS on lattice graphs $X$ under translations of the Hamiltionian. By this we mean that if $U$ denotes a shift operator on $\ell_2(X)$ and the DOS exists for both a Hamiltonian $H$ and the shifted $UHU^*$, then the DOS is equal for $H$ and $UHU^*$. This fact is not hard to prove without Theorem~\ref{T: Main}, but it does provide a different perspective on the claim.

Afterwards, we will discuss some consequences of this translation equivariance.
    
\subsection{Translation equivariance on lattice graphs}
    We will consider the example where $X = \Itgr^d$, embedded as a subset of $\Rl^d$ with the Euclidean metric. Precisely the same reasoning
    applies to other discrete subsets $X\subset \Rl^d$, such as lattices (recall that a lattice in $\Rl^d$ is the $\Itgr$-linear span of $d$ linearly independent
    vectors).
    
    We take 
    \begin{equation} \label{F: w(x) = (x)}
       w(x) := (1+\norm{x}_2)^{-d}
    \end{equation}
    where $x \in \Itgr^d$ like before.
    
    \begin{lemma}\label{translation_difference_L1}
        For all $n\in \Itgr^d$, we have:
        \begin{equation*}
            \{w(x)-w(x-n)\}_{x \in \Itgr^d} \in \ell_{\frac{d}{d+1},\infty}.
        \end{equation*}
    \end{lemma}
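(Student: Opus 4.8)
The plan is to reduce the claim to the distribution-function characterisation of the weak space $\ell_{p,\infty}$: a sequence $a$ lies in $\ell_{p,\infty}$ precisely when there is a constant $C>0$ with $\abs{\{v : \abs{a_v} \geq t\}} \leq C t^{-p}$ for all $t>0$, which is the natural generalisation of~\eqref{F: w in weak l1 def} with exponent $p = \frac{d}{d+1}$ in place of $1$. Thus it suffices to count, for each $t>0$, the lattice points $x$ at which $\abs{w(x)-w(x-n)} \geq t$, and to show this count is $O\big(t^{-d/(d+1)}\big)$.

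The first step is a pointwise decay estimate. Writing $w(x) = f(\norm{x}_2)$ with $f(s) = (1+s)^{-d}$, the mean value theorem gives $\abs{w(x)-w(x-n)} \leq \sup_{\xi} \abs{f'(\xi)} \cdot \abs{\norm{x}_2 - \norm{x-n}_2}$, where $\xi$ ranges over the interval between $\norm{x}_2$ and $\norm{x-n}_2$. Since $\abs{f'(s)} = d(1+s)^{-d-1}$ and $\abs{\norm{x}_2 - \norm{x-n}_2} \leq \norm{n}_2$ by the reverse triangle inequality, restricting to $\norm{x}_2 \geq 2\norm{n}_2$ forces $\xi \geq \norm{x}_2/2$ and hence
\begin{equation*}
    \abs{w(x)-w(x-n)} \leq d\,\norm{n}_2\,\Big(1+\tfrac{1}{2}\norm{x}_2\Big)^{-d-1} \leq C_n\,\norm{x}_2^{-d-1}
\end{equation*}
for a constant $C_n$ depending only on $n$ and $d$. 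The key gain here is the extra power: whereas $w$ itself decays like $\norm{x}_2^{-d}$ (placing $w\in \ell_{1,\infty}$), the first difference decays one order faster.

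The second step converts this into the counting bound. For the finitely many $x$ with $\norm{x}_2 < 2\norm{n}_2$ one has $\abs{w(x)-w(x-n)} \leq 1$, so this fixed finite set contributes nothing once $t>1$ and at most its (constant) cardinality otherwise, which is absorbed into $Ct^{-d/(d+1)}$ since $t^{-d/(d+1)} \geq 1$ there. For $\norm{x}_2 \geq 2\norm{n}_2$, the estimate above shows that $\abs{w(x)-w(x-n)} \geq t$ forces $\norm{x}_2 \leq (C_n/t)^{1/(d+1)}$; the number of lattice points in a Euclidean ball of radius $R$ is $O(R^d)$, so this contributes $O\big(t^{-d/(d+1)}\big)$. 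Adding the two contributions yields $\abs{\{x \in \Itgr^d : \abs{w(x)-w(x-n)} \geq t\}} \leq C t^{-d/(d+1)}$ for all $t>0$, which is exactly membership in $\ell_{\frac{d}{d+1},\infty}$.

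I expect the only delicate point to be the bookkeeping near the origin and near $x=n$, where the mean value estimate degenerates; this is handled cleanly by isolating the fixed finite set $\norm{x}_2 < 2\norm{n}_2$. The remainder is the routine observation that a first-order difference of a radial power improves the decay exponent by one, which is precisely what shifts the weak-$\ell_p$ exponent from $1$ to $\frac{d}{d+1}$.
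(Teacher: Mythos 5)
Your proof is correct and follows essentially the same route as the paper: both derive the pointwise bound $\abs{w(x)-w(x-n)} \leq C_n(1+\norm{x}_2)^{-(d+1)}$ from a first-order mean-value estimate (the paper integrates $\nabla w$ along the segment from $x-n$ to $x$, you apply the mean value theorem to the radial profile $(1+s)^{-d}$ together with the reverse triangle inequality, which gives the same bound) and then conclude weak-$\ell_{d/(d+1)}$ membership. Your write-up is somewhat more explicit than the paper's about the final counting step and the finitely many points near the origin, which the paper leaves implicit.
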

    \begin{proof}
        The difference $w(x)-w(x-n)$ is provided by the formula:
        \begin{equation*}
            w(x)-w(x-n) = \int_0^1 \langle \nabla w(x-(1-\theta)n),n\rangle\,d\theta.
        \end{equation*}
        The gradient $\nabla w$ of $w$ is easily computed as:
        \begin{equation*}
            \frac{\partial}{\partial x_j}w(x) = -\frac{dx_j}{\norm{x}_2 (1+\norm{x}_2)^{d+1}}.
        \end{equation*}
        Thus,
        \begin{equation*}
            \norm{\nabla w(x)}_2 = \frac{d}{(1+\norm{x}_2)^{d+1}}.
        \end{equation*}
        Therefore for $x$ with $\norm{x}_2 > \norm{n}$ we have
        \begin{align*}
            \abs{w(x)-w(x-n)} &\leq d\norm{n}_2 \max_{0\leq \theta\leq 1}(1+\norm{x-(1-\theta) n)}_2)^{-d-1} \\
            &\leq  \frac{d \norm{n}_2}{(1+\norm{x}_2 - \norm{n}_2)^{d+1}},
        \end{align*}
       hence $\abs{w(x) - w(x-n)}_{x\in \mathbb{Z}^d} $ is an element of $\ell_{\frac{d}{d+1}, \infty}$.
    \end{proof} 
    
    \begin{theorem}
        For $n \in \Itgr^d$, let $U_n$ denote the operator on $\ell_2(\Itgr^d)$ of translation by~$n$. Assume that $H = H_0+M_V$ is a Hamiltonian operator
        such that the density of states exists for both $H$ and $U_nHU_n^*.$ Then both measures are equal.
    \end{theorem}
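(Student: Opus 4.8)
The plan is to reduce the equality of the two density of states measures to an application of Theorem~\ref{T: Main} together with the singularity of the Dixmier trace, exploiting Lemma~\ref{translation_difference_L1} to control the perturbation caused by the translation. Since by hypothesis the DOS exists for both $H$ and $U_n H U_n^*$, formula~\eqref{F: DOS integral} applies to each: for every $f \in C_c(\Rl)$ we have $\tr_\omega(f(H)M_w) = \tr_\omega(M_w)\int f\,d\nu_H$ and likewise $\tr_\omega(f(U_nHU_n^*)M_w) = \tr_\omega(M_w)\int f\,d\nu_{U_nHU_n^*}$. Because $\tr_\omega(M_w) > 0$ (it equals $V_2(d)$ by the integer-lattice computation), it suffices to show that the two Dixmier traces agree for all $f \in C_c(\Rl)$, as this forces $\int f\,d\nu_H = \int f\,d\nu_{U_nHU_n^*}$ for all such $f$ and hence $\nu_H = \nu_{U_nHU_n^*}$.

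The core computation is then to compare $\tr_\omega(f(U_nHU_n^*)M_w)$ with $\tr_\omega(f(H)M_w)$. First I would use $f(U_nHU_n^*) = U_n f(H) U_n^*$ (functional calculus commutes with unitary conjugation), so that the left-hand trace becomes $\tr_\omega(U_n f(H) U_n^* M_w)$. By the trace property, $\tr_\omega(U_n f(H) U_n^* M_w) = \tr_\omega(f(H) U_n^* M_w U_n)$. Now $U_n^* M_w U_n = M_{w(\cdot - n)}$ is the multiplication operator by the shifted weight, so the difference of the two Dixmier traces is exactly
\begin{equation*}
    \tr_\omega(f(U_nHU_n^*)M_w) - \tr_\omega(f(H)M_w) = \tr_\omega\brs{f(H)\brs{M_{w(\cdot - n)} - M_w}}.
\end{equation*}
The plan is to show that the operator inside the last Dixmier trace is trace-class, so that its Dixmier trace vanishes by singularity of $\tr_\omega$. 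Here Lemma~\ref{translation_difference_L1} does the heavy lifting: the multiplication operator $M_{w(\cdot - n) - w}$ has singular values equal to the decreasing rearrangement of $\{|w(x-n)-w(x)|\}_{x}$, which lies in $\ell_{\frac{d}{d+1},\infty} \subseteq \ell_1$ (since $\frac{d}{d+1} < 1$). Thus $M_{w(\cdot - n)} - M_w \in \Lc_1$, and since $f(H)$ is bounded, the product $f(H)(M_{w(\cdot-n)} - M_w)$ is trace-class as well.

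The main obstacle I anticipate is the careful bookkeeping in the second step rather than any deep difficulty: one must verify that $U_n^* M_w U_n$ really is multiplication by the translated weight (a direct check on the basis $\{\delta_v\}$), and that $\ell_{\frac{d}{d+1},\infty} \subseteq \Lc_1$, which is immediate from $p = \frac{d}{d+1} < 1$ since $\ell_{p,\infty} \subseteq \ell_1$ for $p < 1$. One subtlety worth flagging is that the boundedness of $f(H)$ requires $f \in C_c(\Rl)$ (or at least $f$ bounded), which holds by hypothesis; with $H$ self-adjoint and $f$ continuous and compactly supported, $f(H)$ is bounded regardless of whether $H$ itself is bounded. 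Once the trace-class vanishing is established, the two Dixmier traces coincide for every $f \in C_c(\Rl)$ and every extended limit $\omega$, and dividing by $\tr_\omega(M_w) > 0$ yields $\int f\,d\nu_H = \int f\,d\nu_{U_nHU_n^*}$ for all $f \in C_c(\Rl)$, whence $\nu_H = \nu_{U_nHU_n^*}$ by uniqueness of the Riesz representation.
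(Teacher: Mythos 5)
Your proposal is correct and follows essentially the same route as the paper: conjugation invariance of the functional calculus, the tracial property to move $U_n$ onto $M_w$, Lemma~\ref{translation_difference_L1} to place $U_n^*M_wU_n - M_w$ in $\Lc_{\frac{d}{d+1},\infty}\subset\Lc_1$, vanishing of the Dixmier trace on trace-class operators, and finally the formula \eqref{F: DOS integral}. The extra details you supply (the explicit check that $\ell_{\frac{d}{d+1},\infty}\subseteq\ell_1$ and that $\tr_\omega(M_w)=V_2(d)>0$ so one may divide) are correct and only make the argument more self-contained.
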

    \begin{proof}
        For any $f \in C_c(\Rl)$, we have
            $f(U_nHU_n^*) = U_nf(H)U_n^*.$ Combining this with 
 the tracial property of the Dixmier trace gives 
        \begin{equation*}
            \tr_\omega(f(U_nHU_n^*)M_w) = \tr_\omega(f(H)U_n^*M_wU_n).
        \end{equation*}
        By Lemma~\ref{translation_difference_L1}, we have
            $U_n^*M_wU_n-M_w \in \Lc_{\frac{d}{d+1},\infty}\subset \Lc_1.$
Since the Dixmier trace vanishes on the trace class, it follows that 
            $\tr_\omega(f(U_nHU_n^*)M_w) = \tr_\omega(f(H)M_w).$
 Combining this with the Dixmier trace formula for the density of states, \eqref{F: DOS integral}, completes the proof. 
    \end{proof}
    
    Via identical reasoning, we also have the following abstract assertion:
    \begin{theorem}\label{general_translation_invariance}
        Let $(X,d_X)$ be an infinite discrete metric space and $w$ be a function such that these satisfy the assumptions of Theorem~\ref{T: Main}, let $\gamma$ be an isometry of $X$, and let $U_{\gamma}\delta_p = \delta_{\gamma(p)}$ be the corresponding unitary operator on $\ell_2(X)$. Assume that
            $w-w\circ\gamma \in (\ell_{1,\infty})_0(X).$
        Then 
        \begin{equation*}
            \lim_{R\to \infty} \frac{1}{|B(x_0,R)|}\sum_{d_X(p,x_0)\leq R} \langle \delta_p,T\delta_p\rangle = \lim_{R\to\infty} \frac{1}{B(x_0,R)}\sum_{d_X(p,x_0)\leq R} \langle \delta_p,U_{\gamma}^*TU_{\gamma}\rangle,
        \end{equation*}        
        provided \emph{both} limits exist.
    \end{theorem}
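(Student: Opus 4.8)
The plan is to mirror the proof of the preceding (concrete) translation theorem, replacing $f(H)$ by the arbitrary bounded operator $T$ and the translation $U_n$ by the isometry $U_\gamma$. By Theorem~\ref{T: Main}, since both averages are assumed to converge, every Dixmier trace satisfies
\[
\tr_\omega(TM_w)=\tr_\omega(M_w)\lim_{R\to\infty}\frac{1}{|B(x_0,R)|}\sum_{d_X(p,x_0)\leq R}\langle\delta_p,T\delta_p\rangle,
\]
and the same identity holds with $T$ replaced by the bounded operator $U_\gamma^*TU_\gamma$ (here I use $\tr(TM_{\chi_{B(x_0,R)}})=\sum_{p\in B(x_0,R)}\langle\delta_p,T\delta_p\rangle$ and the remark that the limit over $\{r_k\}$ equals the continuous limit over $R$). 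Hence, as long as $\tr_\omega(M_w)\neq 0$, the two averages in the statement coincide once we prove the single identity $\tr_\omega(U_\gamma^*TU_\gamma M_w)=\tr_\omega(TM_w)$; the whole argument reduces to this.

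To prove it I would first use the unitary-conjugation invariance of $\tr_\omega$ (its tracial property) to move the unitaries: $\tr_\omega(U_\gamma^*TU_\gamma M_w)=\tr_\omega\bigl(T\,U_\gamma M_wU_\gamma^*\bigr)$. A one-line computation on the basis $\{\delta_p\}$, using $U_\gamma\delta_p=\delta_{\gamma(p)}$, gives $U_\gamma M_wU_\gamma^*=M_{w\circ\gamma^{-1}}$. Thus the claim becomes $\tr_\omega\bigl(T(M_{w\circ\gamma^{-1}}-M_w)\bigr)=0$, i.e. it suffices to show that the perturbation $M_{w\circ\gamma^{-1}}-M_w=M_{w\circ\gamma^{-1}-w}$ is negligible for every Dixmier trace.

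The key step is that $w\circ\gamma^{-1}-w\in(\ell_{1,\infty})_0(X)$. This is not literally the hypothesis $w-w\circ\gamma\in(\ell_{1,\infty})_0(X)$, but follows from the isometry-invariance of the separable part: since $M_{h\circ\gamma}=U_\gamma^*M_hU_\gamma$ is unitarily equivalent to $M_h$ and hence shares its singular values, one has $h\in(\ell_{1,\infty})_0(X)$ if and only if $h\circ\gamma\in(\ell_{1,\infty})_0(X)$. Applying this with $h=w\circ\gamma^{-1}-w$ and observing that $h\circ\gamma=w-w\circ\gamma$ yields $h\in(\ell_{1,\infty})_0(X)$. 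Because $(\Lc_{1,\infty})_0$ is a two-sided ideal and $T$ is bounded, $T(M_{w\circ\gamma^{-1}}-M_w)\in(\Lc_{1,\infty})_0$, and every Dixmier trace vanishes on $(\Lc_{1,\infty})_0$; this gives the desired identity and, after cancelling $\tr_\omega(M_w)$, the theorem.

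The computations are routine; the two points needing care are precisely the direction switch from $\gamma$ to $\gamma^{-1}$ produced by the conjugation (handled by the isometry-invariance of $(\ell_{1,\infty})_0(X)$ above) and the nonvanishing of $\tr_\omega(M_w)$ required before cancelling. The latter is the genuine caveat: if $M_w\in(\Lc_{1,\infty})_0$ then both Dixmier traces are zero and the formula from Theorem~\ref{T: Main} carries no information, so one must know $\tr_\omega(M_w)\neq 0$ for some $\omega$ — which holds for the $w$ relevant here, exactly as $\tr_\omega(M_w)=V_p(d)>0$ was used implicitly in the concrete translation theorem.
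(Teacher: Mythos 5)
Your proof is correct and follows essentially the same route the paper intends when it declares the result to hold ``via identical reasoning'' as the concrete $\mathbb{Z}^d$ case: use the tracial property to conjugate $M_w$, identify $U_\gamma M_w U_\gamma^* - M_w$ as multiplication by a function in $(\ell_{1,\infty})_0(X)$ on which every Dixmier trace vanishes, and then cancel $\tr_\omega(M_w)$ in the formula of Theorem~\ref{T: Main}. The two points you flag --- the switch from $\gamma$ to $\gamma^{-1}$ produced by the conjugation, and the need for $\tr_\omega(M_w)\neq 0$ before cancelling (not automatic from the hypotheses of Theorem~\ref{T: Main} alone, though it holds for the standard choice of $w$) --- are genuine and are glossed over in the paper, so making them explicit is a merit of your write-up rather than a deviation.
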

    
\subsection{Ergodic operators}
As emphasised in the introduction of this appendix, the following results are direct consequences of the translation equivariance of the DOS measure and therefore could be derived without help of Theorem~\ref{T: Main}.

Let $(\Omega,\Sigma,\mathbb{P})$ be a probability space, and let $\Gamma$ be a discrete amenable group of isometries of the metric space $X$
 from Theorem~\ref{T: Main}. We assume that there
    is a representation of $\Gamma$ as automorphisms of $\Omega$:
    \begin{equation*}
        \gamma\in \Gamma\mapsto \alpha_\gamma\in \mathrm{Aut}(\Omega).
    \end{equation*}
    It is assumed that the action $\alpha$ is ergodic, in the sense that:
    \begin{enumerate}[{\rm (i)}]
        \item{} For every $\gamma\in \Gamma$, the automorphism $\alpha_\gamma$ is measure preserving;
        \item{} If $E\subseteq \Omega$ is invariant under every $\alpha_\gamma$, then $\mathbb{P}(E)=0$ or $\mathbb{P}(\Omega\setminus E) = 0$.
    \end{enumerate}
    
    We will use a generalisation of Birkhoff's ergodic theorem, obtained by Lindenstrauss~\cite[Theorem 1.3]{Lindenstrauss2001}. This uses the concept of a F\o lner sequence, we give the definition as it is used for discrete groups.

\begin{definition}
Let $\Gamma$ be a discrete group, and let $\{F_n\}_{n=0}^\infty$ be a sequence of subsets of $\Gamma$. 
    \begin{enumerate}[{\rm (i)}]
        \item{} If for every finite subset $K \subseteq \Gamma$ and every $\delta>0$, there exists $N$ sufficiently large such that if $n>N$, we have for all $k\in K$
        $$|F_n \ \Delta \ kF_n| \leq \delta|F_n|,$$ then $\{F_n\}_{n=0}^\infty$ is called a F\o lner sequence.
        \item{} If $\{F_n\}_{n=0}^\infty$ satisfies (i) and furthermore for some $C \geq 1$ and for every $n\geq 0$, we have:
        $$\left|\bigcup_{k\leq n} F_k^{-1}F_{n+1}\right| \leq C|F_{n+1}|,$$ then $\{F_n\}_{n=0}^\infty$ is called a tempered F\o lner sequence. 
    \end{enumerate}
\end{definition}

The existence of a F\o lner sequence in this sense is equivalent with the condition of $\Gamma$ being discrete and amenable~\cite[p.~23]{Lubotzky1994}. Also note that any F\o lner sequence has a tempered subsequence~\cite[Proposition~1.4]{Lindenstrauss2001}.

    Lindenstrauss' pointwise ergodic theorem~\cite[Theorem 1.3]{Lindenstrauss2001} implies that if $\{F_n\}_{n=0}^\infty$
    is a tempered F\o lner sequence, then for all $f \in L_1(\Omega)$ we have:
    \begin{equation} \label{Lindenstrauss' Thm}
        \lim_{n\to\infty} \frac{1}{|F_n|} \sum_{\gamma\in F_n} f(\alpha_\gamma\omega) = \mathbb{E}(f).
    \end{equation}
    for almost every $\omega \in \Omega.$ 
    
    For $\gamma\in \Gamma$, let $U_{\gamma}$ denote the induced unitary operator acting on $\ell_2(X)$ by:
    \begin{equation*}
        U_\gamma\delta_p := \delta_{\gamma(p)},\quad p\in X,\,\gamma\in \Gamma.
    \end{equation*}
    
    We will consider strongly measurable random operators $T \in L_1(\Omega,\Bc(\ell_2(X))$ which are compatible with $\alpha$ in the sense that:
    \begin{equation}\label{group_compatibility}
        U_{\gamma}T(\omega)U_\gamma^* = T(\alpha_\gamma \omega),\quad \gamma\in \Gamma
    \end{equation}
    for almost all $\omega \in \Omega.$ 
        
    \begin{proposition}
        Let $T \in L_1(\Omega,\Bc(\ell_2(X))$ be a random operator satisfying \eqref{group_compatibility} with respect to a group of isometries $\Gamma$ of $X,$
        which admits a tempered F\o lner sequence $\set{F_n}_{n=0}^\infty$ of finite subsets,  
         and with respect to an ergodic action $\alpha$
        of $\Gamma$ on $\Omega$. If there exists a function $w \colon X \to \mbR_+$ satisfying the assumptions of Theorem~\ref{T: Main} such that
        \begin{equation*}
            w\circ \gamma-w \in (\ell_{1,\infty})_0(X)
        \end{equation*} 
        for every $\gamma\in \Gamma$ then the density of states of $T(\xi)$ is non-random, in the sense that if the limit:
        \begin{equation*}
            \lim_{R\to\infty} \frac{\tr(T(\xi) M_{\chi_{B(x_0,R)}})}{|B(x_0,R)|}
        \end{equation*}
        exists for almost every $\xi$, then the limit is almost surely constant in $\xi$.        
    \end{proposition}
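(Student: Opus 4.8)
The plan is to prove that the random density of states
$\rho(\xi) := \lim_{R\to\infty}\frac{\tr(T(\xi)M_{\chi_{B(x_0,R)}})}{|B(x_0,R)|}$,
defined on the full-measure set $E$ on which this limit exists, is invariant under the action $\alpha$, and then to conclude by ergodicity. First I would record the measurability of $\rho$: for each fixed $R$ the map $\xi\mapsto\tr(T(\xi)M_{\chi_{B(x_0,R)}})=\sum_{p\in B(x_0,R)}\langle\delta_p,T(\xi)\delta_p\rangle$ is a finite sum of measurable scalar functions, because $T\in L_1(\Omega,\Bc(\ell_2(X)))$ is strongly measurable; hence $\rho$ is measurable on $E$, being a pointwise limit of measurable functions.

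The key step is to establish $\rho\circ\alpha_\gamma=\rho$ almost everywhere for each $\gamma\in\Gamma$. Rewriting the compatibility relation \eqref{group_compatibility} as $T(\alpha_\gamma\xi)=U_\gamma T(\xi)U_\gamma^*$, we see that $\rho(\alpha_\gamma\xi)$ is the density of states of $U_\gamma T(\xi)U_\gamma^*$. Since $\gamma$ is an isometry of $X$ and, by hypothesis, $w\circ\gamma-w\in(\ell_{1,\infty})_0(X)$, Theorem~\ref{general_translation_invariance} asserts that the density of states is unchanged under conjugation by $U_\gamma$; applying it to the operator $U_\gamma T(\xi)U_\gamma^*$ and using $U_\gamma^*\bigl(U_\gamma T(\xi)U_\gamma^*\bigr)U_\gamma=T(\xi)$ yields $\rho(\alpha_\gamma\xi)=\rho(\xi)$. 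This identity holds on the full-measure set $E\cap\alpha_\gamma^{-1}(E)$, where both one-sided limits exist (as required by the proviso in Theorem~\ref{general_translation_invariance}) and where \eqref{group_compatibility} is valid; it has full measure because $\alpha_\gamma$ is measure preserving and $\mathbb{P}(E)=1$.

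Because $\Gamma$ admits a Følner \emph{sequence} it is countable, so the countable union over $\gamma\in\Gamma$ of the exceptional null sets is again null; off this null set $\rho$ is genuinely $\Gamma$-invariant. Ergodicity of $\alpha$ then forces $\rho$ to be almost surely constant: for every Borel set $A\subseteq\Rl$ the preimage $\rho^{-1}(A)$ is $\alpha$-invariant up to measure zero, hence has measure $0$ or $1$ by condition (ii), which pins $\rho$ to a single value almost surely.

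The main obstacle is not the concluding ergodicity argument, which is standard once invariance is secured, but correctly reducing the implicit change of base point in the conjugation $T(\alpha_\gamma\xi)=U_\gamma T(\xi)U_\gamma^*$ to Theorem~\ref{general_translation_invariance}, together with the careful null-set bookkeeping needed so that the invariance identity holds simultaneously for all $\gamma$ off one common null set. I would also note that, should one wish to \emph{identify} the almost sure value rather than merely assert constancy, the tempered Følner sequence $\{F_n\}_{n=0}^\infty$ and Lindenstrauss' pointwise ergodic theorem can be invoked to express the limit as an expectation of the form $\mathbb{E}\langle\delta_{x_0},T(\cdot)\delta_{x_0}\rangle$ when $\Gamma$ acts suitably on $X$; this refinement is not required for the stated conclusion.
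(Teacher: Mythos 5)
Your proof is correct, but it takes a genuinely different route from the paper's. You work directly with the scalar function $\rho(\xi)=\lim_{R\to\infty}|B(x_0,R)|^{-1}\tr\brs{T(\xi)M_{\chi_{B(x_0,R)}}}$, establish its measurability from strong measurability of $T$, deduce $\rho\circ\alpha_\gamma=\rho$ a.e.\ from Theorem~\ref{general_translation_invariance} applied to $S=U_\gamma T(\xi)U_\gamma^*$, and conclude by ergodicity alone that an invariant measurable function is a.s.\ constant; neither the temperedness of the F\o lner sequence nor Lindenstrauss' theorem is needed. The paper instead first converts the DOS into the Dixmier trace $\tr_\omega(T(\xi)M_w)$ via Theorem~\ref{T: Main}, checks that $\xi\mapsto\tr_\omega(T(\xi)M_w)$ is integrable and $\alpha$-invariant, and then runs Lindenstrauss' pointwise ergodic theorem on the F\o lner averages to identify the a.s.\ value as $\mathbb{E}(\tr_\omega(TM_w))$ --- so the paper's argument buys an explicit formula for the constant, while yours is shorter and stays at the level of the limit the proposition actually speaks about. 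Two small points you should tighten: (i) the countability of $\Gamma$ (needed for your countable union of null sets) does follow from the existence of a F\o lner sequence of \emph{finite} subsets, since any $k$ with $|F_n\,\Delta\,kF_n|<2|F_n|$ lies in $F_nF_n^{-1}$, but this deserves a sentence; and (ii) ergodicity as stated in the paper requires \emph{exactly} invariant sets, so you should replace the exceptional null set $N$ by the invariant null set $\bigcup_{\gamma\in\Gamma}\alpha_\gamma(N)$ and redefine $\rho$ there before taking preimages $\rho^{-1}(A)$ (with $A$ Borel in $\Cplx$, since $T(\xi)$ need not be self-adjoint). Both are routine and you flag the bookkeeping, so the argument stands.
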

    \begin{proof}
        This is an application of the Lindenstrauss' version of Birkhoff's ergodic theorem. 
        The assumption on $w$ and Theorem~\ref{general_translation_invariance} imply that:
        \begin{equation} \label{alpha invariant}
            \tr_\omega(T(\xi) M_w) = \tr_\omega(T(\alpha_\gamma\xi)M_w),\quad \gamma\in \Gamma.
        \end{equation}  
        Therefore for every $n\geq 0$ we have:
        \begin{equation*}
            \tr_\omega(T(\xi) M_w) = \frac{1}{|F_n|}\sum_{\gamma\in F_n} \tr_\omega(T(\alpha_\gamma\xi)M_w).
        \end{equation*}
        Note that:
        \begin{equation*}
            |\tr_\omega(T(\xi)M_w)| \leq \|T(\xi)\|_\infty\|w\|_{1,\infty}.
        \end{equation*}
        Hence the function $\xi\mapsto \tr_\omega(T(\xi)M_w)$ is integrable, due to our assumption that $T\in L_1(\Omega,\Bc(\ell_2(X)))$, and the measurability
        of $\xi\mapsto \tr_\omega(T(\xi)M_w)$ follows from the strong measurability of $\xi\mapsto T(\xi)$ and the norm continuity of $T\mapsto \tr_\omega(TM_w)$.
        Hence, Lindenstrauss' ergodic theorem \eqref{Lindenstrauss' Thm}
        applies to this function, and hence for almost every $\xi\in \Omega$ we have:
        \begin{equation*}
            \lim_{n\to\infty} \frac{1}{|F_n|} \sum_{\gamma \in F_n} \tr_\omega(T(\alpha_\gamma\xi)M_w) = \mathbb{E}(\tr_\omega(TM_w)).
        \end{equation*}
        The right hand side has no dependence on $\xi\in \Omega$, and hence the limit is almost surely constant in $\xi$.
  Due to \eqref{alpha invariant}, this implies that $\tr_\omega(T(\xi) M_w)$ is almost surely constant in $\xi.$ Alluding to Theorem~\ref{T: Main}, 
  we conclude that the density of states of $T(\xi)$ is almost surely constant in $\xi.$
    \end{proof}
    
    In an alternative direction of inquiry, the condition \eqref{group_compatibility} can be used in some circumstances to imply the existence of the density of states.
    For simplicity, we state the following condition for $X = \Itgr^d$.
    \begin{theorem} \label{T: on existence of DOS}
        Let $T\in L_1(\Omega,\Bc(\ell_2(\Itgr^d)))$ be a linear operator which satisfies \eqref{group_compatibility} with respect to the action
        of $\Itgr^d$ on itself by translations and an ergodic action $\alpha$ of~$\mbZ^d$ on $\Omega.$ Then for almost every $\xi \in \Omega$ there exists the limit:
        \begin{equation*}
            \lim_{R\to\infty} \frac{1}{|B(0,R)|}\sum_{|n|\leq R} \langle \delta_n,T(\xi)\delta_n\rangle = \mathbb{E}(\langle \delta_0,T\delta_0\rangle).
        \end{equation*}
    \end{theorem}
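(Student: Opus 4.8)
The plan is to reduce the spatial average of diagonal matrix elements to a Birkhoff-type ergodic average and then invoke Lindenstrauss' pointwise ergodic theorem \eqref{Lindenstrauss' Thm}. Set $f(\omega) := \langle \delta_0, T(\omega)\delta_0\rangle$. Since $|f(\omega)| \leq \|T(\omega)\|_\infty$ and $T \in L_1(\Omega,\Bc(\ell_2(\Itgr^d)))$, the function $f$ lies in $L_1(\Omega)$, and its measurability follows from the strong measurability of $\omega \mapsto T(\omega)$. The first step is to rewrite each diagonal entry as a value of $f$ along the orbit of $\xi$. Writing $U_n$ for translation by $n$, so that $U_n\delta_0 = \delta_n$, the compatibility relation \eqref{group_compatibility} applied with $\omega = \alpha_{-n}\xi$ gives $U_n T(\alpha_{-n}\xi)U_n^* = T(\xi)$, whence $U_n^* T(\xi) U_n = T(\alpha_{-n}\xi)$. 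Consequently
\begin{equation*}
    \langle \delta_n, T(\xi)\delta_n\rangle = \langle \delta_0, U_n^* T(\xi) U_n \delta_0\rangle = \langle \delta_0, T(\alpha_{-n}\xi)\delta_0\rangle = f(\alpha_{-n}\xi).
\end{equation*}

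Next I would identify the spatial average with an ergodic average. Since $B(0,R)$ is symmetric, the substitution $m=-n$ yields
\begin{equation*}
    \frac{1}{|B(0,R)|}\sum_{|n|\leq R}\langle \delta_n, T(\xi)\delta_n\rangle = \frac{1}{|B(0,R)|}\sum_{m \in B(0,R)} f(\alpha_m \xi).
\end{equation*}
It remains to show that the right-hand side converges to $\mathbb{E}(f)$ almost surely as $R\to\infty$. The natural averaging sets are the balls $F_k := B(0, r_k)$, where $\{r_k\}$ enumerates the distinct values of $|n|$. For each fixed $\gamma \in \Itgr^d$ the symmetric difference $F_k \, \Delta \,(\gamma + F_k)$ is contained in a boundary shell of $O(r_k^{d-1})$ points while $|F_k| \asymp r_k^d$, so $\{F_k\}$ is a F\o lner sequence. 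Moreover, using additive notation and the symmetry $-F_j = F_j$, we have $F_j^{-1}F_{k+1} = -F_j + F_{k+1} = F_j+F_{k+1} \subseteq B(0, 2r_{k+1})$ for $j \leq k$, and $|B(0,2r_{k+1})|/|F_{k+1}|$ stays bounded (it tends to $2^d$), so $\{F_k\}$ is in fact a \emph{tempered} F\o lner sequence. Finally, both the numerator and the denominator are step functions of $R$ that are constant on each interval $[r_k, r_{k+1})$, so the limit over real $R\to\infty$ exists exactly when the limit along $R = r_k$ does, and the two agree.

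With these preliminaries in place, Lindenstrauss' ergodic theorem \eqref{Lindenstrauss' Thm}, applied to $f \in L_1(\Omega)$ and the tempered F\o lner sequence $\{F_k\}$, gives
\begin{equation*}
    \lim_{k\to\infty}\frac{1}{|F_k|}\sum_{m \in F_k} f(\alpha_m\xi) = \mathbb{E}(f) = \mathbb{E}(\langle \delta_0, T\delta_0\rangle)
\end{equation*}
for almost every $\xi \in \Omega$, which by the step-function reduction is the asserted limit. The only genuinely delicate point is the verification that the Euclidean balls constitute a \emph{tempered} F\o lner sequence, so that the pointwise theorem applies along the full radius parameter rather than merely along a subsequence; the bookkeeping in the compatibility relation (in particular the appearance of $\alpha_{-n}$, reconciled with $\alpha_m$ by the symmetry of the ball) is routine but must be tracked carefully.
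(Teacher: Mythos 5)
Your proposal is correct and follows essentially the same route as the paper's proof: rewrite $\langle\delta_n,T(\xi)\delta_n\rangle$ as $f(\alpha_{-n}\xi)$ via the compatibility relation, use the symmetry of the ball to turn the spatial average into an ergodic average, and apply Lindenstrauss' theorem \eqref{Lindenstrauss' Thm} with the balls as a tempered F\o lner sequence. The only difference is that you supply the verification of temperedness and the reduction from real $R$ to the discrete radii explicitly, whereas the paper simply asserts these points.
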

    \begin{proof}
        We have that $U_n\delta_0 = \delta_n$, and therefore:
        \begin{equation*}
            \langle \delta_n,T(\xi)\delta_n\rangle = \langle \delta_0,U_n^*T(\xi)U_n\delta_0\rangle = \langle \delta_0,T(\alpha_{-n}\xi)\delta_0\rangle.
        \end{equation*}
        It follows that:
        \begin{equation*}
            \frac{1}{|B(0,R)|}\sum_{|n|\leq R} \langle \delta_n,T(\xi)\delta_n\rangle = \frac{1}{|B(0,R)|}\sum_{|n|\leq R}\langle \delta_0,T(\alpha_n\xi)\delta_0\rangle.
        \end{equation*}
        By our assumption on $T$, the function $\xi \mapsto \langle \delta_0,T(\xi)\delta_0\rangle$ belongs to $L_1(\Omega)$.
        Note that the sequence $F_N := B(0,N)$ is a tempered F\o lner sequence in $\Itgr^d$, and hence Lindenstrauss' ergodic theorem \eqref{Lindenstrauss' Thm}  implies that
        for almost every $\xi\in \Omega$ there exists the limit
        \begin{equation*}
            \lim_{N\to\infty} \frac{1}{|B(0,N)|} \sum_{n\in B(0,N)} \langle \delta_0,T(\alpha_n \xi)\delta_0\rangle = \mathbb{E}(\langle \delta_0,T(\xi)\delta_0\rangle).\qedhere
        \end{equation*}
    \end{proof}
    Note that the result also holds if the limit over balls $\{B(0,N)\}_{N\geq 0}$ is replaced with any other tempered F\o lner sequence, such as cubes $\{[-N,N]^d\}_{N\geq 0}$.
    The limit in every case is $\mathbb{E}(\langle \delta_0,T(\xi)\delta_0\rangle)$, and hence does not depend on the choice of sequence of sets.

    \begin{theorem}
      Let $H(\xi) = H_0 + V_\xi(x)$ be a random operator on $\ell_2(\mbZ^d),$
       where $H_0$ is a $\mbZ^d$-translation invariant difference operator and $V_\xi,$ $\xi \in \Omega,$ 
      an iid random bounded function. Then there exists a set $\Omega_0 \subset \Omega$ of probability $1,$
      such that for any $f \in C_c(\mbR)$ and for any $\xi \in \Omega_0$ 
 there exists the limit:
        \begin{equation} \label{F: what we need}
            \lim_{R\to\infty} \frac{1}{|B(0,R)|}\sum_{|n|\leq R} \langle \delta_n, f(H_\xi)\delta_n\rangle = \mathbb{E}(\langle \delta_0,f(H_\xi)\delta_0\rangle).
        \end{equation}
    \end{theorem}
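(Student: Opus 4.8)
The plan is to recognise \eqref{F: what we need} as a concrete instance of the abstract ergodic statement already established in Theorem~\ref{T: on existence of DOS}, and then to remove the dependence of the exceptional null set on $f$ by a separability argument. First I would fix the probability space adapted to an iid potential: take $\Omega = S^{\Itgr^d}$, where $S\subset\mbR$ is the (bounded) common range of the random variables $V_\xi(n)$, equip it with the product measure $\mathbb{P}$, and let $\alpha$ be the shift action of $\Itgr^d$ on $\Omega$, $(\alpha_m\xi)_n = \xi_{n-m}$. Being a product of identical factors, $\mathbb{P}$ is shift-invariant and the action is ergodic (indeed mixing, a standard property of iid product measures). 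Under this identification the multiplication operator $M_{V_\xi}$ obeys $U_m M_{V_\xi} U_m^* = M_{V_{\alpha_m\xi}}$, while translation invariance of $H_0$ gives $U_m H_0 U_m^* = H_0$; hence $U_m H(\xi) U_m^* = H(\alpha_m\xi)$, that is, $H(\cdot)$ is covariant in the sense of \eqref{group_compatibility}.

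Next, for a fixed $f \in C_c(\mbR)$ I would propagate covariance through the functional calculus: since $U_m$ is unitary, $U_m f(H(\xi)) U_m^* = f(U_m H(\xi) U_m^*) = f(H(\alpha_m\xi))$, so $T(\xi) := f(H(\xi))$ again satisfies \eqref{group_compatibility}. Moreover $\norm{f(H(\xi))}_\infty \le \norm{f}_\infty$ uniformly in $\xi$, and $\xi \mapsto f(H(\xi))$ is strongly measurable (the potential depends measurably on $\xi$, and the functional calculus is norm-continuous on the uniformly bounded family $\{H(\xi)\}$), so $T \in L_1(\Omega,\Bc(\ell_2(\Itgr^d)))$. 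Theorem~\ref{T: on existence of DOS} then produces a set $\Omega_f$ of full probability on which
\[ \lim_{R\to\infty} \frac{1}{|B(0,R)|}\sum_{|n|\le R} \Scal{\delta_n, f(H_\xi)\delta_n} = \mathbb{E}\brs{\Scal{\delta_0, f(H_\xi)\delta_0}}. \]

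The final and only genuinely delicate step is to pass from a null set depending on $f$ to a single null set valid for every $f\in C_c(\mbR)$ simultaneously. Since $C_c(\mbR)$ is separable in the uniform norm, I would fix a countable dense subset $\{f_j\}_{j\in\mbN}$ and set $\Omega_0 := \bigcap_j \Omega_{f_j}$, which still has probability $1$. For an arbitrary $f \in C_c(\mbR)$ and $\xi\in\Omega_0$, the functional-calculus bound $\norm{f(H_\xi) - f_j(H_\xi)}_\infty \le \norm{f - f_j}_\infty$ shows that the diagonal averages $\frac{1}{|B(0,R)|}\sum_{|n|\le R}\Scal{\delta_n,f(H_\xi)\delta_n}$ differ from their $f_j$ counterparts by at most $\norm{f-f_j}_\infty$, uniformly in $R$; choosing $f_j\to f$ uniformly then exhibits these averages as a uniform limit of convergent sequences, hence convergent, and the same bound together with dominated convergence makes $f\mapsto \mathbb{E}(\Scal{\delta_0,f(H_\xi)\delta_0})$ continuous, identifying the limit with the right-hand side of \eqref{F: what we need}.

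I expect this quantifier interchange to be the main obstacle: the ergodic-theoretic content is entirely supplied by Theorem~\ref{T: on existence of DOS}, and everything else reduces to the uniform operator-norm control $\norm{f(H_\xi)-g(H_\xi)}_\infty\le\norm{f-g}_\infty$ that makes the approximation over the separable space $C_c(\mbR)$ go through on a single full-measure set.
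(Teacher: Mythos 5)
Your proposal is correct and follows essentially the same route as the paper: apply Theorem~\ref{T: on existence of DOS} to $f(H(\xi))$ for each $f$ in a countable uniformly dense subset of $C_c(\mbR)$, intersect the resulting full-measure sets, and extend to arbitrary $f$ via the bound $\norm{f(H)-g(H)}\le\norm{f-g}_\infty$ and an $\varepsilon/3$ argument. The only difference is that you spell out the product-space/covariance setup that the paper leaves implicit, which is a harmless (and arguably welcome) addition.
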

\begin{proof} Proof follows a standard argument, see e.g.~\cite[Chapter 3]{AizenmanWarzel2015}.
Let $\Sigma$ be a countable dense subset  of $C_c(\mbR).$ 
The random operator $H(\xi)$ is ergodic and obeys \eqref{group_compatibility} so Theorem~\ref{T: on existence of DOS} is applicable. 
By this theorem, for every $f \in \Sigma$ 
there exists a full set $\Omega_f \subset \Omega$ such that \eqref{F: what we need} holds for all $\xi \in \Omega_f.$ Define a full set $\Omega_0 = \bigcap_{f \in \Sigma} \Omega_f,$
so for every $f \in \Sigma$ and every $\xi \in \Omega_0$ the equality \eqref{F: what we need} holds. Choose any $g \in C_c(\mbR)$ and let $f_1,f_2, \ldots \in \Sigma$ be such that 
$f_n \to g$ in uniform topology. Let $\varepsilon>0.$ Further we proceed by a standard $\varepsilon/3$-trick. Let $N \in \mbN$ be such that for all $n\geq N$ 
$\norm{f_n-g}_\infty < \varepsilon/3.$ For $f_N$ the equality \eqref{F: what we need} holds for any $\xi \in \Omega _0.$ Let $R_0 > 0$ be such that for all $R > R_0$ 
and all $\xi \in \Omega _0$
$$
    \abs{ \frac{1}{|B(0,R)|}\sum_{|n|\leq R} \langle \delta_n, f_N(H_\xi)\delta_n\rangle - \mathbb{E}(\langle \delta_0,f_N(H_\xi)\delta_0\rangle) }  < \varepsilon/3.
$$
Then for all $R> R_0$ and $\xi \in \Omega_0$ we have 
\begin{align*}
    \Big| \frac{1}{|B(0,R)|}  &  \sum_{|n|\leq R} \langle \delta_n, g(H_\xi)\delta_n\rangle - \mathbb{E}(\langle \delta_0, g(H_\xi)\delta_0\rangle) \Big|  \\
      & \leq  \abs{ \frac{1}{|B(0,R)|}\sum_{|n|\leq R} \langle \delta_n, [g(H_\xi) - f_N(\xi)]   \delta_n\rangle  } \\ 
     & \qquad  +  \Big| \frac{1}{|B(0,R)|}  \sum_{|n|\leq R} \langle \delta_n, f_N(H_\xi)\delta_n\rangle - \mathbb{E}(\langle \delta_0, f_N(H_\xi)\delta_0\rangle) \Big|  \\
     & \qquad  + \Big|   \mathbb{E}(\langle \delta_0, [ f_N(H_\xi) - g(H_\xi) ]  \delta_0\rangle)   \Big|  \\
     & < \varepsilon,
\end{align*}
where the last inequality follows from the triangle, Schwartz and $\norm{f(H) - g(H)} \leq \norm{f-g}_\infty < \varepsilon/3$ inequalities. 
\end{proof}

\end{document}